  \renewcommand*{\chapnamefont}{\normalfont\Large\sffamily}
  \renewcommand*{\printchaptername}{%
    \chapnamefont\centering\@chapapp}
\def\@advisors{}
\newcommand{\advisors}[1]{\def\@advisors{#1}}
\def\@department{}
\newcommand{\department}[1]{\def\@department{#1}}
\def\@thesistype{}
\newcommand{\thesistype}[1]{\def\@thesistype{#1}}
\renewcommand{\maketitlehookb}{\vspace{1in}%
  \par\begin{center}\Large\sffamily\@thesistype\end{center}}
\renewcommand{\maketitlehookd}{%
  \vfill\par
  \begin{flushright}
    \sffamily
    \@advisors\par
    \@department, ETH Z\"urich
  \end{flushright}
}
\theoremstyle{plain}
\numberwithin{equation}{chapter}
\newtheorem{theorem}{Theorem}[chapter]
\newtheorem{example}[theorem]{Example}
\newtheorem{remark}[theorem]{Remark}
\newtheorem{definition}[theorem]{Definition}
\newtheorem{proposition}[theorem]{Proposition}
\newtheorem{algorithm}[theorem]{Algorithm}
\theoremstyle{nonumberplain}
\newtheorem{proof}{Proof}
\renewcommand{\epsilon}{\ensuremath\varepsilon}
\renewcommand{\phi}{\ensuremath{\varphi}}
\title{\huge Copula-based hierarchical risk aggregation \\ \vspace{0.5cm} \Large \textnormal{Tree dependent sampling and the space of mild tree dependence}}
\author{Fabio Derendinger}
\date{Monday 8\textsuperscript{th} March, 2015}
\begin{document}

\frontmatter

\begin{titlingpage}
  \calccentering{\unitlength}
  \begin{adjustwidth*}{\unitlength-24pt}{-\unitlength-24pt}
    \maketitle
  \end{adjustwidth*}
\end{titlingpage}

\begin{abstract}
The ability to adequately model risks is crucial for insurance companies. The method of "Copula-based hierarchical risk aggregation" (Arbenz et al. \cite{arbenz12}) offers a flexible way in doing so and has attracted much attention recently. We briefly introduce the aggregation tree model as well as the sampling algorithm proposed by they authors.\\ \\
An important characteristic of the model is that the joint distribution of all risk is not fully specified unless an additional assumption (known as "conditional independence assumption") is added. We show that there is numerical evidence that the sampling algorithm yields an approximation of the distribution uniquely specified by the conditional independence assumption. We propose a modified algorithm and provide a proof that under certain conditions the said distribution is indeed approximated by our algorithm. \\
We further determine the space of feasible distributions for a given aggregation tree model in case we drop the conditional independence assumption. We study the impact of the input parameters and the tree structure, which allows conclusions of the way the aggregation tree should be designed.
\end{abstract}
\newpage
\renewcommand{\abstractname}{Acknowledgements}
\begin{abstract}
I would like to thank my supervisor, \textit{Prof. Dr. Hans-Jürgen Wolter}, as well as my co-supervisor, \textit{Prof. Dr. Paul Embrechts}, for introducing me to the subject and for the encouragement. Also, I wish to thanks \textit{Hansjörg Furrer} and \textit{Christoph Möhr} (FINMA) for their helpful comments in the early stage of this thesis. \\
Finally, I would particularly like to thank my co-supervisor, \textit{Dr. Philipp Arbenz (SCOR)}, who has supported me throughout my thesis with his patience and knowledge. Without his expertise and his continuous encouragement this thesis would not have been possible.
\end{abstract}

\cleartorecto
\tableofcontents
\mainmatter

\newcommand{\package}{\emph}

\chapter{Introduction}
\label{cha:1}

Why modelling risk? From the perspective of an insurer or reinsurer the answer to this question is straightforward: Taking risks is their core business, and hence both their profitability and solvency critically depends on the ability to adequately model these risks. Besides that, a proper risk management is also required by regulatory frameworks such as Solvency II and Basel III.
\\ \\
Mathematically, risks can be interpreted as a multivariate random variable $\boldsymbol{X}=(X_1,\ldots,X_n)$, where the univariate random variables $X_1, \ldots, X_n$ represent the individual risks (or marginal risks). Given that the individual risks often share common environmental and socioeconomic conditions, they are generally dependent. Therefore, most of the time knowledge of the joint distribution is required to be able to properly measure and allocate risk.
\\ \\
Assume, for instance, an insurance company is interested in the total amount of claim payments in a given period in the future, i.e. the quantity $X_{\varnothing}:=X_1+\ldots+X_n$. In order to compute this quantity, it does not suffice to merely know the distributions of the individual risks. Presumably the most obvious way to determine the distribution of a sum of dependent risk is to first determine the joint distribution function $F(x_1,\ldots,x_n)=P[X_1\leq x_1,\ldots, X_n \leq x_n]$ of the individual risks. Modelling this distribution accurately is a very challenging task. Although the individual risks constituting the portfolio might easily be described with an appropriate stochastic model derived from data and/or expert opinion, it is often the case that very few joint observations are available, in which case the joint distribution of the individual risks is basically unknown \cite{bruneton11}.
\\ \\
Recently, copulas have become the privileged tool to overcome this difficulty. Readers who are not familiar with the theory of copulas will find a brief introduction to it in Section \ref{sec:copulas}. For now, it is enough to think of a copula as a multivariate random variable that describes the dependence structure between the individual risks.\\
A well-known result in copula theory (see Theorem \ref{the:sklar}) states that the distribution function $F$ can then be written as 
\begin{align*}
F(x_1, \ldots , x_d ) = C(F_1(x_1), \ldots , F_d(x_d )),
\end{align*}
where $F_i(x)=P[X_i\leq x]$, $i=1,\ldots,n$, are the marginal distribution functions and $C:[0,1]^n\rightarrow [0,1]$ is a copula function. In this way, we have separated the dependence structure from the margins, and the above problem is reduced to find accurate models for the margins $F_1, \ldots, F_n$ and the dependence structure described through the copula $C$. When it comes to choosing the right copula, we can rely on the broad set of different copula models that have been developed and studied in the past years. In particular, there exist asymmetric copulas and copulas with tail-dependence which try to reflect the effects that can be observed in practice. \\ \\
Nevertheless, the common parametric copula models are often problematic when used in high dimensions because the attainable dependence structures are limited. For instance, they often are too symmetric.\\
A very elegant method to overcome the limitations arising in high dimensions is commonly referred to as "copula-based hierarchical risk aggregation". The method has been used in the industry for more than a decade. Consider the following simple example where we present its general idea: 

\begin{example}
\label{ex:hierarchical}
Assume we are given four different risks represented by the 4-dim. random variable $\boldsymbol{X}=(X_{1,1},X_{1,2},X_{2,1},X_{2,2})$. Here $X_{1,1}$ stands for "Car insurance Switzerland", and $X_{1,2}$ stands for "Car insurance Italy". Also, $X_{2,1}$ stands for "Earthquake Switzerland", and $X_{2,2}$ stands for "Earthquake Italy". In case we are interested in the aggregated risk $X_{\varnothing}:= X_{1,1}+X_{1,2}+X_{2,1}+X_{2,2}$, we could try to find a model for the joint distribution function $F$ of $\boldsymbol{X}$ by first modelling the marginal distributions $F_{1,1}$, $F_{1,2}$, $F_{2,1}$ and $F_{2,2}$ of the individual risk and impose a 4-dim. copula $C$ between the individual risk. The joint distribution would then be given by $F(x_1,x_2,x_3,x_4)=C(F_{1,1}(x_1),F_{1,2}(x_2),F_{2,1}(x_3),F_{2,2}(x_4))$, and the distribution of $X_{\varnothing}$ can directly be computed from $F$.\\
Alternatively, we could in a first step model the distributions of the partial risks, $(X_{1,1},X_{1,2})$ and $(X_{2,1},X_{2,2})$, by combining the risks $X_{1,1}$ and $X_{1,2}$ through a bivariate copula $C_{1}$, whereas $X_{2,1}$ and $X_{2,2}$ are combined through a bivariate copula $C_{2}$. Knowing the distributions of the partial risks, we can then easily compute the distribution of the partial sums $X_{1}:=X_{1,1}+X_{1,2}$ and $X_{2}:=X_{2,1}+X_{2,2}$. The partial sums, $X_{1}$ and $X_{2}$, can then again be combined through an adequate bivariate copula model $C_{\varnothing}$. Finally, this then allows us to compute the distribution $F_{\varnothing}$ of the total aggregate $X_{\varnothing}:=X_1+X_2=X_{1,1}+X_{1,2}+X_{2,1}+X_{2,2}$.
\begin{figure}[h]
\centering
\includegraphics[width=1\linewidth]{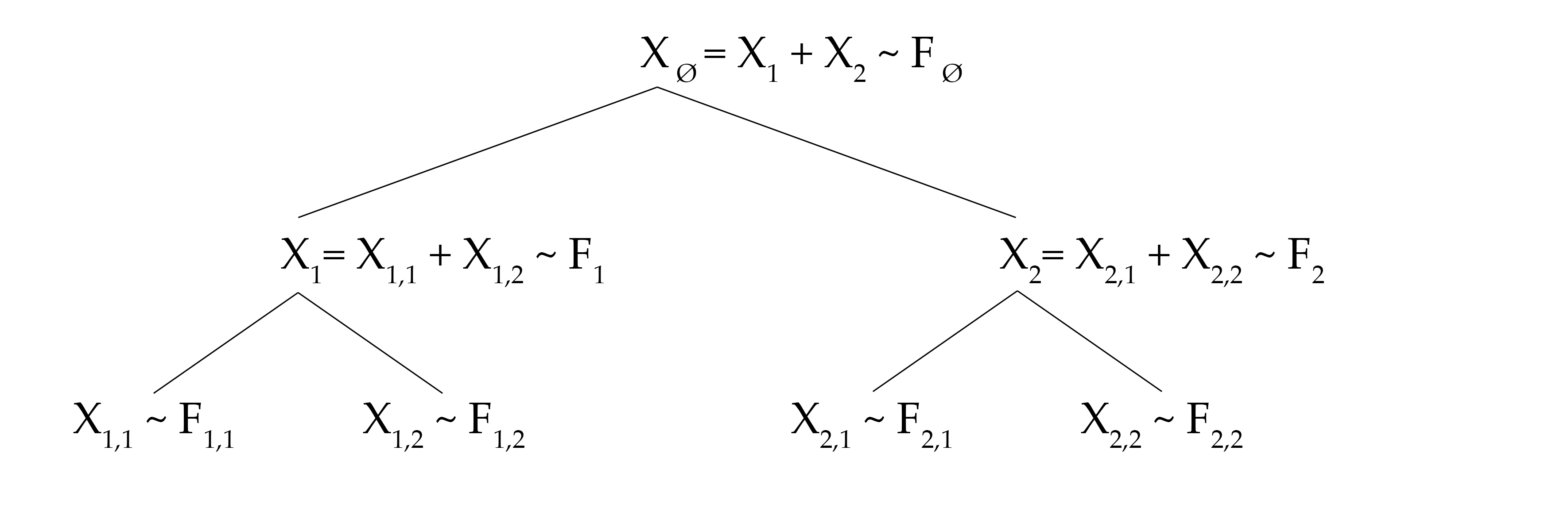}
\caption{An illustration of the 4-dim. aggregation tree model described in our lead Example \ref{ex:hierarchical}.}
\label{fig:4dimtree1}
\end{figure}

This procedure is known as "copula-based hierarchical risk aggregation" and is best illustrated by a so called "aggregation tree". The aggregation tree graphically represents in which way the individual risks are tied together and which dependence structure is assumed to hold between them. Figure \ref{fig:4dimtree1} shows the aggregations tree corresponding to the situation described above.
\end{example}
The main advantage of hierarchical risk aggregation is that we do not need to specify the copula of all risks. We once again emphasize that it is extremely unlikely to find a copula model which adequately describes the dependence structure between a large number of risks. Joint observations between all risks are too rare, and the attainable dependence structures of common parametric copula models are too limited. \\
Instead, we aggregate the risks hierarchically, which requires only to specify the joint dependence between the aggregated sub-portfolios in the different aggregation steps. For these sub-portfolios we can use low dimensional copulas, which are well-known to be more realistic measures of dependence and offer greater flexibility.  In particular, this model allows to use different dependence characteristics (tail dependence, radial asymmetry, etc.) for each aggregation step. 
\\ \\
The model can be seen as a dimension reduction tool: complexity and the number of parameters can be adjusted to the necessary intricacy and available information. As a logical consequence, the method \textit{does not} specify the full joint distribution of the individual risks. In other words, there is in general more than one distribution which satisfies a given aggregation tree model. We will call those distributions "mildly tree dependent".
\\ \\
In recent years, a number of papers have been published on this subject. Arbenz et al. \cite{arbenz12} were the first to provide a sound mathematical foundation for the copula-based hierarchical risk aggregation approach. They described the structure of the model in graph-theoretical terms and identified a condition under which it leads to a unique multivariate distribution. This condition is known as "conditional independence assumption" and the unique multivariate distribution specified by it is called "tree dependent distribution". They also provided a sampling algorithm together with a proof that the samples generated by the algorithm approximate the aggregated sub-portfolios in the model (e.g., $X_{1}$, $X_{2}$ and $X_{\varnothing}$ in our lead Example \ref{ex:hierarchical}). \\
The papers of Côté \& Genest \cite{genest15} and Bruneton \cite{bruneton11} focus on the structure of the aggregation tree. In fact a lot of freedom is simply hidden in the order in which the risks are aggregated together and in the general shape of the tree. Bruneton \cite{bruneton11} argues that the diversification effect is larger for thin trees than for fat trees. Côté \& Genest \cite{genest15} suggest a procedure for selecting the tree structure, based on hierarchical clustering techniques.  
\\ \\
The method of hierarchical risk aggregation is particularly suitable if the ultimate goal is to find an approximation for the total aggregate of all individual risks. In this case, the fact that the proposed strategy does not lead to a unique joint distribution is not critical. \\
In practice, however, we may as well be confronted with situations where the joint distribution of the individual risks is required. A popular example of such would, for instance, be risk allocation or - something more related to Example \ref{ex:hierarchical} - the situation where we want to estimate the total risk we are taking in Switzerland, i.e. the distribution of $X_{1,2}+X_{2,1}$. Unfortunately, the aggregation tree model in Figure \ref{fig:4dimtree1} does not uniquely specify this distribution.
\\ \\
The majority of the results and insights provided by the above mentioned papers cover the situation where the total aggregate $X_{\varnothing}$ is of particular interest. Concerning the joint distribution, there are a couple of interesting questions that have not yet been sufficiently addressed. This thesis aims to contribute to a better mathematical understanding of the joint distributions associated with an aggregation tree model.
\\ \\
The subject of the first question is the sampling algorithm proposed by Arbenz et al. \cite{arbenz12}. As mentioned before, the samples can be used to approximate the distribution of the total aggregate $X_{\varnothing}$ of the individual risks. However, it is uncertain so far if the samples are also an approximation of a mildly tree dependent distribution. We will conduct a numerical experiment that suggests that the samples approximate the unique tree dependent distribution specified by the conditional independence assumption. This will encourage us to develop a modified sampling algorithm (MRA). Under the additional constraint of discrete marginals the MRA yields an approximation of the unique tree dependent distribution.
\\ \\
It is, however, questionable if the conditional independence assumption is indeed a reasonable assumption in practice. If it is not, then determining the joint distribution approximated by a sampling algorithm is pointless. Even if we can determine it, it will hardly be an adequate representation of the true joint distribution of the individual risks. \\
Instead, it would be more advisable to gain awareness of the space of mildly tree dependent distributions and how it is affected by the different parameters of the aggregation tree model and the tree structure. An interesting question, for instance, would be whether it is possible to narrow down the number of mildly tree dependent distributions by altering the order in which the risks are aggregated. We will address such questions in the second part of the thesis, where we study and determine the space of mildly tree dependent distributions for simple aggregation tree models. We believe that any insurer or reinsurer using such tools should be aware of the insights gained from this and that this awareness should strongly call for designing trees that adequately fit the business.

\subparagraph{Organisation of the thesis.} Chapter \ref{cha:2} introduces the aggregation tree model with all the basic definitions and results. The sampling algorithm proposed by Arbenz et al. \cite{arbenz12} and its convergence properties are described in Chapter \ref{cha:3}. In Chapter \ref{cha:4} we propose the modified sampling algorithm (MRA) and discuss its convergence. The space of mildly tree dependent distributions is studied in Chapter \ref{chap:spaceofmildly}. Chapter \ref{cha:6} concludes the thesis.
\chapter{Hierarchical risk aggregation}
\label{cha:2}

This chapter introduces the basic definitions and concepts that play an important role throughout this thesis. We give a short introduction to copulas in the first Section \ref{sec:copulas}. In Section \ref{sec:aggtreemodel} we describe step-by-step the hierarchical risk aggregation method. \\
We deliberately choose to follow very closely the structure and terminology proposed by Arbenz et al. \cite{arbenz12}. Readers who are fully familiar with their paper may as well skip this chapter.
\section{Copulas}
\label{sec:copulas}
Copulas have become increasingly popular in the last decades both among academics and practitioners. They turned out to be a powerful and flexible tool to model dependencies between random variables. This is particularly useful in the insurance sector, where the dependence structure between the different risks is of greatest interest. \nocite{cherubini04}
\\ \\
The definition of a copula is surprisingly simple \cite{embrechts05}:
\begin{definition}[Copula]
A $d$-dimensional copula is a distribution function on $[0, 1]^d$
with standard uniform marginal distributions.
\end{definition}
We reserve the notation $C(\boldsymbol{u}) = C(u_1, . . . , u_1 )$ for the multivariate distribution functions that are copulas. Hence, $C$ is a mapping of the form $C : [0, 1]^d \rightarrow [0, 1]$, i.e. a mapping of the unit hypercube into the unit interval.
\\ \\
The importance of copulas is justified by Sklar's Theorem:
\begin{theorem}[Sklar 1959]
\label{the:sklar}
Let $F$ be a joint distribution function with margins
$F_1, \ldots , F_d$. Then there exists a copula $C : [0, 1]^d \rightarrow [0, 1]$ such that, for all
$x_1, \ldots , x_d$ in $\mathbb{R}= [-\infty,\infty]$,
\begin{align}
\label{eq:sklar}
F(x_1, \ldots , x_d ) = C(F_1(x_1), \ldots , F_d(x_d )).
\end{align}
If the margins are continuous, then C is unique. \\
Conversely, if $C$ is a copula and $F_1, \ldots , F_d$ are univariate distribution functions,
then the function $F$ defined in (\ref{eq:sklar})  is a joint distribution function with margins
$F_1, \ldots , F_d$.
\end{theorem}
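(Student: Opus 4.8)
The plan is to prove the converse (``if'') statement first, since it is constructive and simultaneously supplies the tools needed for the direct part. Given a copula $C$ and univariate distribution functions $F_1,\dots,F_d$, take a random vector $\boldsymbol{U}=(U_1,\dots,U_d)$ with distribution function $C$ (so each $U_i$ is uniform on $[0,1]$), and set $X_i := F_i^{-1}(U_i)$, where $F_i^{-1}(u) := \inf\{x : F_i(x) \geq u\}$ is the generalized inverse. The key elementary fact is the ``Galois'' equivalence $F_i^{-1}(u) \leq x \iff u \leq F_i(x)$, valid for every distribution function. From it one reads off at once that $P[X_i \leq x_i \text{ for all } i] = P[U_i \leq F_i(x_i) \text{ for all } i] = C(F_1(x_1),\dots,F_d(x_d))$ and that $P[X_i \leq x_i] = F_i(x_i)$; hence the function $F$ defined by $(\ref{eq:sklar})$ is a genuine joint distribution function with the prescribed margins.

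For the direct statement I would first dispose of the case of continuous margins. Define $C(u_1,\dots,u_d) := F(F_1^{-1}(u_1),\dots,F_d^{-1}(u_d))$ on $[0,1]^d$ (with the usual conventions $F_i^{-1}(0)=-\infty$, $F_i^{-1}(1)=+\infty$). One checks that $C$ is $d$-increasing, grounded, and has uniform one-dimensional margins, hence is a copula, and that $(\ref{eq:sklar})$ holds: continuity of $F_i$ gives $F_i(F_i^{-1}(t)) = t$ for $t$ in the range, while the remaining subtlety --- that replacing $x_i$ by $F_i^{-1}(F_i(x_i))$ leaves the value of $F$ unchanged --- follows because the interval between $F_i^{-1}(F_i(x_i))$ and $x_i$ carries no mass in the $i$-th coordinate. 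Uniqueness in the continuous case is then immediate: continuity of each $F_i$ forces $\mathrm{Ran}(F_i)$ to be dense in $[0,1]$, so $(\ref{eq:sklar})$ determines $C$ on a dense subset of the cube, and right-continuity of $C$ extends this to all of $[0,1]^d$.

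The genuinely delicate point is the direct statement for arbitrary --- possibly discontinuous --- margins, where $F_i^{-1}\circ F_i$ is no longer the identity and $\mathrm{Ran}(F_i)$ may be a proper, non-dense subset of $[0,1]$. I would handle it via the distributional transform: let $V_1,\dots,V_d$ be i.i.d.\ uniform on $[0,1]$ and independent of $\boldsymbol{X}$, and put $U_i := F_i(X_i-) + V_i\big(F_i(X_i) - F_i(X_i-)\big)$. A short computation shows each $U_i$ is uniform on $[0,1]$ with $F_i^{-1}(U_i) = X_i$ almost surely; let $C$ be the joint distribution function of $(U_1,\dots,U_d)$, which is then a copula. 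It remains to verify $(\ref{eq:sklar})$, i.e.\ $P[U_i \leq F_i(x_i)\ \forall i] = P[X_i \leq x_i\ \forall i]$, which comes down to the coordinatewise inclusions $\{X_i \leq x_i\}\subseteq\{U_i\leq F_i(x_i)\}$ and $\{X_i > x_i\}\subseteq\{U_i\geq F_i(x_i)\}$ together with $P[U_i = F_i(x_i),\,X_i > x_i]=0$, all obtained by splitting according to whether $X_i$ lies below, at, or above $x_i$. An alternative --- closer to Sklar's original argument --- is to note that $(\ref{eq:sklar})$ already defines a sub-copula on $\overline{\mathrm{Ran}(F_1)}\times\cdots\times\overline{\mathrm{Ran}(F_d)}$ and then to extend it to all of $[0,1]^d$ by multilinear interpolation; there the main work is showing the interpolant is still $d$-increasing, which is the combinatorial heart of the matter.
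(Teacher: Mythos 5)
The paper does not actually prove this theorem: it is stated with the attribution ``Sklar 1959'' and used as a black box (no proof environment follows it), so there is no in-paper argument to compare against. Judged on its own, your outline is correct and follows the standard modern route. The converse via the quantile transform $X_i=F_i^{-1}(U_i)$ together with the Galois equivalence $F_i^{-1}(u)\leq x \iff u\leq F_i(x)$ is exactly right and is the shortest complete proof of that half. For the direct part, your treatment of continuous margins via $C=F\circ(F_1^{-1},\dots,F_d^{-1})$ and uniqueness from density of the ranges is standard; the one step worth writing out rather than asserting is that this $C$ coincides with the joint distribution function of $(F_1(X_1),\dots,F_d(X_d))$, which makes the copula axioms (groundedness, uniform margins, $d$-increasingness) automatic instead of something to verify by hand. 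For arbitrary margins, the distributional transform $U_i=F_i(X_i-)+V_i\bigl(F_i(X_i)-F_i(X_i-)\bigr)$ is R\"uschendorf's argument and your checklist of the three coordinatewise verifications is the right one (the last item $P[U_i=F_i(x_i),\,X_i>x_i]=0$ follows already from $U_i$ being uniform, hence atomless); the alternative you mention, extending the subcopula on $\overline{\mathrm{Ran}(F_1)}\times\cdots\times\overline{\mathrm{Ran}(F_d)}$ by multilinear interpolation, is Sklar's original route and, as you say, its only real content is the preservation of $d$-increasingness. I see no gap of substance in the proposal.
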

Sklar's Theorem can be interpreted in the following way:
\begin{enumerate}
\item The first part of the theorem states that we can decompose any distribution function into its margins and a copula. This allows us to study multivariate distributions independently of the margins.
\item The second part of the theorem states that copulas together with margi-nal distribution functions can be used to construct new multivariate distributions.
\end{enumerate}

\section{The aggregation tree model}
\label{sec:aggtreemodel}
In the Introduction we presented the idea of hierarchical risk aggregation, and we learned that this method is graphically best represented by a so called \textit{aggregation tree}. It is our aim to introduce a couple of definitions and notational conventions in order to describe the hierarchical risk aggregation model in an sound mathematical way. We will closely follow the terminology suggested by Arbenz et al. \cite{arbenz12}.

\subsubsection{Rooted tree}
It seems reasonable to adopt part of the terminology from graph theory to describe the tree structure. We will, therefore, first introduce the notion of a \textit{rooted tree}. A rooted tree consists of branching nodes and leaf nodes, with one distinct branching node being the root. The root node is
denoted by $\varnothing$ and the other nodes by tuples $(i_1,\ldots , i_d)$ of natural
numbers $i_j \in \mathbb{N}$. If a node $(i_1,\ldots , i_d) \in \mathbb{N}^d$ is not a leaf node,
it branches into a number of $N_{(i_1,\ldots , i_d)} \in \mathbb{N}$ children, which are
denoted by the $(d + 1)$-tuples $(i_1,\ldots , i_d,1)$, $(i_1,\ldots , i_d, 2), \ldots,
(i_1, \ldots , i_d, N_{(i_1,\ldots , i_d)}) \in \mathbb{N}^{d+1}$.
\begin{definition}
A finite set $\tau \subset \left\lbrace \varnothing\right\rbrace \cup \bigcup_{n=1}^{\infty} \mathbb{N}^n$ denotes a rooted tree if
\begin{enumerate}
\item the root $\varnothing$ is contained in $\tau$,
\item for each node $I=(i_1,\ldots , i_d) \in \tau$, the number of children is given by $N_I \in \mathbb{N}_0$, i.e. the node I has a child $(i_1,\ldots , i_d,k) \in \tau$ if and only if $k \in \left\lbrace n \in \mathbb{N}:1\leq n \leq N_I\right\rbrace$,
\item each node $(i_1,\ldots , i_{d-1},i_d) \in \tau$ has a parent node represented by $(i_1,\ldots , i_{d-1}) \in \tau$.
\end{enumerate}
\end{definition}
\begin{example}
Let $\tau = \left\lbrace \varnothing, ~(1), ~(1,1), ~(1,2), ~(2,1), ~(2,2), ~(2) \right\rbrace$, as illustrated in Figure \ref{fig:4dimrootedtree}. Note that $\tau$ defines a tree of the structure as used in Example \ref{ex:hierarchical}. 
\end{example}
\begin{figure}[h]
\centering
\includegraphics[width=1\linewidth]{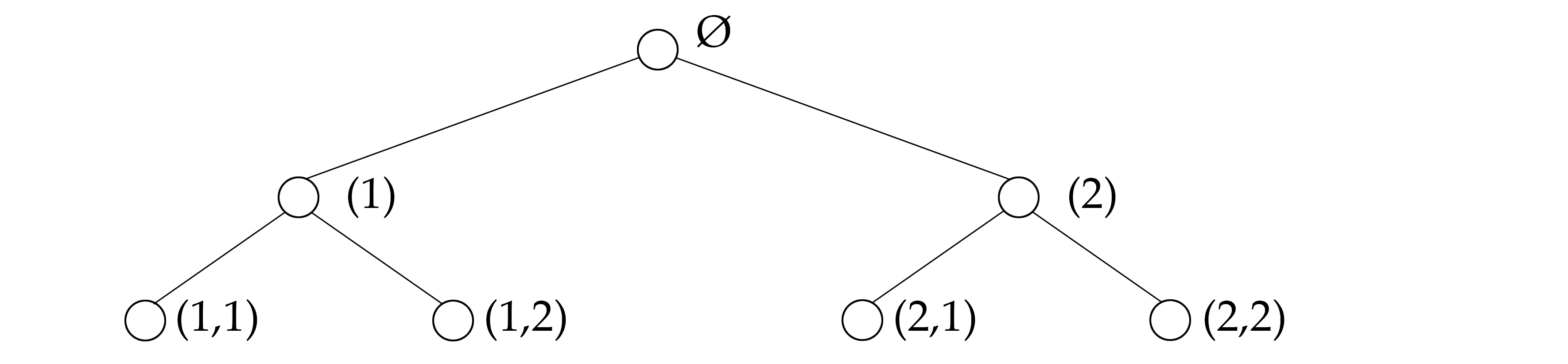}
\caption{An illustration of the rooted tree $\tau = \left\lbrace \varnothing, ~(1), ~(1,1), ~(1,2), ~(2,1), ~(2,2), ~(2) \right\rbrace$.}
\label{fig:4dimrootedtree}
\end{figure}

We define the following subsets for every rooted tree $\tau$:
\begin{enumerate}
\item Leaf nodes are denoted by $\mathscr{L}(\tau)=\left\lbrace I \in \tau : N_I=0\right\rbrace $.
\item Branching nodes are $\mathscr{B}(\tau) = \left\lbrace I \in \tau : N_I > 0 \right\rbrace = \tau \setminus \mathscr{L}(\tau)$.
\item Children of node $I= (i_1,\ldots,i_d) \in \mathscr{B}(\tau)$ are defined as $\mathscr{C}(I,\tau)=\left\lbrace (I,1),\ldots,(I,N_I)\right\rbrace$.
\item Descendants of node  $I= (i_1,\ldots,i_d) \in \tau$ are defined as $\mathscr{D}(I,\tau)=\left\lbrace (j_1,\ldots,j_k) \in \tau : k \geq d, ~(j_1,\ldots,j_d) = (i_1,\ldots,i_d)\right\rbrace$.
\item Leaf descendants of node $I \in \tau$ are defined as $\mathscr{LD}(I,\tau)=\mathscr{L}(\tau)\cap \mathscr{D}(I,\tau)$.
\item The number of leaf descendants of node $I\in \tau$ is denoted with $M_I=\#\mathscr{LD}(I,\tau)$.
\end{enumerate}

\subsubsection{Aggregation tree}

This section introduces a risk aggregation approach building upon a given rooted tree $\tau$. We define on some probability space $(\varOmega, \mathscr{A}, \mathbb{P})$ a random vector $(S_I )_{I \in \tau}$  that assigns to each node $I \in \tau$ a random variable $X_I : \varOmega \rightarrow \mathbb{R}$ such that
\begin{itemize}
\item for leaf nodes $I\in \mathscr{L}(\tau)$, the $X_I$ represent the risks whose aggregate we are interested in,
\item branching nodes $X_I$, $I\in \mathscr{B}(\tau)$, are given by the aggregation of their children: $X_I = X_{I,1}+\ldots+X_{I,N_I}$.
\end{itemize}
In the following, we will further use the convention that whenever we write $\boldsymbol{X}_I$, $I\in \mathscr{B}(\tau)$, in bold letters we mean the \textit{random vector} defined by
\begin{align*}
\boldsymbol{X}_I := (X_J)_{J \in \mathscr{LD}(I,\tau)},
\end{align*}
with components $X_J$, for $J \in \mathscr{LD}(I,\tau)$, where the components are lexicographically ordered with respect to their indices $J$.
\\ \\
For ease of notation, we drop brackets of index vectors as well as the argument $\tau$ whenever the meaning is clear. For instance, $X_{(1,1)} = X_{1,1}$, $\mathscr{D}(I, \tau ) = \mathscr{D}(I)$ and $\mathscr{LD}((1, 1), \tau ) = \mathscr{LD}(1, 1)$.
\\ \\
The rooted tree defined in the previous section represents the hierarchy in which the individual risks are aggregated. In order to complete the aggregation tree model we additionally need to define marginal distributions $F_I$, $I\in \mathscr{L}(\tau)$, and the dependence structure between the children of a branching node $I$. The dependence structure is given through copulas $C_I$, $I \in \mathscr{B}(\tau)$.
\\ \\
In summary, the aggregation tree model is given by the triple
\begin{align}
\label{eq:triple}
\left( \tau,(F_I)_{I\in \mathscr{L}(\tau)},(C)_{I \in \mathscr{B}(\tau)}\right) 
\end{align}
consisting of
\begin{itemize}
\item a rooted tree $\tau$;
\item distribution functions $F_I:\mathbb{R}\rightarrow [0,1]$ for all $I\in \mathscr{L}(\tau)$;
\item copulas $C_I:[0,1]^{N_I} \rightarrow [0,1]$ for all $I \in \mathscr{B}(\tau)$.
\end{itemize}
\begin{example} 
\label{ex:triple}
In our lead Example \ref{ex:hierarchical}, the triple (\ref{eq:triple}) would incorporate the tree $\tau = \left\lbrace \varnothing, ~(1), ~(1,1), ~(1,2), ~(2,1), ~(2,2), ~(2) \right\rbrace$; four univariate distribution functions $F_{1,1}$, $F_{1,2}$, $F_{2,1}$ and $F_{2,2}$; and three bivariate copulas $C_1$, $C_2$ and $C_{\varnothing}$.
\end{example}

\subsubsection{Mildly tree dependent and tree dependent random vectors}
Based on the triple (\ref{eq:triple}), we characterize a class of distributions for a random vector $(X)_{I \in \tau}$.
\begin{definition}
\label{def:mildlytreedependent}
For a given triple (\ref{eq:triple}), $(X)_{I \in \tau}$ is called mildly tree dependent if the following three conditions are met.
\begin{itemize}
\item For each leaf node $I \in \mathscr{L}$, the random variable $X_I$ has a distribution $F_I$; $F_I(x)=P[X_I \leq x]$ for all $x \in \mathbb{R}$.
\item For each branching node $I \in \mathscr{B}$, $X_I$ is the sum of its children, i.e., $X_I = \sum_{i=1}^{N_I}X_{I,i}$. The marginal density function of $X_I$ is denoted by $F_I:\mathbb{R}\rightarrow [0,1]$.
\item For each branching node $I \in \mathscr{B}$, the dependence structure of its children $\mathscr{C}$ is given by the copula $C_I$, i.e.
\begin{align*}
P[X_{I,1}\leq x_1,\ldots,X_{I,N_I}\leq x_{N_I}] = C_I(F_{I,1}(x_1),\ldots,F_{I,N_I}(x_{N_I})),
\end{align*}
for all $(x_1,\ldots,x_{N_I})\in \mathbb{R}^{N_I}$.
\end{itemize}
\end{definition}
The following example illustrates a mildly tree dependent random vector.
\begin{example}
\label{ex:mildlytreedependent}
Consider the aggregation tree model $\left( \tau,(F_I)_{I\in \mathscr{L}(\tau)},(C)_{I \in \mathscr{B}(\tau)}\right)$ from our lead Example \ref{ex:hierarchical}, which was specified in Example \ref{ex:triple}. Suppose $(X_I)_{I \in \tau} = (X_{\varnothing}, ~X_1, ~X_{1,1},~X_{1,2},~X_{2,1},~X_{2,2},~X_2)$ is mildly tree dependent for the model $\left( \tau,(F_I)_{I\in \mathscr{L}(\tau)},(C)_{I \in \mathscr{B}(\tau)}\right)$:
\begin{align*}
P[X_{1,1} \leq x] &= F_{1,1}(x), \hspace{20 pt} P[X_{1,2} \leq x] = F_{1,2}(x), \\ 
P[X_{2,1} \leq x] &= F_{2,1}(x), \hspace{20 pt} P[X_{2,2} \leq x] = F_{2,2}(x),\quad x \in \mathbb{R}.
\end{align*}
Recall that $X_1 = X_{1,1} + X_{1,2}$, $X_2 = X_{2,1} + X_{2,2}$ and $X_{\varnothing} = X_1+X_2$. The copulas $C_1$, $C_2$ and $C_{\varnothing}$ determine the dependence structure of $(X_{1,1},X_{1,2})$, $(X_{2,1},X_{2,2})$ and $(X_1,X_2)$, respectively:
\begin{align*}
P[X_{1,1} \leq x_1,X_{1,2} \leq x_2] &= C_1(F_{1,1}(x_1), F_{1,2}(x_2)), \\
P[X_{2,1} \leq x_1,X_{2,2} \leq x_2] &= C_2(F_{2,1}(x_1), F_{2,2}(x_2)), \\
P[X_1 \leq x_1,X_2 \leq x_2] &= C_{\varnothing}(F_1(x_1), F_2(x_2)).
\end{align*}
Recall that Figure \ref{fig:4dimtree1} illustrates the aggregation structure.
\end{example}
By Sklar's Theorem \ref{eq:sklar} it should be clear that in the above Example \ref{ex:mildlytreedependent} the distribution of the random vectors $(X_{1,1},X_{1,2})$, $(X_{2,1},X_{2,2})$, $(X_1,X_2)$, and hence also of the aggregated risk $X_{\varnothing}=X_1+X_2$, is uniquely specified through the aggregation tree model $( \tau,(F_I)_{I\in \mathscr{L}(\tau)},(C)_{I \in \mathscr{B}(\tau)})$. We will state a general result in the next section. \\
On the other hand, as mentioned already in the Introduction, the joint distribution of all risk $(X_{1,1},X_{1,2},X_{2,1},X_{2,2})$ is not uniquely specified. Put differently: given a aggregation model $( \tau,(F_I)_{I\in \mathscr{L}(\tau)},(C)_{I \in \mathscr{B}(\tau)})$ there exists in general more than one mildly tree dependent random vector $(X)_{I \in \tau}$. \\ \\
Arbenz et al. \cite{arbenz12} propose an additional condition which makes the joint distribution unique. The condition is formulated in the following definition.
\begin{definition}
\label{def:treedependent}
For a given triple (\ref{eq:triple}), a mildly tree dependent random vector $(X)_{I \in \tau}$ is called tree dependent if for each branching node $I \in \mathscr{B}(\tau)$, given $X_I$, its descendants $(X_J)_{J \in \mathscr{D}(I)}$ are conditionally independent of the remaining nodes $(X_J)_{J \in \tau \setminus \mathscr{D}(I)}$:
\begin{align}
\label{def:CIA}
(X_J)_{J \in \mathscr{D}(I)} \bot (X_J)_{J \in \tau \setminus \mathscr{D}(I)}\mid X_I \qquad \text{for all } I \in \mathscr{B}(\tau). 
\end{align}
In the sequel, (\ref{def:CIA}) will be referred to as the conditional independence assumption.
\end{definition}

\begin{example}
Considering once again the aggregation model from our previous examples, the conditional independence assumption would read:
\begin{align*}
(X_{1,1},X_{1,2},X_1) \bot (X_{2,1},X_{2,2},X_2,X_{\varnothing}) \mid X_1, \\
(X_{2,1},X_{2,2},X_2) \bot (X_{1,1},X_{1,2},X_1,X_{\varnothing}) \mid X_2.
\end{align*}
One can show that under this additional condition, a tree dependent random vector exists and is unique. We will state the general result in the next section.
\end{example}
The conditional independence assumption (\ref{def:CIA}) may seem quite confusing at first glance, and it is questionable whether it is a reasonable assumption in practice or not. We will come back to that later on in Section \ref{sec:CIA}.

\section{Existence and uniqueness}
In this section we address existence and uniqueness of the distribution of tree dependent and mildly tree dependent random vectors. \\ \\
Recalling again the definitions of tree dependence and mild tree dependence, it is clear that mild tree dependence covers a larger space of multivariate distributions than tree dependence. Any tree dependent random vector is mildly tree dependent, whereas the converse does not hold. \\ \\
Given an aggregation tree model $\left( \tau,(F_I)_{I\in \mathscr{L}(\tau)},(C)_{I \in \mathscr{B}(\tau)}\right)$, the existence of a mildly tree dependent random vector $(X)_{I \in \tau}$ is fairly obvious by Sklar's Theorem \ref{eq:sklar}. And we have already mentioned before that in general the distribution of a mildly tree dependent random vector is not unique. Considering the smaller class of tree dependent random vectors existence is non-obvious. Arbenz et al. \cite{arbenz12} proved that among the mildly tree dependent distributions there exists indeed exactly one distribution that satisfies (\ref{def:CIA}): 
\begin{theorem}
Given an aggregation tree model (\ref{eq:triple}), a tree dependent random vector $(X)_{I \in \tau}$ exists and its joint distribution is unique.
\end{theorem}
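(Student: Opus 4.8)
The plan is to establish existence and uniqueness simultaneously by induction on the height of the rooted tree $\tau$, after first observing that the marginal law of every node is already forced by the model data. Working bottom-up, at a branching node $I$ all of whose children are leaves the copula condition of Definition \ref{def:mildlytreedependent} forces the law of $(X_{I,1},\dots,X_{I,N_I})$ to be $C_I(F_{I,1},\dots,F_{I,N_I})$, so $X_I=\sum_i X_{I,i}$ has a determined law, which I call $F_I$; climbing the tree, at any branching node the laws of its children are already known from the previous stage, so $C_I$ again determines the law of the children and hence of $X_I$. Thus every node $I\in\tau$ carries a uniquely determined marginal $F_I$ computable from $(\tau,(F_I)_{I\in\mathscr{L}},(C_I)_{I\in\mathscr{B}})$ alone.

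For uniqueness, let $(X_I)_{I\in\tau}$ be tree dependent and let $I_1,\dots,I_k$ (with $k=N_\varnothing$) be the children of the root. I would first check that the restriction of $(X_I)_{I\in\tau}$ to the subtree rooted at each $I_j$ is tree dependent for the induced sub-model, whose leaf laws are the relevant $F_J$ and whose copulas are the relevant $C_J$: the three conditions of Definition \ref{def:mildlytreedependent} restrict verbatim, and each instance of (\ref{def:CIA}) for a node $I$ below $I_j$ is inherited because $\mathscr{D}(I)$ is the same set whether computed in $\tau$ or in the subtree, and conditional independence survives marginalisation. By the induction hypothesis the law of the block $(X_J)_{J\in\mathscr{D}(I_j)}$ is then unique, and its $X_{I_j}$-marginal is the $F_{I_j}$ found above. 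Applying (\ref{def:CIA}) at $I_1,\dots,I_k$ in turn, together with the standard rules of conditional-independence calculus (decomposition, weak union, contraction), yields that conditionally on $(X_{I_1},\dots,X_{I_k})$ the $k$ blocks are mutually independent and the $j$-th block depends on the conditioning vector only through $X_{I_j}$, with conditional law equal to the already-determined regular conditional law inside the sub-model. Since the law of $(X_{I_1},\dots,X_{I_k})$ is $C_\varnothing(F_{I_1},\dots,F_{I_k})$ by the copula condition and $X_\varnothing=\sum_j X_{I_j}$, the full joint law is pinned down, which closes the induction.

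For existence, I would run the same recursion constructively. Choose a vector $(X_{I_1},\dots,X_{I_k})$ with copula $C_\varnothing$ and margins $F_{I_1},\dots,F_{I_k}$ (possible by Sklar's Theorem \ref{the:sklar}) and put $X_\varnothing=\sum_j X_{I_j}$; then, for each branching child $I_j$, draw its children $(X_{I_j,1},\dots,X_{I_j,N_{I_j}})$ --- conditionally independently across $j$, given $(X_{I_1},\dots,X_{I_k})$ and using only the coordinate $X_{I_j}$ --- from the regular conditional law of a $C_{I_j}(F_{I_j,1},\dots,F_{I_j,N_{I_j}})$-distributed vector given that its coordinates sum to $X_{I_j}$, and recurse into the subtrees; such regular conditional distributions exist because $\mathbb{R}$ is a standard Borel space. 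Every branching node is then the sum of its children by construction, and since the law of $X_{I_j}$ equals the law of the sum under $C_{I_j}(F_{I_j,1},\dots,F_{I_j,N_{I_j}})$, integrating the conditional kernel against it returns $C_{I_j}(F_{I_j,1},\dots,F_{I_j,N_{I_j}})$; hence the children of $I_j$ have the prescribed margins and copula, and iterating down to the leaves shows the constructed vector is mildly tree dependent. Finally (\ref{def:CIA}) holds because the whole subtree below any branching node $I$ is produced from $X_I$ together with fresh randomness independent of everything used outside $\mathscr{D}(I)$.

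I expect the main obstacle to be precisely this last point: turning the informal statement about ``fresh independent randomness'' into the exact conditional-independence assertion (\ref{def:CIA}) for nodes lying deep in the tree, where the independence introduced at $I$ must be composed with the independences introduced at all of its ancestors. This, together with the symmetric bookkeeping in the uniqueness step, is where the conditional-independence axioms do the actual work; the remainder is routine use of the copula condition, Sklar's Theorem, and standard properties of regular conditional distributions.
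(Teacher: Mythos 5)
Your proposal is correct and follows essentially the same route as the proof the paper relies on (Theorem 2.6 of Arbenz et al.\ \cite{arbenz12}, to which the paper defers): a recursive factorisation of the joint law into the copula-determined law of the root's children times conditional kernels that depend only on the respective child sums, giving uniqueness via the conditional-independence calculus and existence via the corresponding constructive sampling with regular conditional distributions. You also correctly identify the one genuinely technical point --- propagating the conditional independence introduced at each level down to deep nodes --- which is exactly where the cited proof does its careful bookkeeping.
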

\begin{proof}
See \cite{arbenz12} p. 125.
\end{proof}
We also state the following result regarding an important property of mildly tree dependent random vectors.
\begin{proposition}
\label{prop:mild}
Given an aggregation tree model (\ref{eq:triple}) and some fixed $I\in\mathscr{B}$. Then
the distribution of the vector $(X_{I,1},\ldots, X_{I,N_I} )$ is equal for all mildly
tree dependent vectors $(X)_{I \in \tau}$.
\end{proposition}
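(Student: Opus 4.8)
The plan is to prove, by induction from the leaves upwards, that for every node $J\in\tau$ the distribution of $X_J$ is the same for all mildly tree dependent vectors $(X)_{I\in\tau}$; the proposition then falls out of the copula condition in Definition~\ref{def:mildlytreedependent}. Concretely, I would induct on the height of the subtree rooted at $J$, the inductive claim being that the function $x\mapsto P[X_J\le x]$ does not depend on the choice of mildly tree dependent vector.

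The base case is $J\in\mathscr{L}$: here the first bullet of Definition~\ref{def:mildlytreedependent} gives $P[X_J\le x]=F_J(x)$, which is data of the model. For the inductive step, let $J\in\mathscr{B}$ and assume the claim for each child $(J,i)$, $i=1,\dots,N_J$ (legitimate, since each of those subtrees has strictly smaller height). Write $F_{J,i}$ for the common law of $X_{J,i}$ thereby obtained. By the third bullet of Definition~\ref{def:mildlytreedependent}, the joint distribution function of $(X_{J,1},\dots,X_{J,N_J})$ equals
\begin{align*}
(x_1,\dots,x_{N_J})\longmapsto C_J\bigl(F_{J,1}(x_1),\dots,F_{J,N_J}(x_{N_J})\bigr),
\end{align*}
an object built only from the triple~(\ref{eq:triple}); hence the joint law of $(X_{J,1},\dots,X_{J,N_J})$ is common to all mildly tree dependent vectors. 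Since $X_J=\sum_{i=1}^{N_J}X_{J,i}$ is a fixed continuous function of that vector, and the push-forward of a probability measure under a measurable map depends only on the measure, the law of $X_J$ is likewise model-determined. This closes the induction.

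To finish, fix $I\in\mathscr{B}$ as in the statement. By the inductive claim, each child $X_{I,i}$ has a model-determined marginal $F_{I,i}$, so the third bullet of Definition~\ref{def:mildlytreedependent} gives, for every mildly tree dependent vector,
\begin{align*}
P[X_{I,1}\le x_1,\dots,X_{I,N_I}\le x_{N_I}]=C_I\bigl(F_{I,1}(x_1),\dots,F_{I,N_I}(x_{N_I})\bigr),
\end{align*}
and the right-hand side depends only on the data of the aggregation tree model. Hence the distribution of $(X_{I,1},\dots,X_{I,N_I})$ is the same for all mildly tree dependent vectors.

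I do not anticipate a genuine obstacle. The two points that deserve care are: (i) the symbol $F_{I,i}$ appearing in Definition~\ref{def:mildlytreedependent} is, for a branching child, only meaningful once one knows that the marginal of $X_{I,i}$ is well defined and independent of the chosen vector --- which is precisely what the induction supplies; and (ii) the elementary fact that two random vectors with the same joint distribution have sums with the same distribution, which is what converts the (already established) uniqueness of the children's joint law into uniqueness of the law of their sum.
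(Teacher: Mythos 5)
Your proof is correct and is essentially the argument the paper intends: the paper's proof is just the remark ``simple application of Sklar's Theorem'' with a reference to Arbenz et al., and your bottom-up induction --- leaf marginals are model data, the copula condition then pins down each children-vector's joint law once the children's marginals are known, and the law of the sum is a push-forward of that joint law --- is precisely that application, spelled out. No gaps; the two points you flag at the end are exactly the ones that need (and receive) attention.
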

\begin{proof}
Simple application of Sklar's Theorem \ref{eq:sklar}. See \cite{arbenz12} p. 126.
\end{proof}
In particular, by Proposition \ref{prop:mild} the distribution of the total aggregate $X_{\varnothing}$ is the same for any mildly tree dependent random vector $(X)_{I \in \tau}$.

\chapter{The sample reordering algorithm}
\label{cha:3}
As in most cases the quantities of interest (e.g., the distribution of $X_{\varnothing}$) of an aggregation tree model cannot be calculated analytically, Arbenz et al. \cite{arbenz12} propose a sampling algorithm for the numerical approximation of mild tree dependence. The samples obtained by the algorithm can then, for instance, be used to approximate the distribution of the total aggregate $X_{\varnothing}$.
\\ \\
In this chapter we first present the sample reordering algorithm proposed by Arbenz et al. \cite{arbenz12}. We then state some convergence results and conduct a numerical example at the end of this chapter that raises new questions and motivates the subsequent chapters.
\section{Definition of the algorithm}
The sampling algorithm by Arbenz et al. \cite{arbenz12} uses a bottom-up approach to achieve a numerical approximation of mild tree dependence. The basic idea can be summarized as follows:
\begin{enumerate}
\item Simulate independent marginal samples for all leaf nodes $X_I$, $I \in \mathscr{L}$;
\item simulate independent copula samples from $C_I$ for all branching nodes $I \in \mathscr{B}$;
\item for $I \in \mathscr{B}$, recursively define approximations of the distributions of $X_I$ and $(X_{I,1},\ldots, X_{I,N_I} )$ based on the empirical margins and the empirical copulas from steps 1 to 2.
\end{enumerate}
For ease of notation, let 
$G_I : \mathbb{R}^{N_I} \rightarrow [0, 1]$ for $I \in \mathscr{B}$ denote the
joint distribution function of $(X_{I,1},\ldots, X_{I,N_I} )$, the children of $X_I$:
\begin{align*}
G_I(x_1,\ldots,x_{N_I}) &= P[X_{I,1}\leq x_1,\ldots,X_{I,N_I}\leq x_{N_I}] \\
&=C_I(F_{I,1}(x_1),\ldots,F_{I,N_I}(x_{N_I})).
\end{align*}
\begin{algorithm}
\label{alg:arbenz} Fix $n \in \mathbb{N}$
\begin{enumerate}
\item Generate $n$ independent samples from the leaf nodes $X_I$, $I \in \mathscr{L}$,
and the copulas $C_I$, $I \in \mathscr{B}$
\begin{itemize}
\item $X_I^k \sim F_I$, for $k=1,\ldots,n$,
\item $\boldsymbol{U}_I^k = (U_I^{k,1},\ldots,U_I^{k,N_I}) \sim C_I$, for $k=1,\ldots,n$.
\end{itemize}
\item Define empirical margins $F_I^n:\mathbb{R} \rightarrow [0,1]$, $I \in \mathscr{L}$, by 
\begin{align*}
F_I^n(x) = \frac{1}{n} \sum\limits_{k=1}^{n}\mathbb{1}_{\left\lbrace X_I^k \leq x\right\rbrace },~x \in \mathbb{R},
\end{align*}
and empirical copulas $C_I^n:[0,1]^{N_I} \rightarrow [0,1]$, $I \in \mathscr{B}$, by
\begin{align*}
C_I^n(\boldsymbol{u}) = \frac{1}{n} \sum\limits_{k=1}^{n}\mathbb{1}_{\left\lbrace \frac{R_I^{k,1}}{n}\leq u_1,\ldots,\frac{R_I^{k,N_I}}{n}\leq u_{N_1}\right\rbrace }, \text{ for } \boldsymbol{u} = (u_1,\ldots,u_{N_I}) \in [0,1]^{N_I},
\end{align*}
where $R_I^{k,i}$ is the rank of $U_I^{k,i}$ within the set $\left\lbrace U_I^{j,i}\right\rbrace _{j=1}^n$:
\begin{align*}
R_I^{k,i} = \sum\limits_{j=1}^{n}\mathbb{1}_{\left\lbrace U_I^{j,i} \leq U_I^{k,i} \right\rbrace }.
\end{align*}
\item For $I \in \mathscr{B}$, recursively define as approximations of $G_I$ and $F_I$:
\begin{align*}
G_I^n(x_1,\ldots,x_{N_I}) = C_I^n(F_{I,1}^n(x_1),\ldots,F_{I,N_I}^n(x_{N_I})),
\end{align*}
for $(x_1,\ldots,x_{N_I}) \in \mathbb{R}^d$ and
\begin{align*}
F_I^n(t) = \int_{\mathbb{R}^{N_I}} \mathbb{1}{\left\lbrace x_1+\ldots+x_{N_I}\leq t\right\rbrace } dG_I^n(x_1,\ldots,x_{N_I}),
\end{align*}
for $t \in \mathbb{R}$.
\end{enumerate}
\end{algorithm}
It turns out that applying empirical copulas to empirical margins, as used in the definition of the $G_I^n$, can be efficiently represented in terms of sample reordering. The idea goes back to Iman and Conover \cite{iman82} and was adapted by Arbenz et al. \cite{arbenz12} to this context: 
\begin{theorem}
\label{the:samplereordering}
In the following, a permutation denotes a bijective mapping from $\left\lbrace 1, 2, . . . , n\right\rbrace $ to $\left\lbrace 1, 2, . . . , n\right\rbrace $.
\begin{itemize}
\item Let the permutations $p_I$ for $I \in \tau \setminus \varnothing$ be defined through $p_{I,i}(k)=R_I^{k,i}$, $k=1,\ldots,n$.
\item Recursively define samples $X_I^k$, $k=1,\ldots,n$, for $I \in \mathscr{B}$ by
\begin{align*}
X_I^k = \sum_{J \in \mathscr{C}(I)} X_J^{(p_J(k))} = X_{I,1}^{(p_{I,1}(k))} + \ldots + X_{I,N_I}^{(p_{I,N_I}(k))},
\end{align*}
where $X_J^{(k)}$ denotes the k-th order statistic of $\left\lbrace X_J^1,\ldots,X_J^n\right\rbrace$:
\begin{align*}
X_J^{(1)} \leq X_J^{(2)} \ldots \leq X_J^{(n)}.
\end{align*}
\end{itemize}
Then $G_I^n$, for $I \in \mathscr{B}$, and $F_I^n$, for $I \in \tau$, satisfy
\begin{align}
\label{eq:theorem_convergenceArbenz1}
&G_I^n(x_1,\ldots,x_{N_I})= \frac{1}{n} \sum\limits_{k=1}^{n}\mathbb{1}_{\left\lbrace  X_{I,i}^{(p_{I,i}(k))} \leq x_i \text{ for all } i=1,\dots,N_I \right\rbrace }, \\
\label{eq:theorem_convergenceArbenz2}
&F_I^n(x) = \frac{1}{n} \sum\limits_{k=1}^{n}\mathbb{1}_{\left\lbrace X_I^k \leq x\right\rbrace }.
\end{align}
\end{theorem}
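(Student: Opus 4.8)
The plan is to prove the two identities \eqref{eq:theorem_convergenceArbenz1} and \eqref{eq:theorem_convergenceArbenz2} by induction on the tree, working bottom-up from the leaves. The base case concerns the empirical objects for leaf nodes and the empirical copulas, which are already defined explicitly in Algorithm \ref{alg:arbenz}; the inductive step must reconcile the recursive definitions of $G_I^n$ and $F_I^n$ in Algorithm \ref{alg:arbenz} (copula-of-margins and push-forward under summation) with the reordered-sample formulas in the statement.

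First I would establish the key elementary fact about the rank permutations: since $p_{I,i}(k) = R_I^{k,i}$ is the rank of $U_I^{k,i}$ among $\{U_I^{j,i}\}_{j=1}^n$, the map $p_{I,i}$ is indeed a bijection on $\{1,\dots,n\}$ (assuming, as one does here, no ties — or after fixing a tie-breaking rule), so $F_{I,i}^n(X_{I,i}^{(p_{I,i}(k))})$ runs over the values $\tfrac{1}{n},\tfrac{2}{n},\dots,1$ as $k$ ranges over $\{1,\dots,n\}$, and moreover $\tfrac{R_I^{k,i}}{n} = F_{I,i}^n\bigl(X_{I,i}^{(R_I^{k,i})}\bigr)$. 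The point is that applying the empirical marginal quantile in the $i$-th coordinate sends the ranks $R_I^{k,i}$ to the correct order statistics of the $i$-th leaf-descendant sample. I would then plug the empirical margins $F_{I,i}^n$ into the empirical copula $C_I^n$: by the definition of $C_I^n$ as an average of indicators of $\{\tfrac{R_I^{k,i}}{n}\le u_i \;\forall i\}$, evaluating at $u_i = F_{I,i}^n(x_i)$ turns the condition $\tfrac{R_I^{k,i}}{n}\le F_{I,i}^n(x_i)$ into $X_{I,i}^{(R_I^{k,i})}\le x_i$, i.e. $X_{I,i}^{(p_{I,i}(k))}\le x_i$, which is exactly \eqref{eq:theorem_convergenceArbenz1}. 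A short monotonicity argument (the empirical CDF is a step function, so $\tfrac{m}{n}\le F_{I,i}^n(x_i)$ iff the $m$-th order statistic is $\le x_i$) is the only thing that needs care here.

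Next, for \eqref{eq:theorem_convergenceArbenz2}, I would compute $F_I^n(x) = \int \mathbb 1\{x_1+\dots+x_{N_I}\le x\}\,dG_I^n$ directly from the form of $G_I^n$ just obtained: since $G_I^n$ is the empirical distribution of the $n$ vectors $\bigl(X_{I,1}^{(p_{I,1}(k))},\dots,X_{I,N_I}^{(p_{I,N_I}(k))}\bigr)$, $k=1,\dots,n$, integrating any indicator against $dG_I^n$ is just averaging that indicator over these $n$ atoms. Hence $F_I^n(x) = \tfrac{1}{n}\sum_{k=1}^n \mathbb 1\{\sum_{J\in\mathscr C(I)} X_J^{(p_J(k))}\le x\} = \tfrac{1}{n}\sum_{k=1}^n \mathbb 1\{X_I^k\le x\}$ by the recursive definition of the reordered samples $X_I^k$, giving \eqref{eq:theorem_convergenceArbenz2}. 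The induction then closes because this last identity shows that the $F_I^n$ appearing as a "margin" one level up is genuinely the empirical CDF of the sample $\{X_I^k\}_{k=1}^n$, which is the hypothesis needed to repeat the argument at the parent node; for leaf nodes $I\in\mathscr L$ the corresponding statement is the definition of $F_I^n$ in step 2 of the algorithm, so the base case is immediate.

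The main obstacle I anticipate is bookkeeping rather than conceptual: one must be careful that the rank permutations $p_{I,i}$ are composed in the right direction (rank versus order statistic are inverse permutations, and it is easy to get an $X^{(p^{-1}(k))}$ where an $X^{(p(k))}$ belongs), and one must handle ties in the simulated samples, since with ties the $R_I^{k,i}$ need not form a permutation. I would dispatch the latter either by invoking that the copula samples are drawn from continuous distributions (ties occur with probability zero) or by fixing an explicit tie-breaking convention, and I would state this assumption up front. Everything else is the two-line indicator manipulation sketched above, repeated along the tree.
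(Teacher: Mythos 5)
Your plan is correct and is essentially the argument of the cited reference (the paper itself only points to Arbenz et al.\ \cite{arbenz12}, p.~127, for this proof): the identity \eqref{eq:theorem_convergenceArbenz1} follows from the equivalence $R_I^{k,i}/n\le F_{I,i}^n(x_i)\iff X_{I,i}^{(R_I^{k,i})}\le x_i$, \eqref{eq:theorem_convergenceArbenz2} follows by integrating over the resulting empirical atoms, and the bottom-up induction closes exactly as you describe. Your worry about ties in the ranks is moot since the $U_I^{k,i}$ have continuous (uniform) marginals, so the $p_{I,i}$ are almost surely permutations, and ties among the $X$-samples do not affect the monotonicity equivalence.
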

\begin{proof}
See \cite{arbenz12} p. 127.
\end{proof}
Algorithm \ref{alg:arbenz} can now be formulated in terms of sample reorderings. We start with simulating leaf nodes and copulas independently, thus obtaining samples $X_I^k$, $k = 1,\ldots , n$, for $I \in \mathscr{L}$ and $\boldsymbol{U}_I^k$, $k = 1,\ldots, n$, for $I \in \mathscr{B}$. We then use the permutations $p_{I,i}$ to link the appropriate order statistics and define the atoms of $G_I^n$ by
\begin{align}
\label{eq:atom}
(X_{I,1}^{(p_{I,1}(k))}, \ldots ,X_{I,N_I}^{(p_{I,N_I}(k))}), \hspace{15pt} k=1,\ldots,n
\end{align}
Hence, the permutations $p_I$ can be used to introduce the correct dependence structure into originally independent samples by joining the appropriate order statistics. For branching nodes $I \in \mathscr{B}$, the component-wise sum then defines the atoms of $F_I$ as $X_I^k=X_{I,1}^{(p_{I,1}(k))} + \ldots + X_{I,N_I}^{(p_{I,N_I}(k))}$. We iteratively repeat this reordering from the bottom to the top of the tree. An illustrating example can be found in \cite{arbenz12}.\\ \\
We will later on propose an alternative way to reorder the samples. The alternative way seems less natural and is notationally  less comprehensible but yields exactly the same atoms, so that the statement of Theorem \ref{the:samplereordering} still holds true. In addition, it allows us to draw some conclusions regarding the joint distribution of the  mildly tree dependent vector approximated by our modified algorithm. Having said that, we go on by discussing some convergence results in the next section.

\section{Convergence result}
In this section we state the main convergence result on Algorithm \ref{alg:arbenz}. We restrain ourselves, at this place, to provide an in-depth deduction and analysis of the following result, and we refer to Mainik \cite{mainik11} for a detailed deduction in a much more general setting. \\ \\
In order to formulate the results, we need some additional notation. For continuous margins and $t \in \mathbb{R}$ define $B_I(t) \in [0, 1]^{N_I}$ as
\begin{align*}
B_I(t)=\left\lbrace (F_{I,1}(x_1),\ldots,F_{I,N_I}(x_{N_I})) : (x_1,\ldots,x_{N_I}) \in \mathbb{R}^{N_I}, x_1+\ldots+x_{N_I}=t \right\rbrace 
\end{align*}
Let $U_{\delta}(B_I(t))$ denote the $\delta$-neighbourhood of $B_I(t)$ in $[0, 1]^{N_I}$:
\begin{align*}
U_{\delta}(B_I(t))=\left\lbrace \boldsymbol{x} \in [0, 1]^{N_I} : \lVert \boldsymbol{x}-\boldsymbol{y} \rVert_2 < \delta \text{ for some } \boldsymbol{y} \in B_I(t) \right\rbrace.
\end{align*}
For absolutely continuous copulas $C_I$, we denote the density by
$c_I : [0, 1]^{N_I} \rightarrow [0,\infty)$.
\begin{theorem}
\label{the:convergencearbenz}
Assume that for all $I \in \mathscr{B}$ the copulas $C_I$ are absolutely
continuous and satisfy
\begin{align*}
\lim\limits_{\delta \rightarrow \infty} \sup\limits_{t \in \mathbb{R}} \int_{U_{\delta}(B_I(t))} c_I(u_1,\ldots,u_{N_I})du_1\cdots du_{N_I} = 0.
\end{align*}
Then, for each branching node $I \in \mathscr{B}$,
\begin{align*}
&\lim\limits_{n \rightarrow \infty} \sup\limits_{\boldsymbol{x} \in \mathbb{R}^{N_I}} \left| G_I^n(\boldsymbol{x}) - G_I(\boldsymbol{x})\right|  = 0 \quad P-a.s., \\
&\lim\limits_{n \rightarrow \infty} \sup\limits_{t \in \mathbb{R}} \left| F_I^n(t) - F_I(t)\right|  = 0 \quad P-a.s.
\end{align*}
\end{theorem}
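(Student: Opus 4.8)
The plan is to argue by induction on the depth of the tree, working from the leaves upward, since the quantities $G_I^n$ and $F_I^n$ are defined recursively in exactly that fashion. The induction hypothesis at a branching node $I$ will be that for every child $J \in \mathscr{C}(I)$ we already have $\sup_{t}|F_J^n(t) - F_J(t)| \to 0$ almost surely (for leaf children this is just the Glivenko--Cantelli theorem applied to the i.i.d.\ samples $X_J^k \sim F_J$; for branching children it is the conclusion of the inductive step). From this I want to deduce the two displayed limits at $I$ itself, which then feeds the induction one level higher.

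First I would handle the copula layer. By Theorem~\ref{the:samplereordering}, $G_I^n$ is the empirical distribution of the reordered atoms \eqref{eq:atom}, and $G_I^n(\boldsymbol{x}) = C_I^n(F_{I,1}^n(x_1),\ldots,F_{I,N_I}^n(x_{N_I}))$. The empirical copula $C_I^n$ converges uniformly to $C_I$ almost surely (a standard multivariate Glivenko--Cantelli / empirical-copula-process fact for the i.i.d.\ copula samples $\boldsymbol{U}_I^k \sim C_I$), and $C_I$ is uniformly continuous on the compact cube $[0,1]^{N_I}$. Composing a uniformly convergent sequence of functions with uniformly convergent arguments into a uniformly continuous limit gives $\sup_{\boldsymbol{x}}|G_I^n(\boldsymbol{x}) - G_I(\boldsymbol{x})| \to 0$ almost surely; here I use the induction hypothesis on the children to control the arguments $F_{I,i}^n(x_i)$, uniformly in $x_i$.

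Next I would pass from $G_I^n \to G_I$ to $F_I^n \to F_I$, where $F_I^n(t) = \int \mathbb{1}\{x_1+\cdots+x_{N_I}\le t\}\,dG_I^n$ and likewise for $F_I$. Uniform convergence of distribution functions is equivalent to weak convergence when the limit is continuous, so $G_I^n \Rightarrow G_I$; pushing forward under the continuous summation map $(x_1,\ldots,x_{N_I}) \mapsto \sum_i x_i$ preserves weak convergence, giving $F_I^n \Rightarrow F_I$, and hence $\sup_t |F_I^n(t)-F_I(t)| \to 0$ provided $F_I$ is continuous. This continuity is where the technical hypothesis of the theorem enters: the assumption that $\int_{U_\delta(B_I(t))} c_I \, du \to 0$ uniformly in $t$ as $\delta \to 0$ says precisely that $G_I$ assigns vanishing mass to thin neighbourhoods of the hyperplanes $\{\sum x_i = t\}$, i.e.\ $F_I$ has no atoms and in fact is continuous. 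I would make this quantitative: $|F_I(t+\eta) - F_I(t-\eta)| = P[t-\eta < \sum X_{I,i} \le t+\eta]$, and the preimage of $(t-\eta,t+\eta]$ under the margin transforms is contained in some $U_{\delta(\eta)}(B_I(t))$ with $\delta(\eta)\to 0$, so the hypothesis forces this to go to zero uniformly in $t$ — giving uniform continuity of $F_I$, which upgrades the weak convergence of $F_I^n$ to uniform convergence and simultaneously closes the induction (the child $I$ of the next node up now has a continuous margin, as required).

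The main obstacle is the last step: one must be careful that weak convergence of $G_I^n$ does not automatically give \emph{uniform} convergence of the pushed-forward one-dimensional distributions without knowing the limit is continuous, and it is exactly the absolute-continuity-plus-neighbourhood hypothesis that supplies this. A secondary subtlety is keeping all the "almost surely" statements on a single probability-one event: since there are finitely many nodes and each convergence claim holds off a null set, intersecting finitely many null sets is harmless, but this should be stated. Everything else — Glivenko--Cantelli for the margins, uniform convergence of the empirical copula, uniform continuity of $C_I$ on the compact cube, and the continuous-mapping/Pólya-type upgrade from weak to uniform convergence — is routine and can be cited rather than reproved.
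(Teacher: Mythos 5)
The paper does not actually prove this theorem: its ``proof'' is a one-line deferral to Mainik \cite{mainik11}, so there is no in-text argument to compare against line by line. Your sketch is correct and reproduces the standard skeleton behind the cited result: induction from the leaves upward, Glivenko--Cantelli for the i.i.d.\ leaf samples, Deheuvels-type uniform a.s.\ convergence of the rank-based empirical copulas $C_I^n$, and then the composition step --- which is best made quantitative via the Lipschitz property of copulas, $\left| C_I(\boldsymbol{u})-C_I(\boldsymbol{v})\right| \le \sum_i \left| u_i - v_i\right|$, rather than mere uniform continuity --- to obtain $\sup_{\boldsymbol{x}}\left| G_I^n(\boldsymbol{x})-G_I(\boldsymbol{x})\right| \to 0$. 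Where you genuinely deviate is the passage from $G_I^n$ to $F_I^n$: you route through weak convergence, the continuous-mapping theorem for the summation map, and P\'olya's theorem, so that the hypothesis on $U_{\delta}(B_I(t))$ is used only to secure (uniform) continuity of $F_I$; the source instead exploits that hypothesis directly to control, uniformly in $t$, the mass that $G_I^n$ and $G_I$ place near the non-orthant boundary $\left\lbrace x_1+\ldots+x_{N_I}=t\right\rbrace$, which Kolmogorov-distance closeness of the $G$'s does not control by itself. Your route is legitimate --- portmanteau gives $F_I^n(t)\to F_I(t)$ for every $t$ once the hyperplane is a $G_I$-null set, and a continuous distribution function on $\mathbb{R}$ is automatically uniformly continuous, so P\'olya applies --- and is arguably the more elementary of the two. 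Two points you should make explicit: the displayed condition must be read with $\delta\to 0$ rather than $\delta\to\infty$ (a typo in the paper that you silently and correctly corrected), and your containment of the preimage of the slab $(t-\eta,t+\eta]$ in $U_{\delta(\eta)}(B_I(t))$ uses uniform continuity of the child margins $F_{I,i}$, so the induction must carry continuity of the margins along with the uniform convergence: it starts from the standing (implicit) assumption that the leaf margins are continuous, and the inductive step supplies continuity of each aggregated margin exactly as you observe.
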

\begin{proof}
See Mainik \cite{mainik11}.
\end{proof}
It is crucial to grasp the meaning of Theorem \ref{the:convergencearbenz} correctly. Theorem \ref{the:convergencearbenz} does not make any statement whether the reordered samples from Algorithm \ref{alg:arbenz} approximate a (mildly) tree dependent random vector. Instead, it simply tells us that as long as the conditions on the copulas $C_I$ are satisfied, the samples approximate the distributions $G_I$, $I \in \mathscr{B}$, which characterize all mildly tree dependent distributions. In particular, the samples approximate the distribution $F_{\varnothing}$ of the total aggregate $X_{\varnothing}$.

\section{Numerical experiment}
\label{sec:numericalExp}
In the previous sections we introduced the hierarchical aggregation tree model. So far, papers on this subject focused on the total aggregate $X_{\varnothing}$ as the quantity of interest. Arbenz et al. \cite{arbenz12} proposed the reordering Algorithm \ref{alg:arbenz} (incl. Theorem \ref{the:samplereordering}), which yields samples that approximate the distribution of the total aggregate (recall Theorem \ref{the:convergencearbenz}). Unaddressed remained the question whether the samples actually approximate a mildly tree dependent or even tree dependent random vector. 
\\ \\
In the following we will now shift our focus towards this relevant question. We emphasize that what follows has not yet been addressed in any previous paper and goes beyond the current theoretical results on this subject. \\ \\
We start the discussion with a numerical motivational example. The result of the example is surprising and gives a first hint what the answer to our question might look like.
\subsubsection{Defining the model}
To start with, the aggregation tree model $\left( \tau,(F_I)_{I\in \mathscr{L}(\tau)},(C)_{I \in \mathscr{B}(\tau)}\right)$ needs to be specified. We consider once again the situation from our lead Example \ref{ex:hierarchical}. By Example \ref{ex:triple}, we know that in this set up the aggregation tree model incorporates the tree $\tau = \left\lbrace \varnothing, ~(1), ~(1,1), ~(1,2), ~(2,1), ~(2,2), ~(2) \right\rbrace$; four univariate density functions $F_{1,1}$, $F_{1,2}$, $F_{2,1}$ and $F_{2,2}$; and three bivariate copulas $C_1$, $C_2$ and $C_{\varnothing}$.
\\ \\
Suppose $F_{1,1}$, $F_{1,2}$, $F_{2,1}$ and $F_{2,2}$ are given by normal distributions with means 4,2,0,3; and variances 3, 4, 10, 2, respectively:
\begin{align*}
F_{1,1}(x) = \varPhi\left( \frac{x-4}{\sqrt{3}}\right), \hspace{25 pt} F_{1,2}(x) = \varPhi\left( \frac{x-2}{\sqrt{4}}\right), \\
F_{2,1}(x) = \varPhi\left( \frac{x}{\sqrt{10}}\right), \hspace{25 pt} F_{2,2}(x) = \varPhi\left( \frac{x-3}{\sqrt{2}}\right).
\end{align*}
Suppose further that $C_1$, $C_2$ and $C_{\varnothing}$ are normal copulas with correlation matrices
\begin{align*}
R_1 = \left(  \begin{array}{cc}
1 & 0.7 \\
0.7 & 1
\end{array}\right), \hspace{25 pt}
R_2 = \left(  \begin{array}{cc}
1 & 0.5 \\
0.5 & 1
\end{array}\right), \hspace{25 pt}
R_{\varnothing} = \left(  \begin{array}{cc}
1 & 0.2 \\
0.2 & 1
\end{array}\right).
\end{align*}
Note that in our model all leaf nodes are normal, and all copulas are normal copulas. Hence, we are dealing with a Gaussian aggregation tree model. Gaussian aggregation tree models have the convenient property that the unique tree dependent vector $(X_I)_{I \in \tau}$ is multivariate normal distributed and can be explicitly calculated by a simple recursion formula (see Proposition 2.8 in \cite{arbenz12}). \\ 
For the tree dependent random vector $(X_I)_{I \in \tau}$, we deduce by Proposition 2.8 in \cite{arbenz12} that the distribution of $\boldsymbol{X}_{\varnothing} = (X_J)_{J \in \mathscr{LD}(\varnothing)} = (X_{1,1},X_{1,2},X_{2,1},X_{2,2}) \sim \mathcal{N}(\mu_{\varnothing}, \varSigma_{\varnothing})$ is multivariate normal with 
\begin{align}
\label{eq:covarianceTreeDep}
\mu_{\varnothing}=\left( \begin{array}{c}
4 \\
2 \\
0 \\
3
\end{array}\right) , \hspace{30 pt}
\varSigma_{\varnothing} = \left( \begin{array}{cccc}
3 & 2.4249 & 0.9502 & 0.3290 \\
2.4249 & 4 & 1.1254 & 0.3896 \\
0.9502 & 1.1254 & 10 & 2.2361 \\
0.3290 & 0.3896 & 2.2361 & 2
\end{array}\right). 
\end{align}
\subsubsection{Numerical result}
We will now apply the reordering algorithm on our aggregation tree model specified above. Let $(X_I)_{I \in \tau}$ be the tree dependent random vector and note that by the convergence result stated in Theorem \ref{the:samplereordering} we know that the reordered samples approximate the distributions of
\begin{itemize}
\item $\boldsymbol{X}_1=(X_{1,1}, X_{1,2})$ and $\boldsymbol{X}_2=(X_{2,1}, X_{2,2})$;
\item $X_1=X_{1,1}+ X_{1,2}$ and $X_2=X_{2,1}+ X_{2,2}$;
\item $(X_{1}, X_{2})$ and the total aggregate $X_{\varnothing} = X_{1}+ X_{2}$.
\end{itemize}
Still unclear is whether the samples provide an approximation of the tree dependent random vector characterized by the distribution of $\boldsymbol{X}_{\varnothing} \sim \mathcal{N}(\mu_{\varnothing}, \varSigma_{\varnothing})$. \\ \\
In order to investigate this question, we compute the sample covariance matrix $\bar{\varSigma}_{\varnothing}$ of the reordered samples for a sample size of $n=10^7$:
\begin{align}
\bar{\varSigma}_{\varnothing} = \left( \begin{array}{cccc}
2.9985 & 2.4252 & 0.9513 & 0.3301 \\
2.4252 & 4.0025 & 1.1278 & 0.3909 \\
0.9513 & 1.1278 & 9.9978 & 2.2337 \\
0.3301 & 0.3909 & 2.2337 & 1.9994
\end{array}\right). 
\end{align}
It is remarkable that the sample covariance $\bar{\varSigma}_{\varnothing}$ of the reordered samples and the covariance matrix $\varSigma_{\varnothing}$ in (\ref{eq:covarianceTreeDep}) of the tree dependent random vector are almost identical. This is a strong indication that in our example the reordering algorithm yields samples that approximate the unique tree dependent random vector. \\ \\
To back up our hypothesis, we also conduct a multivariate normality test. Recall that by Proposition 2.8 in \cite{arbenz12} the tree dependent random vector is multivariate normally distributed. There are many test for assessing multivariate normality, we choose to apply "Henze-Zirkler's Multivariate Normality Test" \cite{normtest07}. At the 0.05 significance level the test does not reject the null-hypothesis of multivariate normality, and the p-value associated with the Henze-Zirkler statistic is 0.7848365. \\ \\
The numerical results suggest that the reordered samples approximate the unique tree dependent random vector. Naturally, this begs the question whether this holds true in general or everything was just coincidence. We will address this question in the next chapter.

\subsubsection{Concluding remark} \label{sec:concludingRemark}
One might wonder if the above questions are actually relevant. Do we really need to bother ourselves with characterizing the multivariate distribution approximated by the reordered samples? Is it not enough to know that the total aggregate $X_{\varnothing}$ is approximated correctly? \\
Consider, for instance, the above example and recall that in Example \ref{ex:hierarchical} we said that $X_{1,1}$ stands for "Car insurance Switzerland", $X_{1,2}$ stands for "Car insurance Italy", $X_{2,1}$ stands for "Earthquake Switzerland" and $X_{2,2}$ stands for "Earthquake Italy". Assume an insurance company would not be just interested in the total aggregate but also in the total risk they are taking in Switzerland, i.e. $X_{1,1}+X_{2,1}$. This distribution is not uniquely specified by the aggregation tree model illustrated in Figure \ref{fig:4dimtree1}, and so we most likely wish to have certainty about which distribution is approximated by a sampling algorithm. We see that it is definitely worthwhile to clarify the questions that have arisen in this section.

\chapter{Tree dependent sampling}
\label{cha:4}
We have seen in the end of the previous chapter that some issues regarding the reordering algorithm have not yet been addressed sufficiently. In particular, it is not clear whether the samples approximate the unique tree dependent distribution or some other distribution that remains to be determined. The numerical experiment in Section \ref{sec:numericalExp} suggests that indeed the  tree dependent distribution is approximated. \\ \\
In this chapter we propose a modification of the reordering algorithm for which we claim that for discrete marginals we obtain approximations of the tree dependent distribution. In addition, our modified algorithm has the convenient property that it generates i.i.d. realizations; however, at the cost of a worse algorithmic efficiency. \\
We begin the chapter by presenting the modified algorithm. The proof of the above claim will then be the subject of the second part of this chapter.

\section{A modified reordering algorithm}
Based on Arbenz' reordering algorithm (Algorithm \ref{alg:arbenz} \& Theorem \ref{the:samplereordering}) we propose a modified reordering algorithm (MRA). The modifications will play an important role in the proof of Theorem \ref{the:conv2}, which we state in the next section. 
\\ \\
The first modification concerns the order the realizations are linked to each other. Have a look at the following Example \ref{ex:altReordering} where we illustrate what we mean by this.

\begin{example}
\label{ex:altReordering}
Consider the simplest tree $\tau = \left\lbrace \varnothing,~1,~2 \right\rbrace$, with two margins $X_1 \sim F_1$, $X_2 \sim F_2$ and a dependence structure given by a copula $\boldsymbol{U}_{\varnothing} \sim C_{\varnothing}$. Suppose the simulation for $n=3$ yields the following $X_1^k$, $X_2^k$ and $\boldsymbol{U}_{\varnothing}^k$:
\begin{align*}
&X_1^1 = 1, \hspace{20 pt} X_1^2 = 4, \hspace{20 pt} X_1^3 = 2, \\
&X_2^1 = 9, \hspace{20 pt} X_2^2 = 0, \hspace{20 pt} X_2^3 = 3, \\
&\boldsymbol{U}_{\varnothing}^1 = (0.6,0.8), \hspace{20 pt} \boldsymbol{U}_{\varnothing}^2 = (0.3,0.7), \hspace{20 pt} \boldsymbol{U}_{\varnothing}^3 = (0.5,0.1).
\end{align*}
Recall that we denote by $R_I^{k,i}$ the rank of $U_I^{k,i}$ within the set $\left\lbrace U_I^{j,i}\right\rbrace _{j=1}^n$. The ranks of the copula samples are
\begin{align*}
(R_{\varnothing}^{1,1},R_{\varnothing}^{1,2}) = (3,3), \hspace{20 pt} (R_{\varnothing}^{2,1},R_{\varnothing}^{2,2}) = (1,2), \hspace{20 pt} (R_{\varnothing}^{3,1},R_{\varnothing}^{3,2}) = (2,1).
\end{align*}
With $p_{I,i}(k):=R_I^{k,i}$ the permutations are
\begin{align*}
p_{\varnothing,1} : (1,2,3) \mapsto (3,1,2), \hspace{20 pt} p_{\varnothing,2} : (1,2,3) \mapsto (3,2,1).
\end{align*}
Furthermore, we denote by $Q_{\varnothing}^{k,1}$  (resp. $Q_{\varnothing}^{k,2}$) the rank of $X_1^k$ (resp. $X_2^k$) within the set $\left\lbrace X_1^k \right\rbrace_{k=1}^3$ (resp. $\left\lbrace X_2^k \right\rbrace_{k=1}^3$) and define the permutations $q_{\varnothing,1}(k):=Q_{\varnothing}^{k,1}$ and $q_{\varnothing,2}(k):=Q_{\varnothing}^{k,2}$:
\begin{align*}
q_{\varnothing,1} : (1,2,3) \mapsto (1,3,2), \hspace{20 pt} q_{\varnothing,2} : (1,2,3) \mapsto (3,1,2).
\end{align*}
We will also use the notation of the inverse permutations $p_{\varnothing,1}^{-1}$, $p_{\varnothing,2}^{-1}$, $q_{\varnothing,1}^{-1}$ and $q_{\varnothing,2}^{-1}$. 
\\ \\
In the following we present two different reordering orders. We start with the more natural one proposed by Arbenz et al. \cite{arbenz12} in Theorem \ref{the:samplereordering}.
\paragraph{Reordering 1 (Arbenz)} Let $_{Re1}\boldsymbol{X}_{\varnothing}^1$, $_{Re1}\boldsymbol{X}_{\varnothing}^2$, $_{Re1}\boldsymbol{X}_{\varnothing}^3$ denote the reordered vectors. The subscript "\textit{Re1}" indicates that "Reordering 1" was applied. Following the reordering order from Theorem \ref{the:samplereordering} we get
\begin{align*}
_{Re1}\boldsymbol{X}_{\varnothing}^k := \left( \begin{array}{c} X_1^{(p_{\varnothing,1}(k))} \\
X_2^{(p_{\varnothing,2}(k))} 
\end{array}
\right) =  \left( \begin{array}{c} X_1^{q_{\varnothing,1}^{-1}(p_{\varnothing,1}(k))} \\
X_2^{q_{\varnothing,2}^{-1}(p_{\varnothing,2}(k))} 
\end{array}
\right) \hspace{20 pt} \text{for } k=1,\ldots,3.
\end{align*}
The figure below illustrates how the samples are linked to each other ($X_1^k$ top row, $\boldsymbol{U}_{\varnothing}^k$ middle row, $X_2$ bottom row):
\begin{figure*}[h]
\centering
\includegraphics[width=1.1\linewidth]{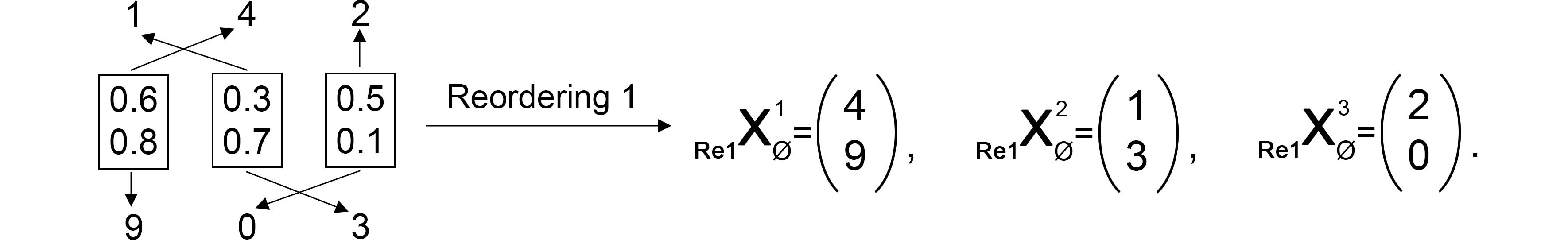}
\label{fig:reordering1}
\end{figure*}
\paragraph{Reordering 2} In the above "Reordering 1" we linked the appropriate order statistics to the ranks associated with the components of $\boldsymbol{U}_{\varnothing}^k$ in order to obtain $_{Re1}\boldsymbol{X}_{\varnothing}^k$. This procedure is illustrated by the arrows pointing from $\boldsymbol{U}_{\varnothing}^k$ towards the appropriate marginals. \\
An alternative way to reorder the samples would be to set the first component of $_{Re2}\boldsymbol{X}_{\varnothing}^k$ equal to $X_1^k$ and then link the appropriate second component to it: 
\begin{figure*}[h]
\centering
\includegraphics[width=1.1\linewidth]{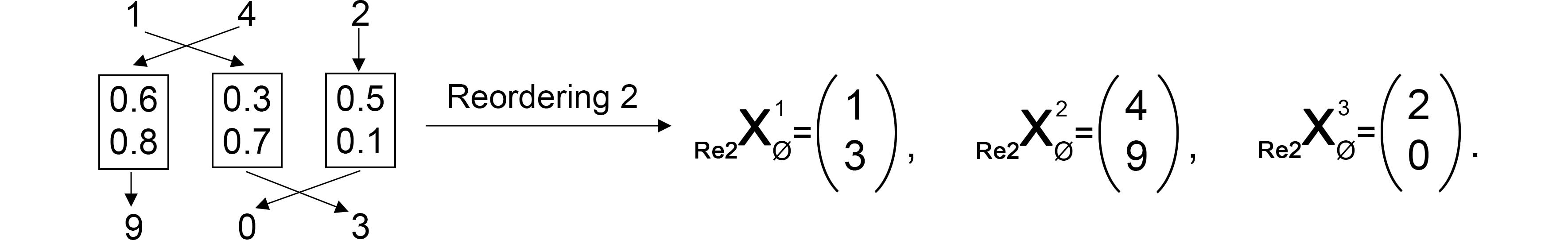}
\label{fig:reordering2}
\end{figure*}
\\ Note how we changed the direction of the arrows in order to illustrate the difference between the two reordering orders. Mathematically, $_{Re2}\boldsymbol{X}_{\varnothing}^k$ can be written as
\begin{align}
\label{eq:re2}
_{Re2}\boldsymbol{X}_{\varnothing}^k := \left( \begin{array}{c} X_1^k \\
X_2^{q_{\varnothing,2}^{-1}(p_{\varnothing,2}(p_{\varnothing,1}^{-1}(q_1(k)))} 
\end{array}
\right) \hspace{20 pt} \text{for } k=1,\ldots,3.
\end{align} 
\end{example}
It is clear that both reordering orders yield the same atoms, i.e. the sets $\left\lbrace _{Re1}\boldsymbol{X}_{\varnothing}^1, _{Re1}\boldsymbol{X}_{\varnothing}^2, _{Re1}\boldsymbol{X}_{\varnothing}^3\right\rbrace $ and $\left\lbrace _{Re2}\boldsymbol{X}_{\varnothing}^1, _{Re2}\boldsymbol{X}_{\varnothing}^2, _{Re2}\boldsymbol{X}_{\varnothing}^3\right\rbrace $ contain the same elements but in general in a different order. As a consequence, the "modification" has in fact absolutely no impact on the convergence result in Theorem \ref{the:convergencearbenz}.
\\ \\
The second modification may be seen as the major modification. It aims at changing the algorithm in such a way that it yields i.i.d. realizations. Let us focus for a moment on the random vectors $_{Re2}\boldsymbol{X}_{\varnothing}^1, _{Re2}\boldsymbol{X}_{\varnothing}^2, _{Re2}\boldsymbol{X}_{\varnothing}^3$ in (\ref{eq:re2}) and note that these are indeed random vectors consisting of the i.i.d. random variables $X_1^1 ,X_1^2,X_1^3 \sim F_1$ and $X_2^1 ,X_2^2,X_2^3 \sim F_2$. By a simple symmetry argument it is clear that $_{Re2}\boldsymbol{X}_{\varnothing}^1, _{Re2}\boldsymbol{X}_{\varnothing}^2, _{Re2}\boldsymbol{X}_{\varnothing}^3$ are identically distributed. They are, however, as Arbenz et al. pointed out already correctly, not independent, and hence the reordered random vectors $_{Re2}\boldsymbol{X}_{\varnothing}^1, _{Re2}\boldsymbol{X}_{\varnothing}^2, _{Re2}\boldsymbol{X}_{\varnothing}^3$ are not i.i.d.. 
\\ \\
To get around this limitation we could proceed as follows: Conduct the algorithm three times in order to obtain three independent copies of the reordered samples. We denote the first copy by  $\left\lbrace ^{\ \ \ 1}_{Re2}\boldsymbol{X}_{\varnothing}^1, {^{\ \ \ 1}_{Re2}\boldsymbol{X}}_{\varnothing}^2, {^{\ \ \ 1}_{Re2}\boldsymbol{X}}_{\varnothing}^3\right\rbrace$, the second one by $\left\lbrace ^{\ \ \ 2}_{Re2}\boldsymbol{X}_{\varnothing}^1, {^{\ \ \ 2}_{Re2}\boldsymbol{X}}_{\varnothing}^2, {^{\ \ \ 2}_{Re2}\boldsymbol{X}}_{\varnothing}^3\right\rbrace$, and the third one by $\left\lbrace ^{\ \ \ 3}_{Re2}\boldsymbol{X}_{\varnothing}^1,{^{\ \ \ 3}_{Re2}\boldsymbol{X}}_{\varnothing}^2, \right. $ $ \left. {^{\ \ \ 3}_{Re2}\boldsymbol{X}}_{\varnothing}^3\right\rbrace$ and set
\begin{align}
\label{eq:diagonalize}
_{Re2}\boldsymbol{X}_{\varnothing}^1 := {^{\ \ \ 1}_{Re2}\boldsymbol{X}}_{\varnothing}^1, \hspace{20pt} _{Re2}\boldsymbol{X}_{\varnothing}^2 := {^{\ \ \ 2}_{Re2}\boldsymbol{X}}_{\varnothing}^2, \hspace{20pt} _{Re2}\boldsymbol{X}_{\varnothing}^3 := {^{\ \ \ 3}_{Re2}\boldsymbol{X}}_{\varnothing}^3.
\end{align}
The so defined reordered random vectors $_{Re2}\boldsymbol{X}_{\varnothing}^1, _{Re2}\boldsymbol{X}_{\varnothing}^2, _{Re2}\boldsymbol{X}_{\varnothing}^3$ are now clearly i.i.d.. 
Note, however, that this comes at the cost of a worse algorithmic efficiency.
\\ \\
Having discussed the two major modifications, we are now ready to formulate and understand the MRA. Unfortunately, the algorithm becomes notationally extremely complex due to the mentioned modifications. Instead of struggling with the notation, we suggest to rather focus on the general idea presented in the previous discussion and to have a look at the explanatory Remark \ref{re:MRA}. \\ \\
In the following, a permutation denotes a bijective mapping from $\left\lbrace 1, 2, . . . , n\right\rbrace $ to $\left\lbrace 1, 2, . . . , n\right\rbrace $.
\begin{algorithm}[Modified reordering algorithm - MRA]
\label{the:MRA} Fix $n \in \mathbb{N}$. \\
The samples are recursively defined from the bottom to the top of the tree.
\begin{enumerate}
\item Generate $n\times n$ independent samples from the leaf nodes $X_I$, $I \in \mathscr{L}$,
and the copulas $C_I$, $I \in \mathscr{B}$
\begin{itemize}
\item $^{\ell}X_I^k \sim F_I$, for $(\ell,k) \in \left\lbrace 1,\ldots,n\right\rbrace ^2$,
\item $^{\ell}\boldsymbol{U}_I^k = (^{\ell}U_I^{k,1},\ldots,{^{\ell}U}_I^{k,N_I}) \sim C_I$, for $(\ell,k) \in \left\lbrace 1,\ldots,n\right\rbrace ^2$.
\end{itemize}
\end{enumerate}
Recall that we denote by $^{\ell}R_I^{k,i}$ the rank of $^{\ell}U_I^{k,i}$ within the set $\left\lbrace {^{\ell}U}_I^{j,i}\right\rbrace _{j=1}^n$. Let for $I \in \tau \setminus \varnothing$ the permutations $^{\ell}p_I$, $\ell = 1,\ldots,n$, be defined through $^{\ell}p_{I,i}(k)={^{\ell}R}_I^{k,i}$, $k=1,\ldots,n$ and denote by  $^{\ell}p_{I,i}^{-1}(\cdot)$ their inverses. 
\begin{enumerate}[resume] 
\item Recursively define for $\ell=1,\ldots,n$ samples $_{\ell}X_{I}^k$, $k=1,\ldots,n$, $I \in \mathscr{B}$ by
\begin{align}
\label{eq:scary1}
_{\ell}X_I^k = \sum_{J \in \mathscr{C}(I)} {^{\ell}X}_J^{^{\ell}q_J^{-1}(^{\ell}p_{J}(^{\ell}p_{I,1}^{-1}(^{\ell}q_{I,1}(k))))} = {^{\ell}X}_{I,1}^k+ \ldots + {^{\ell}X}_{I,N_I}^{^{\ell}q_{I,N_I}^{-1}(^{\ell}p_{I,N_I}(^{\ell}p_{I,1}^{-1}(^{\ell}q_{I,1}(k))))},
\end{align}
and samples $_{\ell}\boldsymbol{X}_{I}^k$ by
\begin{align}
\label{eq:scary2}
_{\ell}\boldsymbol{X}_I^k = \left( ^{\ell}\boldsymbol{X}_{I,1}^k,{^{\ell}\boldsymbol{X}}_{I,2}^{^{\ell}q_{I,2}^{-1}(^{\ell}p_{I,2}(^{\ell}p_{I,1}^{-1}(^{\ell}q_{I,1}(k))))},\ldots,{^{\ell}\boldsymbol{X}}_{I,N_I}^{^{\ell}q_{I,N_I}^{-1}(^{\ell}p_{I,N_I}(^{\ell}p_{I,1}^{-1}(^{\ell}q_{I,1}(k))))}\right),
\end{align}
where the permutations $^{\ell}q_J$ are defined through $^{\ell}q_J(k) = {^{\ell}Q}_J^k$ and  $^{\ell}Q_J^k$ denotes the rank of $^{\ell}X_J^k$ within the set $\left\lbrace ^{\ell}X_J^j\right\rbrace _{j=1}^n$.
\item For $k=1,\ldots,n$ set
\begin{align}
\label{eq:algArbenz1}
&X_I^k:={_{k}X}_I^k, \\
\label{eq:algArbenz2}
&\boldsymbol{X}_I^k := {_{k}}\boldsymbol{X}_I^k.
\end{align}
\item If we haven not yet reached the root $\varnothing$, repeat the algorithm up to this point $n$ different times to generate $n$ independent copies
\begin{itemize}
\item $\left\lbrace {^{1}X}_I^k \right\rbrace_{k=1}^n,\ldots,\left\lbrace {^{n}X}_I^k \right\rbrace_{k=1}^n$ of the reordered samples $\left\lbrace X_I^k \right\rbrace_{k=1}^n$;
\item $\left\lbrace ^{1}\boldsymbol{X}_I^k \right\rbrace_{k=1}^n,\ldots,\left\lbrace^{n}\boldsymbol{X}_I^k \right\rbrace_{k=1}^n$ of the reordered samples $\left\lbrace \boldsymbol{X}_I^k \right\rbrace_{k=1}^n$.
\end{itemize}
\item Repeat steps 2-4 till you reach the root node $\varnothing$ of the tree.
\end{enumerate}
\end{algorithm}
\begin{remark} \hspace{80 pt}
\label{re:MRA}
\begin{enumerate}
\item In step 2 of the MRA, for fixed $\ell$, we simply define recursively the reordered samples. The reordering order is chosen in such a way that the first component of the reordered samples is fixed (compare "Reordering 2" in Example \ref{ex:altReordering})
\item Since $\ell$ is running from $\ell=1,\ldots,n$, we obtain $n$ independent sets of reordered samples. In step 3 we use a diagonalizing procedure similar to (\ref{eq:diagonalize}), which yields $n$ i.i.d. samples.
\item In case we have not yet reached the root node, we need $n$ independent copies of these i.i.d. samples, so that we can continue with the next recursion step. These independent copies are generated in step 4.
\end{enumerate}
\end{remark}
\section{Convergence results}
In this section we discuss the convergence properties of the MRA. First, we show in Subsection \ref{sec:convergence1} that the results in Theorem \ref{the:convergencearbenz} are also true for the MRA. In addition to this, the MRA allows us to prove a result that goes beyond the ones already existing. We state and prove this convergence result in Subsection \ref{sec:convergence2}.
\subsection{The basic convergence result}
\label{sec:convergence1}
Due to the modifications carried out it is not a priori clear that the convergence result in Theorem \ref{the:convergencearbenz} are still true for the MRA. We mentioned in the previous section that the first modification - namely the change of reordering order - does not affect this result, since we obtain the same reordered samples (possibly in a different order). \\
The second and more significant modification may, however, affect the result. Fortunately, we can show that this is not the case when we additionally assume that the marginals are discrete (finite or infinite). This will be the subject of this section.
\\ \\
Let in the following $G_I^n$ and $F_I^n$, $\ell=1,\ldots,n$, for $I \in \mathscr{B}$ denote the empirical distribution functions defined through
\begin{align}\label{eq:GempMRA}
&G_I^n(x_1,\ldots,x_{N_I})= \frac{1}{n} \sum\limits_{k=1}^{n}\mathbb{1}_{\left\lbrace  {^{k}X}_{I,i}^{^{k}q_{I,i}^{-1}(^{k}p_{{I,i}}(^{k}p_{I,1}^{-1}({^{k}q}_{I,1}(k))))} \leq x_i \text{ for all } i=1,\dots,N_I \right\rbrace }, \\
\label{eq:FempMRA}
&F_I^n(x) = \frac{1}{n} \sum\limits_{k=1}^{n}\mathbb{1}_{\left\lbrace {X}_I^k \leq x\right\rbrace }.
\end{align}
Since especially the first expression looks confusing, we want to emphasis that these two expressions are nothing else than empirical distribution functions of the MRA-reordered samples (analogously to the expressions (\ref{eq:theorem_convergenceArbenz1})\&(\ref{eq:theorem_convergenceArbenz2}) in Theorem \ref{the:convergencearbenz})

\begin{theorem}
\label{the:convergence1}
Assume that the conditions on the copulas $C_I$, $I \in \mathscr{B}$, as formulated in Theorem \ref{the:convergencearbenz} are satisfied and that all the marginals are discrete. Then, for each branching node $I \in \mathscr{B}$
\begin{align}
\label{eq:convergence1.1}
&\lim\limits_{n \rightarrow \infty} \sup\limits_{\boldsymbol{x} \in \mathbb{R}^{N_I}} \left| G_I^n(\boldsymbol{x}) - G_I(\boldsymbol{x})\right|  = 0 \quad P-a.s., \\
\label{eq:convergence1.2}
&\lim\limits_{n \rightarrow \infty} \sup\limits_{t \in \mathbb{R}} \left| F_I^n(t) - F_I(t)\right|  = 0 \quad P-a.s.
\end{align}
\end{theorem}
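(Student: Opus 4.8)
The plan is to isolate the two modifications of the MRA and to deal with them in turn. The first modification --- replacing ``Reordering~1'' by ``Reordering~2'' --- is for free: as observed right after Example~\ref{ex:altReordering}, the two orderings produce exactly the same \emph{set} of atoms, so inside any one of the $n$ copies generated in step~2 of Algorithm~\ref{the:MRA} the empirical distribution functions of the reordered samples coincide with the ones attached to Algorithm~\ref{alg:arbenz} by Theorem~\ref{the:samplereordering}. Writing $\tilde{G}_I^{n,\ell}$ and $\tilde{F}_I^{n,\ell}$ for the empirical distributions built from copy~$\ell$ \emph{before} the diagonalisation $(\ref{eq:algArbenz1})$--$(\ref{eq:algArbenz2})$, the first modification therefore changes nothing. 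The second modification, the diagonalisation, is the one doing the work: by construction the diagonalised vectors $\boldsymbol{X}_I^k = {}_{k}\boldsymbol{X}_I^k$, $k=1,\dots,n$, appearing in $(\ref{eq:GempMRA})$ are i.i.d., each distributed according to the single-copy law $\hat{G}_I^n := \mathrm{Law}({}_{1}\boldsymbol{X}_I^1)$, and likewise the $X_I^k$ of $(\ref{eq:FempMRA})$ are i.i.d.\ with law $\hat{F}_I^n$. Hence $G_I^n(\boldsymbol{x}) \sim \tfrac1n\,\mathrm{Bin}\bigl(n, \hat{G}_I^n(\boldsymbol{x})\bigr)$ for every fixed $\boldsymbol{x}$, and the theorem follows once we prove \textbf{(a)} $\hat{G}_I^n \to G_I$ and $\hat{F}_I^n \to F_I$ uniformly, and \textbf{(b)} that the empirical distribution of $n$ i.i.d.\ draws from $\hat{G}_I^n$ (resp.\ $\hat{F}_I^n$) converges uniformly, $P$-a.s., to its mean.

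For \textbf{(a)} I would induct on the height of $I$. Since the reordered samples inside a copy are identically distributed, $\hat{G}_I^n(\boldsymbol{x}) = \mathbb{E}\bigl[\tilde{G}_I^{n,1}(\boldsymbol{x})\bigr]$, so it is enough --- by dominated convergence, $0\le\tilde{G}_I^{n,1}\le 1$ --- to show $\tilde{G}_I^{n,1}\to G_I$ and $\tilde{F}_I^{n,1}\to F_I$ uniformly $P$-a.s.; uniformity for $\hat{G}_I^n$ then follows from $\norm{\hat{G}_I^n - G_I}_\infty \le \mathbb{E}\,\norm{\tilde{G}_I^{n,1}-G_I}_\infty \to 0$. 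By the remark above, copy~$1$ at node~$I$ is exactly one run of the reordering of Theorem~\ref{the:samplereordering}, fed with an i.i.d.\ sample of size~$n$ from $C_I$ and, for each child $J\in\mathscr{C}(I)$, with the first copy of $J$'s output, i.e.\ an i.i.d.\ sample of size~$n$ from $\hat{F}_J^n$ (from $\hat{G}_J^n$ in the vector version), which by the inductive hypothesis converges uniformly to $F_J$ (to $G_J$); for a leaf child it is exactly i.i.d.\ $F_J$. The proof of Theorem~\ref{the:convergencearbenz} in Mainik~\cite{mainik11} proceeds node by node and uses at node~$I$ essentially only the uniform convergence of the child empirical distributions together with the stated condition on $C_I$; since i.i.d.\ sampling from a \emph{fixed} law obeys Glivenko--Cantelli, the child empiricals of copy~$1$ converge uniformly $P$-a.s.\ to $F_J$ (to $G_J$), and Mainik's node step then applies to give $\tilde{G}_I^{n,1}\to G_I$, $\tilde{F}_I^{n,1}\to F_I$ uniformly $P$-a.s. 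This closes the induction and yields \textbf{(a)}.

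For \textbf{(b)} I would use discreteness. All leaf marginals being discrete, every $F_J$, $G_J$, $\hat{F}_J^n$, $\hat{G}_J^n$ and every reordered sample is supported on one fixed countable set $S_I\subset\mathbb{R}^{N_I}$, independent of $n$. Fix $\boldsymbol{x}$: the indicators $\mathbb{1}_{\{\boldsymbol{X}_I^1\le\boldsymbol{x}\}},\dots,\mathbb{1}_{\{\boldsymbol{X}_I^n\le\boldsymbol{x}\}}$ are i.i.d.\ Bernoulli$\bigl(\hat{G}_I^n(\boldsymbol{x})\bigr)$, so Hoeffding's inequality gives $P\bigl(\abs{G_I^n(\boldsymbol{x}) - \hat{G}_I^n(\boldsymbol{x})} > \epsilon\bigr)\le 2e^{-2n\epsilon^2}$, whence $G_I^n(\boldsymbol{x})\to G_I(\boldsymbol{x})$ $P$-a.s.\ by Borel--Cantelli and \textbf{(a)}; the same holds for the atom masses $G_I^n(\{s\})$, $s\in S_I$. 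Given $\epsilon>0$ pick a finite $F_\epsilon\subset S_I$ with $G_I(S_I\setminus F_\epsilon)<\epsilon$; then $\hat{G}_I^n(S_I\setminus F_\epsilon)<2\epsilon$ eventually by \textbf{(a)}, a further Hoeffding--Borel--Cantelli bound forces $G_I^n(S_I\setminus F_\epsilon)<3\epsilon$ eventually $P$-a.s., and the finitely many atom masses in $F_\epsilon$ converge $P$-a.s.; intersecting these countably many almost sure events along $\epsilon = 1/m$ gives $\norm{G_I^n - G_I}_{\mathrm{TV}}\to 0$ $P$-a.s., hence $\sup_{\boldsymbol{x}}\abs{G_I^n(\boldsymbol{x}) - G_I(\boldsymbol{x})}\to 0$ $P$-a.s., which is $(\ref{eq:convergence1.1})$. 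The argument for $F_I^n$, giving $(\ref{eq:convergence1.2})$, is identical with $S_I$ replaced by the (countable) support of $F_I$.

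I expect the main obstacle to be the inductive step of \textbf{(a)}: one must check that a single reordering step of the MRA is \emph{stable} under inputs that are merely i.i.d.\ from the near-miss laws $\hat{F}_J^n$ instead of from the true child marginals $F_J$ --- equivalently, that the node-by-node argument behind Theorem~\ref{the:convergencearbenz} genuinely depends on nothing more than uniform convergence of the child empirical distributions and the condition on $C_I$. This is also where the discreteness hypothesis does most of its work, since Theorem~\ref{the:convergencearbenz} and \cite{mainik11} are phrased for continuous margins and the order-statistic/rank bookkeeping of $(\ref{eq:scary1})$--$(\ref{eq:scary2})$ --- ties included --- must be re-examined in the discrete case; the remaining steps are routine concentration and Borel--Cantelli arguments. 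A secondary, purely notational nuisance is tracking the nested permutations in $(\ref{eq:scary1})$--$(\ref{eq:scary2})$ when identifying copy~$1$ of the MRA with a single run of the reordering in Theorem~\ref{the:samplereordering}.
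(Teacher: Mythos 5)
Your proposal is correct and follows the same skeleton as the paper's proof: induction from the leaves upward, the observation that each single copy $\ell$ is an Arbenz-type reordering whose empirical distribution converges uniformly a.s.\ to $G_I$ (the paper invokes Lemma~3.6 of \cite{arbenz12} together with the induction hypothesis, exactly the step you flag as the main obstacle), the identification of the diagonalised vectors as i.i.d.\ draws from a common per-sample law (your $\hat{G}_I^n$, the paper's $G_I^{*n}$), and finally a ``triangular-array Glivenko--Cantelli'' exploiting discreteness. The two places where you genuinely diverge are both improvements in rigour rather than in strategy. First, for the convergence of the per-sample law you use the identity $\hat{G}_I^n(\boldsymbol{x})=\mathbb{E}\bigl[\tilde{G}_I^{n,1}(\boldsymbol{x})\bigr]$ together with dominated convergence to get \emph{uniform} convergence $\norm{\hat{G}_I^n-G_I}_\infty\to 0$ directly; the paper extracts only pointwise weak convergence (\ref{eq:conv1_weakconv}) from the same identity via a proof by contradiction and then has to upgrade it to uniformity through its first two auxiliary results, which is where its discreteness hypothesis first enters. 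Second, for the empirical distribution of $n$ i.i.d.\ draws from the $n$-dependent law $\hat{G}_I^n$ you use Hoeffding plus Borel--Cantelli pointwise and then a finite-atom tightness argument; the paper's Auxiliary result~\ref{par:conv1_aux3} instead applies Glivenko--Cantelli for each fixed law $F_n$ and then passes to the diagonal $F_{n,n}$, a step that as written does not actually control the $m=n$ term from the limit in $m$ --- your concentration-inequality route closes exactly this gap. Your worry about whether Mainik's node-by-node argument tolerates child inputs that are i.i.d.\ from the approximating laws $\hat{F}_J^n$ rather than from $F_J$ is legitimate but is resolved in the paper the same way you propose to resolve it: the cited lemma requires only uniform a.s.\ convergence of the child empirical margins, which your inductive hypothesis supplies.
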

The following auxiliary results will help us to prove Theorem \ref{the:convergence1}.

\paragraph{Auxiliary result 1}
\label{par:conv1_aux1}
Let $X_n: \Omega \rightarrow \mathbb{R}$, $n \in \mathbb{N}$, and $X: \Omega \rightarrow \mathbb{R}$ be \textit{discrete} random variables such that $X_n \xrightarrow{n \to \infty} X$ in distribution. Denote by ran$X=\left\lbrace X(\omega) | \omega \in \Omega \right\rbrace$ the range of $X$ and assume there exists a discrete set $E \subset \mathbb{R}$ such that ran$X \subset E$ and ran$X_n \subset E$, $\forall n \in \mathbb{N}$. \\
Then it holds that $\lim\limits_{n \to \infty}P[X_n=x]=P[X=x]$, $\forall x \in \mathbb{R}$.
\begin{proof}
Let first $x \in$ ran$X$. For a left-open, right-closed interval $U=(a,b]$ it holds that
\begin{align*}
\lim\limits_{n \to \infty}P[X_n\in U]=P[X\in U]
\end{align*}
by convergence in distribution. Since $E$ is discrete we can find $U=(a,b]$ such that $a<x<b$ and $E \cap U = \left\lbrace x\right\rbrace $. Then it holds that
\begin{align*}
&P[X_n \in U]=P[X_n=x], \quad P[X \in U]=P[X=x].
\end{align*}
If $x \notin$ ran$X$ we argue similarly. Find $U=(a,b]$ such that $a<x<b$ and $E \cap U = \varnothing$. Then $0=\lim\limits_{n \to \infty}P[X_n=x]=\lim\limits_{n \to \infty}P[X_n\in U]=P[X\in U]=P[X=x]=0$.
\end{proof}

\paragraph{Auxiliary result 2}
\label{par:conv1_aux2}
Let $X$ and $X_n$, $n\in \mathbb{N}$, as in the 1\textsuperscript{st} auxiliary result above. Denote their distribution functions by $F$, respectively $F_n$, and the associated measures by $\mu$, respectively $\mu_n$. Then the convergence in distribution is uniformly in the sense that
\begin{align}
\label{eq:conv1_uniform}
\lim\limits_{n \to \infty} \sup\limits_{x \in \mathbb{R}} \left| F_n(x) - F(x)\right|  = 0.
\end{align}
\begin{proof}
Let $\epsilon > 0$. Since $X$ is discrete there exists a finite set $A\subset E$, of cardinality $\left|A \right| < \infty$,  such that $P[X \in A]=\mu(A) > 1- \epsilon$. Choose $N_1 \in \mathbb{N}$ large enough such that $\forall n \geq N_1$: $\left| \mu_n(\mathbb{R}\setminus A) - \mu(\mathbb{R}\setminus A)\right| < \epsilon$. Then the triangle-inequality implies that $\forall n \geq N_1$:
\begin{align*}
\left| \mu_n(\mathbb{R}\setminus A) \right| < \epsilon + \left| \mu(\mathbb{R}\setminus A)\right| < 2\epsilon.
\end{align*}
By the 1\textsuperscript{st} auxiliary result above we know that $\lim\limits_{n \to \infty}\mu_n(x)=\lim\limits_{n \to \infty}P[X_n=x]=P[X=x]=\mu(x)$, for all $x \in A$. This, and the fact that $A$ is finite implies that there exists $N_2 \in \mathbb{N}$ such that $\forall n \geq N_2$: $\left| \mu_n(x)  - \mu(x) \right| < \epsilon$, $\forall x \in A$. 
\\ \\
Let now $x \in \mathbb{R}$ be arbitrary and set $B:=(-\infty,x] \cap A$ and $B^c:=(-\infty,x] \setminus A$. Then $\forall n \geq \max \left\lbrace N_1,N_2\right\rbrace$:
\begin{align*}
&\left|F_n(x)-F(x)\right| = \left| \mu_n(B)+\mu_n(B^c) - \mu(B)-\mu(B^c)\right| \\
&\leq \left| \mu_n(B) -\mu(B)\right| + \left|  \mu_n(B^c) -\mu(B^c)\right| \leq \left|A \right|\epsilon + 2\epsilon.
\end{align*}
\end{proof}

\paragraph{Auxiliary result 3}
\label{par:conv1_aux3}
Let $X$ and $X_n$, $n\in \mathbb{N}$, as in the 1\textsuperscript{st} auxiliary result above. Denote their distribution functions by $F$, respectively $F_n$. Let $F_{m,n}(x):=\sum\limits_{i=1}^{m} \mathbb{1}_{\left\lbrace X_n^i \leq x \right\rbrace }$, $(m,n) \in \mathbb{N}^2$, denote the empirical distribution function of the i.d.d. random variables $X_n^1,\ldots,X_n^m$ $\sim F_n$. Then
\begin{align}
\lim\limits_{n \to \infty} \sup\limits_{x \in \mathbb{R}} \left| F_{n,n}(x) - F(x)\right|  = 0 \quad P-a.s..
\end{align}
\begin{proof}
We write $\lVert F_{m,n} - F \rVert_{\infty} :=  \sup\limits_{x \in \mathbb{R}} \left| F_{m,n}(x) - F(x)\right|$ for ease of notation. Note that the Glivenko–Cantelli Theorem implies that $\lim\limits_{m \to \infty} \lVert F_{m,n} - F_n \rVert_{\infty}=0$ P-a.s., for all $n \in \mathbb{N}$. By the 2\textsuperscript{nd} auxiliary result above we know that $\forall k \in \mathbb{N}$ there exists $N(k) \in \mathbb{N}$ such that
\begin{align*}
\lVert F_n - F \rVert_{\infty} \leq \frac{1}{k}, \quad \forall n \geq N(k).
\end{align*}
Using these two facts, we get that for all $n \geq N(k)$:
\begin{align*}
\lim\limits_{m \to \infty} \lVert F_{m,n} - F \rVert_{\infty} \leq \underbrace{\lim\limits_{m \to \infty} \lVert F_{m,n} - F_n \rVert_{\infty}}_{\text{$\rightarrow 0$ P-a.s.}} + \underbrace{\lim\limits_{m \to \infty} \lVert F_{n} - F \rVert_{\infty}}_{\text{$\leq\frac{1}{k}$}} \leq \frac{1}{k} \quad P-a.s..
\end{align*}
So $\lim\limits_{\substack{n \to \infty \\ n \geq N(k)}} \lVert F_{n,n} - F \rVert_{\infty} \leq \frac{1}{k}$ P-a.s. Letting $k \to \infty$ then proves the assertion.
\end{proof}

\begin{proof}[Theorem \ref{the:convergence1}]
The proof of Theorem \ref{the:convergence1} is obtained by induction from the bottom to the top of the tree. The $F_I^n$ for leaf nodes $I \in \mathscr{L}$ are obtained through a simulation from the true distributions $F_I$. Therefore, the Glivenko–Cantelli Theorem directly yields (\ref{eq:convergence1.2}) for $I \in \mathscr{L}$.
\\ \\
Fix $I \in \mathscr{B}$. Suppose that (\ref{eq:convergence1.1}) and (\ref{eq:convergence1.2}) hold true for $(I,1),\ldots,(I,N_I)$. Recall the recursion procedure in step 2 of the MRA together with the second remark in (\ref{re:MRA}) and note that the MRA yields $n$ independent sets of random vectors 
\begin{align}
\begin{split}
\label{eq:copies}
&^{1}\boldsymbol{Y}_I^{(1:n)},\ldots,^{1}\boldsymbol{Y}_I^{(n:n)} \sim G_I^{*n} \qquad \text{ not i.i.d},\\
&\hspace{60 pt} \vdots \\
&^{n}\boldsymbol{Y}_I^{(1:n)},\ldots,^{n}\boldsymbol{Y}_I^{(n:n)} \sim G_I^{*n} \qquad \text{ not i.i.d}.
\end{split}
\end{align}
\subparagraph{Note} In practice, the MRA yields realizations of these random variables. Concretely, realizations of $^{\ell}\boldsymbol{Y}_I^{(k:n)}$, $(\ell,k) \in \left\lbrace 1,\ldots,n\right\rbrace ^2$, are given by
\begin{align*}
^{\ell}\boldsymbol{Y}_I^{(k:n)}:=\left( ^{\ell}{X}_{I,1}^k,{^{\ell}{X}}_{I,2}^{^{\ell}q_{I,2}^{-1}(^{\ell}p_{I,2}(^{\ell}p_{I,1}^{-1}(^{\ell}q_{I,1}(k))))},\ldots,{^{\ell}{X}}_{I,N_I}^{^{\ell}q_{I,N_I}^{-1}(^{\ell}p_{I,N_I}(^{\ell}p_{I,1}^{-1}(^{\ell}q_{I,1}(k))))}\right).
\end{align*}
To avoid that the complicated form distracts us from the idea of the proof, we decided to mention this just as a side note.
\\ \\
We stress that all $^{\ell}\boldsymbol{Y}_I^{(k:n)} \in \mathbb{R}^{N_I}$, $(\ell,k) \in \left\lbrace 1,\ldots,n \right\rbrace^2 $, indeed have the same distribution, which we denote by $G_I^{*n}$. The $*n$ is used to distinguish the distribution function $G_I^{*n}$ from the \textit{empirical} distribution function $G_I^{n}$.
\\ \\
Denote by $^{\ell}G_I^n$, $\ell=1,\ldots,n$, the empirical distribution function of the identically but not independently distributed random vectors $^{\ell}\boldsymbol{Y}_I^{(1:n)},\ldots,{^{\ell}\boldsymbol{Y}}_I^{(n:n)}$. These random vectors $^{\ell}\boldsymbol{Y}_I^{(1:n)},\ldots,{^{\ell}\boldsymbol{Y}}_I^{(n:n)}$ were reordered according to Arbenz' reordering algorithm (except for the different reordering order). Hence, we can apply Lemma 3.6 in \cite{arbenz12} in combination with the induction hypothesis, which tells us that for all $\ell=1,\ldots,n$ it holds that
\begin{align}
\label{eq:conv1arbenz}
\lim\limits_{n \rightarrow \infty} \sup\limits_{\boldsymbol{x} \in \mathbb{R}^{N_I}} \left| {^{\ell}G}_I^n(\boldsymbol{x}) - G_I(\boldsymbol{x})\right|  = 0 \quad P-a.s.
\end{align}
Recall step 3 in the MRA and the related second remark in (\ref{re:MRA}): The MRA picks the the diagonal elements ${^{1}\boldsymbol{Y}}_I^{(1:n)},\ldots,{^{k}\boldsymbol{Y}}_I^{(k:n)},\ldots,{^{n}\boldsymbol{Y}}_I^{(n:n)}$ in (\ref{eq:copies}). We define
\begin{align*}
&\boldsymbol{Y}_I^{(1:n)}:={^{1}\boldsymbol{Y}}_I^{(1:n)}, \\
&\hspace{35 pt} \vdots \\
&\boldsymbol{Y}_I^{(n:n)}:={^{n}\boldsymbol{Y}}_I^{(n:n)}.
\end{align*}
Hence, the MRA yields i.i.d. random vectors $\boldsymbol{Y}_I^{(1:n)},\ldots,\boldsymbol{Y}_I^{(n:n)} \sim G_I^{*n}$. Note that the empirical distribution function of the random vectors $\boldsymbol{Y}_I^{(1:n)},\ldots,\boldsymbol{Y}_I^{(n:n)}$ is $G_I^n$ from (\ref{eq:GempMRA}). Hence, it remains to show that $G_I^n$ converges to $G_I$ in the sense of (\ref{eq:convergence1.1}).
\\ \\
We prove this by showing that $\boldsymbol{Y}_I^{(k:n)} \sim G_I^{*n}$, $k=1,\ldots,n$, converges weakly towards $G_I$, i.e. we need to show that
\begin{align}
\label{eq:conv1_weakconv}
\lim\limits_{n \rightarrow \infty}G_I^{*n}(\boldsymbol{x})=G_I(\boldsymbol{x}), \qquad \forall \boldsymbol{x}\in \mathbb{R}^{N_I} \text{ at which } G_I \text{ is continuous.}
\end{align}
If (\ref{eq:conv1_weakconv}) holds true, we can apply the 3\textsuperscript{rd} auxiliary result \ref{par:conv1_aux3} which tells us that the empirical distribution function $G_I^n$ of the i.i.d. random vectors $\boldsymbol{Y}_I^{(1:n)},\ldots,\boldsymbol{Y}_I^{(n:n)} \sim G_I^{*n}$ converges towards $G_I$ in the sense of (\ref{eq:convergence1.1}).
\\
Moreover, note that $X_I^k=$"sum of the components of $\boldsymbol{Y}_I^{(k:n)}$", $k=1,\ldots,n$,  and $F_I^n$ is the empirical distribution function of the random variables $X_I^1,\ldots,X_I^n$ (recall \ref{eq:FempMRA}). We can therefore apply the Cramér–Wold theorem and then argue again as above with the 3\textsuperscript{rd} auxiliary result to prove (\ref{eq:convergence1.2}). 
\\ \\
It remains to show that (\ref{eq:conv1_weakconv}) is true, which we prove by contradiction: Assume $\exists \boldsymbol{x}\in\mathbb{R}^{N_I}$ such that $\exists\varepsilon>0$ s.t. $\forall n \in \mathbb{N}$ $\exists N(n) > n$: $|G_I^{*N(n)}(\boldsymbol{x}) -  G_I(\boldsymbol{x})|>\varepsilon$. Fix an arbitrary $ \ell \in \left\lbrace1,\ldots,n \right\rbrace$. Then
\begin{align*}
E[|G_I(\boldsymbol{x})-{^{\ell}G}_I^{N(n)}(\boldsymbol{x})|] &\geq |E[G_I(\boldsymbol{x})-{^{\ell}G}_I^{N(n)}(\boldsymbol{x})]| \\
&\geq |E[G_I(\boldsymbol{x})]-E[{^{\ell}G}_I^{N(n)}(\boldsymbol{x})]| \\
&\geq |G_I(\boldsymbol{x}) - E[\frac{1}{N(n)}\sum\limits_{k=1}^{N(n)} \mathbb{1}_{\left\lbrace \boldsymbol{^{\ell}{Y}}_I^{(k:N(n))}\leq \boldsymbol{x}\right\rbrace }]| \\
&=|G_I(\boldsymbol{x}) - G_I^{*N(n)}(\boldsymbol{x})|>\varepsilon
\end{align*}
This is a contradiction because ${^{\ell}G}_I^{N(n)}(\boldsymbol{x})\xrightarrow{n \to \infty} G_I(\boldsymbol{x})$ P-a.s. by (\ref{eq:conv1arbenz}) and so by dominated convergence $E[|G_I(\boldsymbol{x})-{^{\ell}G}_I^{N(n)}(\boldsymbol{x})|]\xrightarrow{n \to \infty} 0$.
\end{proof}
\subsection{Convergence towards the tree dependent distribution}
\label{sec:convergence2}
In the previous section we showed that in case of discrete marginals the MRA at least has the same convergence properties as the original algorithm. We will now go a step further and claim that in case of discrete marginals the MRA yields approximations of the unique tree dependent random vector.
\begin{figure}[h]
\centering
\includegraphics[width=1\linewidth]{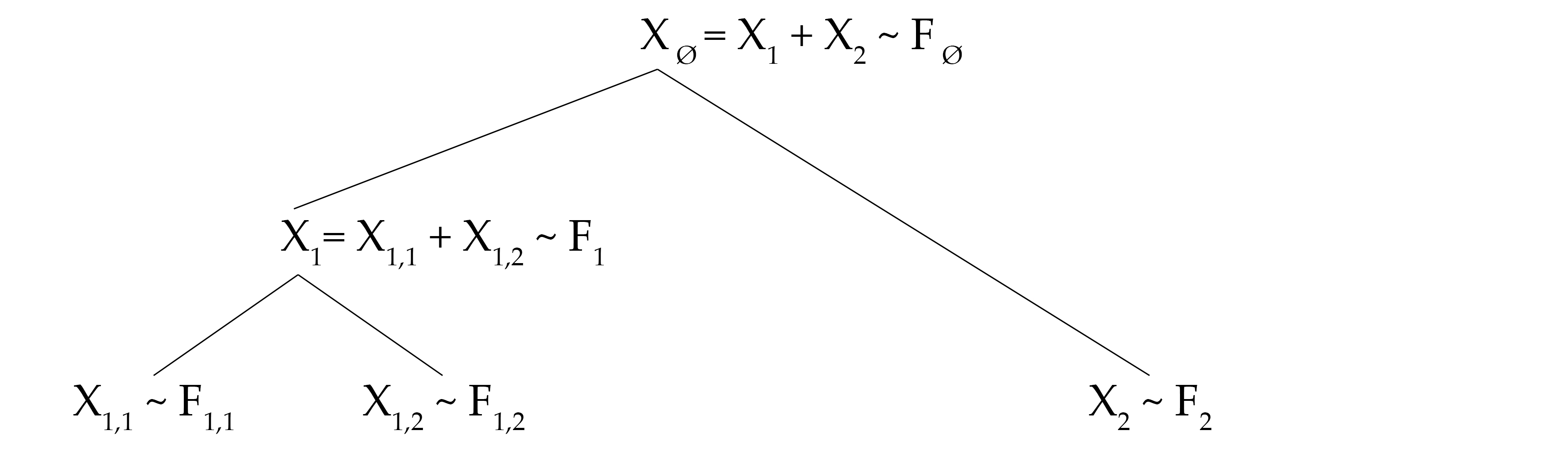}
\caption{An illustration of the aggregation tree model (\ref{eq:con2_model}).}
\label{fig:3dimtree1}
\end{figure}

In order to keep the proof of this claim as comprehensible as possible we decided to restrict ourselves to one of the simplest tree structures and refer to the Appendix \ref{appendix} where we discuss an extension to more general aggregation trees. More precisely, we will consider the aggregation tree model
\begin{align} \label{eq:con2_model}
\left( \tau,(F_I)_{I\in \mathscr{L}(\tau)},(C)_{I \in \mathscr{B}(\tau)}\right),
\end{align}
with $\tau = \left\lbrace \varnothing, ~(1), ~(1,1), ~(1,2), ~(2) \right\rbrace$; three univariate distribution functions $F_{1,1}$, $F_{1,2}$, $F_{2}$ of \textit{discrete} random variables ; and two bivariate copulas $C_1$ and $C_{\varnothing}$ satisfying the assumptions of Theorem \ref{the:convergencearbenz}. Figure \ref{fig:3dimtree1} illustrates the aggregation tree model (\ref{eq:con2_model}) for which we will prove the following theorem:
\begin{theorem}
\label{the:conv2}
Consider the aggregation tree model defined in (\ref{eq:con2_model}). Let $\boldsymbol{X}_{\varnothing}=(X_{1,1},X_{1,2},X_2)$ be the unique tree dependant random vector associated with this model and denote the distribution function of $\boldsymbol{X}_{\varnothing}$ by $F_{\tau}$. \\
Then, the MRA yields realizations of i.i.d. reordered random vectors which all satisfy the conditional independence assumption (\ref{def:CIA}) and additionally converge in distribution towards the unique tree dependent random vector $\boldsymbol{X}_{\varnothing}$. In particular, if $F^n_{\tau}$ denotes the empirical distribution functions of these reordered random vectors, then it holds that
\begin{align}
\label{eq:theorem_conv2}
\lim\limits_{n \rightarrow \infty} \sup\limits_{\boldsymbol{x} \in \mathbb{R}^3} \left| F_{\tau}^n(\boldsymbol{x}) - F_{\tau}(\boldsymbol{x})\right|  = 0 \quad P-a.s.
\end{align}
\end{theorem}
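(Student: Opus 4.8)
The plan is to prove the three assertions of Theorem~\ref{the:conv2} in turn: that the reordered vectors $\boldsymbol X_\varnothing^1,\dots,\boldsymbol X_\varnothing^n$ returned by the MRA are i.i.d., that their common law $F_\tau^{*n}$ satisfies the conditional independence assumption~(\ref{def:CIA}) for \emph{every} fixed $n$, and that $F_\tau^{*n}\to F_\tau$ weakly. The i.i.d.\ property is nothing new: it is forced by the diagonalization in step~3 of the MRA (compare~(\ref{eq:diagonalize})), exactly as in the proof of Theorem~\ref{the:convergence1}. Granting the conditional independence assumption and the weak convergence, the uniform a.s.\ estimate~(\ref{eq:theorem_conv2}) will follow by the argument already used in the third auxiliary result (a multivariate Glivenko--Cantelli theorem together with the discreteness upgrade carried out in the second auxiliary result). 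So the work is in the remaining two points.

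For the conditional independence assumption I would unwind the MRA on the tree~(\ref{eq:con2_model}). The algorithm first processes node $(1)$, producing $n$ independent copies $\{({^\ell X}_1^k,{^\ell\boldsymbol X}_1^k)\}_{k=1}^n$, $\ell=1,\dots,n$, of i.i.d.\ reordered pairs with ${^\ell\boldsymbol X}_1^k=({^\ell X}_{1,1}^\bullet,{^\ell X}_{1,2}^\bullet)$ and ${^\ell X}_1^k$ equal to the sum of its two entries; these pairs have the common law $G_1^{*n}$, whose sum has law $F_1^{*n}$. It then processes the root, and by~(\ref{eq:scary1})--(\ref{eq:scary2}) the reordered triple at $\varnothing$ is, for each $\ell$,
\[ {_\ell\boldsymbol X}_\varnothing^k=\bigl({^\ell X}_{1,1}^\bullet,\ {^\ell X}_{1,2}^\bullet,\ {^\ell X}_2^{\pi_\ell(k)}\bigr),\qquad \pi_\ell(k)={^\ell q}_{\varnothing,2}^{-1}\bigl({^\ell p}_{\varnothing,2}({^\ell p}_{\varnothing,1}^{-1}({^\ell q}_{\varnothing,1}(k)))\bigr), \]
so that the first block $({^\ell X}_{1,1}^\bullet,{^\ell X}_{1,2}^\bullet)={^\ell\boldsymbol X}_1^k$ is carried over untouched (this is the point of ``Reordering~2'') while the attached value ${^\ell X}_2^{\pi_\ell(k)}$ depends on that block only through ${^\ell q}_{\varnothing,1}(k)$, i.e.\ only through the rank of the sum ${^\ell X}_1^k$. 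Passing to the diagonal $\boldsymbol X_\varnothing^k:={_k\boldsymbol X}_\varnothing^k$ and conditioning on ${^k\boldsymbol X}_1^k=(a,b)$: the remaining ${^k X}_1^j$, $j\neq k$, are i.i.d.\ $\sim F_1^{*n}$ and independent of $(a,b)$, so the conditional law of ${^k q}_{\varnothing,1}(k)$ depends on $(a,b)$ only through $a+b$; and given that rank, the law of $\pi_k(k)$, hence of ${^k X}_2^{\pi_k(k)}$, is governed only by the copula samples ${^k\boldsymbol U}_\varnothing$ and the leaf samples ${^k X}_2$, which are independent of everything at node $(1)$. Thus $P[{^k X}_2^{\pi_k(k)}\le z\mid{^k\boldsymbol X}_1^k]$ is $\sigma({^k X}_1^k)$-measurable, which is exactly $(X_{1,1},X_{1,2})\perp X_2\mid X_1$; since for the tree~(\ref{eq:con2_model}) the instance of~(\ref{def:CIA}) at the root is vacuous, $F_\tau^{*n}$ satisfies~(\ref{def:CIA}).

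For the weak convergence, recall from the proof of Theorem~\ref{the:convergence1} (and its use of the Cramér--Wold theorem) that $G_1^{*n}\to G_1$, $F_1^{*n}\to F_1$ and $G_\varnothing^{*n}\to G_\varnothing$ weakly, with $G_1=C_1(F_{1,1},F_{1,2})$ and $G_\varnothing=C_\varnothing(F_1,F_2)$; in particular the marginals of $F_\tau^{*n}$ converge, so $\{F_\tau^{*n}\}$ is tight. Let $H$ be any subsequential weak limit. The relations $X_1=X_{1,1}+X_{1,2}$ and $X_\varnothing=X_1+X_2$ hold a.s.\ for every $n$ and survive under $H$ by the continuous mapping theorem, and the $(X_{1,1},X_{1,2})$- and $(X_1,X_2)$-marginals of $H$ are $G_1$ and $G_\varnothing$; hence $H$ is mildly tree dependent. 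Furthermore, since all coordinates stay in fixed discrete sets ($X_{1,1}\in\operatorname{supp}F_{1,1}$, $X_{1,2}\in\operatorname{supp}F_{1,2}$, $X_2\in\operatorname{supp}F_2$, $X_1\in\operatorname{supp}F_{1,1}+\operatorname{supp}F_{1,2}$), the condition~(\ref{def:CIA}) can be rewritten, for this tree, as the division-free identity
\[ P[X_{1,1}=a,\,X_{1,2}=b,\,X_2=c]\;P[X_1=a+b]=P[X_{1,1}=a,\,X_{1,2}=b]\;P[X_1=a+b,\,X_2=c] \]
for all admissible $a,b,c$, and every point mass occurring here converges along the subsequence by the first auxiliary result; hence $H$ also satisfies~(\ref{def:CIA}). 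By uniqueness of the tree dependent distribution (\cite{arbenz12}) we get $H=F_\tau$, whence $F_\tau^{*n}\to F_\tau$ weakly, and the multivariate form of the third auxiliary result then upgrades this to~(\ref{eq:theorem_conv2}).

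The main obstacle is verifying~(\ref{def:CIA}) at finite $n$: the chosen reordering leaves the \emph{atoms} unchanged (as noted after Example~\ref{ex:altReordering}), so conditional independence is invisible at that level, and one must carefully track the deliberately cumbersome composition of permutations in~(\ref{eq:scary1})--(\ref{eq:scary2}) and exploit the independence of the $n$ copies to conclude that, given the sum $X_1$, the attached value $X_2$ carries no further information about the split $(X_{1,1},X_{1,2})$. A second, less laborious but essential point is that conditional independence is not stable under weak limits in general; it is only the discreteness hypothesis, via the first auxiliary result, that turns~(\ref{def:CIA}) into a continuous relation among point masses and so lets it pass to the limit.
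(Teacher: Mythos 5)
Your proposal is correct, and its first half coincides with the paper's own argument: the i.i.d.\ property comes from the diagonalization, and the finite-$n$ verification of (\ref{def:CIA}) rests on exactly the same structural fact the paper isolates in its ``Preliminaries'' --- namely that the attached value $X_2^{(1:n)}$ can be written as a random function of $X_1^{(1:n)}$ built entirely from variables independent of the node-$(1)$ data, so that conditional expectations factor; you phrase this as ``the attachment sees $(X_{1,1},X_{1,2})$ only through the rank of its sum,'' which is the same observation. Where you genuinely diverge is the convergence step. The paper computes $P[X_{1,1}<x_{1,1},X_{1,2}<x_{1,2},X_2<x_2]$ directly: it disintegrates over the law of $X_1$, applies the Portmanteau theorem and a fourth auxiliary result on convergence of conditional expectations of discrete vectors, and must then separately justify interchanging the iterated limit $\lim_m\lim_n\int_E h_n\,d\mu_m$ with the diagonal limit (the ``Claim'' with its own $\epsilon$-argument). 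You instead run a tightness--subsequence argument: every subsequential weak limit is mildly tree dependent because the relevant marginals converge (Theorem \ref{the:convergence1}), and --- since discreteness turns (\ref{def:CIA}) into a division-free identity among point masses, each of which converges by the first auxiliary result --- every subsequential limit also satisfies (\ref{def:CIA}); uniqueness of the tree dependent distribution then identifies the limit as $F_{\tau}$. Your route dispenses with the double-limit interchange and the fourth auxiliary result altogether, at the price of invoking the existence-and-uniqueness theorem of Arbenz et al.\ explicitly and of using multivariate versions of the first and third auxiliary results --- extensions the paper itself relies on implicitly. Your closing observation that discreteness is precisely what makes conditional independence closed under weak limits pinpoints the role of that hypothesis more transparently than the paper's computation does.
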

\subsubsection{Preliminaries}
\label{par:preliminaries}
From now on we will think of all the samples appearing in the MRA \ref{the:MRA} not as samples anymore but as random variables. It should be clear that in practice the MRA then yields realizations (samples) of these random variables. Our aim will be to gain a better understanding of the expressions (\ref{eq:scary1}) and (\ref{eq:scary2}) in the 2\textsuperscript{nd} step of the MRA.
\\ \\
We stress first that the ranks of a set of i.i.d. random variables are as well random variables. Let for instance $Z^1,\ldots,Z^n$ be a sequence of i.i.d. random variables. Recall that the rank $R^k$ of $Z^k$ within the set $\left\lbrace Z^j \right\rbrace _{j=1}^n$ is defined as:
\begin{align}
\label{eq:conv2_rank}
R_k := \sum\limits_{j=1}^{n}\mathbb{1}_{\left\lbrace Z^j \leq Z^k \right\rbrace }.
\end{align}
Expression (\ref{eq:conv2_rank}) is clearly a random variable. Thus, (\ref{eq:scary1}) and (\ref{eq:scary2}) must as well be random variables.
\\ \\
In the present compact form used in the MRA it seems difficult to talk about properties of the expression (\ref{eq:scary2}). We therefore propose a different representation of (\ref{eq:scary2}). The following short example illustrates the idea.
\begin{example}
Let $Z^1, Z^2$ i.i.d. and denote by $R^k$ the rank of $Z^k$. Define the permutation $p(k)=R^k$, $k=1,2$. Let further $V^1, V^2$ i.i.d. be some random variables. Assume the random variables $X^1, X^2$ are defined as follows:
\begin{align} \label{eq:conv2_compact}
X^1:=V^{p(1)}, \hspace{30 pt} X^2:=V^{p(2)}.
\end{align}
Observe that we could alternatively write
\begin{align}
\begin{split} \label{eq:conv2_noncompact}
X^1 = V^1 \mathbb{1}_{\left\lbrace Z^1 < Z^2\right\rbrace } + V^2\mathbb{1}_{\left\lbrace Z^1 \geq Z^2\right\rbrace },\\
X^2 = V^2 \mathbb{1}_{\left\lbrace Z^1 \leq Z^2\right\rbrace } + V^1\mathbb{1}_{\left\lbrace Z^1 > Z^2\right\rbrace }.
\end{split}
\end{align}
\end{example}
Obviously, (\ref{eq:conv2_compact}) is a compact version of (\ref{eq:conv2_noncompact}). Representation (\ref{eq:conv2_noncompact}) has the advantage that it seems easier to analyse its distributional properties. Analogously to (\ref{eq:conv2_noncompact}), we will now give an extended representation of
\begin{align*}
\left( ^{1}\boldsymbol{X}_{1}^1,{^{1}\boldsymbol{X}}_{2}^{^{1}q_{\varnothing,2}^{-1}(^1p_{\varnothing,2}(^1p_{\varnothing,1}^{-1}(^{1}q_{\varnothing,1}(1))))}\right)
\end{align*}
for $n=2$. This is expression (\ref{eq:scary2}) in our aggregation tree model (\ref{eq:con2_model}) with $\ell=1$, $k=1$ and $n=2$. We focus on the second component only:
\begin{align*}
 {^{1}\boldsymbol{X}}_{2}^{^{1}q_{\varnothing,2}^{-1}(^1p_{\varnothing,2}(^1p_{\varnothing,1}^{-1}(^{1}q_{\varnothing,1}(1))))} &= {^1X}_2^1 \mathbb{1}_{\left\lbrace {^1X}_1^1 \geq {^1X}_1^2\right\rbrace \cap \left\lbrace {^1U}_{\varnothing}^{1,1} \geq {^1U}_{\varnothing}^{2,1}\right\rbrace \cap \left\lbrace {^1U}_{\varnothing}^{1,2} \geq {^1U}_{\varnothing}^{2,2}\right\rbrace \cap \left\lbrace {^1X}_2^1 \geq {^1X}_2^2\right\rbrace} \\
 &+{^1X}_2^2\mathbb{1}_{\left\lbrace {^1X}_1^1 \geq {^1X}_1^2\right\rbrace \cap \left\lbrace {^1U}_{\varnothing}^{1,1} \geq {^1U}_{\varnothing}^{2,1}\right\rbrace \cap \left\lbrace {^1U}_{\varnothing}^{1,2} \geq {^1U}_{\varnothing}^{2,2}\right\rbrace \cap \left\lbrace {^1X}_2^1 < {^1X}_2^2\right\rbrace} \\
 &\hspace{142 pt} \vdots \\
 &+{^1X}_2^1\mathbb{1}_{\left\lbrace {^1X}_1^1 < {^1X}_1^2\right\rbrace \cap \left\lbrace {^1U}_{\varnothing}^{1,1} < {^1U}_{\varnothing}^{2,1}\right\rbrace \cap \left\lbrace {^1U}_{\varnothing}^{1,2} < {^1U}_{\varnothing}^{2,2}\right\rbrace \cap \left\lbrace {^1X}_2^1 < {^1X}_2^2\right\rbrace}.
\end{align*}
Note that in the above sum appear $2^4=16$ different indicator functions, and so the sum consists of 16 summands. In case of $n=3$, the sum would already comprise $(3!)^4=1296$ different summands.\\ 
For the proof of Theorem \ref{the:conv2} it is important to note that - no matter how large we choose $n$ - we can always write the above sum as a random function $f({^1X}_1^1)$ in ${^1X}_1^1$. For $n=2$ we have, for instance,
\begin{align*}
 f(x) &= {^1X}_2^1 \mathbb{1}_{\left\lbrace x \geq {^1X}_1^2\right\rbrace \cap \left\lbrace {^1U}_{\varnothing}^{1,1} \geq {^1U}_{\varnothing}^{2,1}\right\rbrace \cap \left\lbrace {^1U}_{\varnothing}^{1,2} \geq {^1U}_{\varnothing}^{2,2}\right\rbrace \cap \left\lbrace {^1X}_2^1 \geq {^1X}_2^2\right\rbrace} \\
 &+{^1X}_2^2\mathbb{1}_{\left\lbrace x \geq {^1X}_1^2\right\rbrace \cap \left\lbrace {^1U}_{\varnothing}^{1,1} \geq {^1U}_{\varnothing}^{2,1}\right\rbrace \cap \left\lbrace {^1U}_{\varnothing}^{1,2} \geq {^1U}_{\varnothing}^{2,2}\right\rbrace \cap \left\lbrace {^1X}_2^1 < {^1X}_2^2\right\rbrace} \\
 &\hspace{142 pt} \vdots \\
 &+{^1X}_2^1\mathbb{1}_{\left\lbrace x < {^1X}_1^2\right\rbrace \cap \left\lbrace {^1U}_{\varnothing}^{1,1} < {^1U}_{\varnothing}^{2,1}\right\rbrace \cap \left\lbrace {^1U}_{\varnothing}^{1,2} < {^1U}_{\varnothing}^{2,2}\right\rbrace \cap \left\lbrace {^1X}_2^1 < {^1X}_2^2\right\rbrace}.
\end{align*}
The random function $f(x)$ consists of the random variables ${^1X}_2^1$, ${^1X}_2^2$, ${^1X}_1^2$, ${^1U}_{\varnothing}^{1,1}$, ${^1U}_{\varnothing}^{2,1}$, ${^1U}_{\varnothing}^{1,2}$, ${^1U}_{\varnothing}^{2,2}$, ${^1X}_2^1$, ${^1X}_2^2$ which are all \textit{independent} of ${^1X}_1^1$, and hence $f(x)$ is \textit{independent} of ${^1X}_1^1$.

\paragraph{Auxiliary result 4}
\label{par:conv2_aux4}
Based on the 1\textsuperscript{st} auxiliary result \ref{par:conv1_aux1} we state the following extension: Let the discrete random vectors $(X_n,Y_n)\xrightarrow{n \to \infty}(X,Y)$ converge in distribution and assume again that there exists a discrete set $E \subset \mathbb{R}^2$ which contains the ranges ran$(X_n,Y_n)$ and ran$(X,Y)$ . Fix $(x,y) \in \mathbb{R}\times \text{ran}Y$, then
\begin{align*}
\lim\limits_{n \to \infty}E[\mathbb{1}_{\left\lbrace X_n \leq x \right\rbrace } | Y_n=y ]=E[\mathbb{1}_{\left\lbrace X \leq x \right\rbrace } | Y=y ].
\end{align*}
Note that for the proof we can proceed analogously as in the proof of the 1\textsuperscript{st} auxiliary result \ref{par:conv1_aux1}. We do not want to dwell on this any further and instead proceed with the proof of our main Theorem \ref{the:conv2}.
\begin{proof}[Theorem \ref{the:conv2}]

Due to the notational complexity we strive for a clear and logical organization. We strongly suggest to grasp the "big picture" first before getting lost in technical details.\\
Denote by $\boldsymbol{X}_{\varnothing}^{(1:n)}=\left( X_{1,1}^{(1:n)},X_{1,2}^{(1:n)},X_{2}^{(1:n)}\right) ,\ldots,\boldsymbol{X}_{\varnothing}^{(n:n)}=\left( X_{1,1}^{(n:n)},X_{1,2}^{(n:n)},X_{2}^{(n:n)}\right)$ the reordered random vectors (\ref{eq:algArbenz2}) obtained in step 3 of the MRA. More precisely, $\boldsymbol{X}_{\varnothing}^{(1:n)},\ldots, \boldsymbol{X}_{\varnothing}^{(n:n)}$ are given by
\begin{align*}
&\left( X_{1,1}^{(1:n)},X_{1,2}^{(1:n)},X_{2}^{(1:n)}\right):=\boldsymbol{X}_{\varnothing}^1=\left( ^{1}\boldsymbol{X}_{1}^1,{^{1}{X}}_{2}^{^{1}q_{\varnothing,2}^{-1}(^1p_{\varnothing,2}(^1p_{\varnothing,1}^{-1}(^{1}q_{\varnothing,1}(1))))}\right), \\
&\hspace{115 pt} \vdots\\
&\left( X_{1,1}^{(n:n)},X_{1,2}^{(n:n)},X_{2}^{(n:n)}\right):=\boldsymbol{X}_{\varnothing}^n=\left( ^{n}\boldsymbol{X}_{1}^n,{^{n}{X}}_{2}^{^{n}q_{\varnothing,2}^{-1}(^np_{\varnothing,2}(^np_{\varnothing,1}^{-1}(^{n}q_{\varnothing,1}(n))))}\right).
\end{align*}
Our overall goal is to show that $\boldsymbol{X}_{\varnothing}^{(1:n)}\xrightarrow{n \to \infty} \boldsymbol{X}_{\varnothing}$ in distribution. By the 3\textsuperscript{rd} auxiliary result \ref{par:conv1_aux3} we can then immediately conclude that the empirical distribution function $F_{\tau}^n$ of the i.i.d. random vectors $\boldsymbol{X}_{\varnothing}^{(1:n)},\ldots, \boldsymbol{X}_{\varnothing}^{(n:n)}$ converges towards $F_{\tau}$ in the sense of (\ref{eq:theorem_conv2}).
\paragraph{Step 1} We show that  $\boldsymbol{X}_{\varnothing}^{(1:n)},\ldots, \boldsymbol{X}_{\varnothing}^{(n:n)}$ satisfy the conditional independence assumption (\ref{def:CIA}). Since $\boldsymbol{X}_{\varnothing}^{(1:n)},\ldots, \boldsymbol{X}_{\varnothing}^{(n:n)}$ are i.i.d. it is enough to show that (\ref{def:CIA}) holds for $\boldsymbol{X}_{\varnothing}^{(1:n)}$. For this purpose we introduce the random variable $X_{1}^{(1:n)}=X_{1,1}^{(1:n)}+X_{1,2}^{(1:n)}$. Note that for $\boldsymbol{X}_{\varnothing}^{(1:n)}$ the conditional independence assumption then reads:
\begin{align}
\label{eq:conv2_condind}
(X_{1,1}^{(1:n)},X_{1,2}^{(1:n)},X_{1}^{(1:n)})\perp (X_{2}^{(1:n)},X_{1}^{(1:n)}+X_{2}^{(1:n)})\mid X_{1}^{(1:n)}.
\end{align}
The discussion in the Preliminaries \ref{par:preliminaries} is crucial for the proof of (\ref{eq:conv2_condind}). We argued that $X_{2}^{(1:n)}$ can be written as a function in $X_{2}^{(1:n)}$ by $X_{2}^{(1:n)} = f(X_{1}^{(1:n)})$, where $f(x)$ as a function in $x$ is a random function consisting of random variables that are all \textit{independent} of $X_{1}^{(1:n)}$. \\ \\
Let $g: \mathbb{R}^3\rightarrow \mathbb{R}$ and $h: \mathbb{R}^2\rightarrow \mathbb{R}$ be two measurable and bounded functions.
\begin{align*}
&E[g(X_{1,1}^{(1:n)},X_{1,2}^{(1:n)},X_{1}^{(1:n)})h(X_{2}^{(1:n)},X_{1}^{(1:n)}+X_{2}^{(1:n)})\mid X_{1}^{(1:n)}]\\
&= E[g(X_{1,1}^{(1:n)},X_{1,2}^{(1:n)},X_{1}^{(1:n)})h(f(X_{1}^{(1:n)}),X_{1}^{(1:n)}+f(X_{1}^{(1:n)}))\mid X_{1}^{(1:n)}] \\
&= E[g(X_{1,1}^{(1:n)},X_{1,2}^{(1:n)},x)h(f(x),x+f(x))\mid X_{1}^{(1:n)}]\mid_{x=X_{1}^{(1:n)}}  
\end{align*}
Using that $h(f(x),x+f(x))$ is a function of random variables that are all independent of $X_{1,1}^{(1:n)},X_{1,2}^{(1:n)}$ and $X_{1}^{(1:n)}$, we proceed by
\begin{align*}
&= E[h(f(x),x+f(x))]\mid_{x=X_{1}^{(1:n)}}E[g(X_{1,1}^{(1:n)},X_{1,2}^{(1:n)},x)\mid X_{1}^{(1:n)}]\mid_{x=X_{1}^{(1:n)}} \\
&= E[h(f(X_{1}^{(1:n)}),X_{1}^{(1:n)}+f(X_{1}^{(1:n)}))\mid X_{1}^{(1:n)}]E[g(X_{1,1}^{(1:n)},X_{1,2}^{(1:n)},X_{1}^{(1:n)})\mid X_{1}^{(1:n)}] \\
&=E[h(X_{2}^{(1:n)},X_{1}^{(1:n)}+X_{2}^{(1:n)})\mid X_{1}^{(1:n)}]E[g(X_{1,1}^{(1:n)},X_{1,2}^{(1:n)},X_{1}^{(1:n)})\mid X_{1}^{(1:n)}].
\end{align*} 
This proves (\ref{eq:conv2_condind}) and hence the first step.

\paragraph{Step 2} In the second and last step we prove that the reordered random vector $\boldsymbol{X}_{\varnothing}^{(1:n)}=\left( X_{1,1}^{(1:n)},X_{1,2}^{(1:n)},X_{2}^{(1:n)}\right)$ converges in distribution towards the unique tree dependent random vector $\boldsymbol{X}_{\varnothing}=(X_{1,1},X_{1,2},X_{2})$. \\ \\
Recall that we know from Theorem \ref{the:convergence1} that $\boldsymbol{X}_{\varnothing}^{(1:n)}$ approximates the distributions $G_I$ and $F_I$, $I \in \mathscr{B}$. More precisely, with $X_1 = X_{1,1}+X_{1,2}$ it holds that 
\begin{align}
\label{eq:conv2_arbenz1}
&(X_{1,1}^{(1:n)},X_{1,2}^{(1:n)},X_{1}^{(1:n)}) \xrightarrow{n \to \infty} (X_{1,1},X_{1,2},X_{1}), \\
\label{eq:conv2_arbenz2}
&(X_{2}^{(1:n)},X_{1}^{(1:n)}) \xrightarrow{n \to \infty} (X_{2},X_{1}) \quad \text{in distribution}.
\end{align}
Let $E=\left\lbrace \text{ran}X_{1,1},\text{ran}X_{1,2},\text{ran}X_{2} \right\rbrace \subset \mathbb{R}$ be the discrete set comprising the ranges of $X_{1,1}$, $X_{1,2}$ and $X_{2}$. Note that $\left\lbrace \text{ran}X_{1,1}^{(1:n)},\text{ran}X_{1,2}^{(1:n)},\text{ran}X_{2}^{(1:n)} \right\rbrace \subset E$, for all $n \in \mathbb{N}$. Then 
\begin{align*}
&X_{1} : (\varOmega, \mathscr{A}, \mathbb{P}) \rightarrow (E, \mathcal{P(E)}), \\
&X_{1}^{(1:n)} : (\varOmega, \mathscr{A}, \mathbb{P}) \rightarrow (E, \mathcal{P(E)}),
\end{align*}
where $\mathcal{P(E)}$ is the power set of $E$. Denote by $\mu=\mathbb{P}\circ X_1^{-1}$ and $\mu_{n}=\mathbb{P}\circ \left( {X_{1}^{(1:n)}}\right) ^{-1}$ the push-forward measures of $\mathbb{P}$ by $X_1$, respectively $X_{1}^{(1:n)}$. By (\ref{eq:conv2_arbenz1}), the sequence of measures $\mu_n$ on $(E, \mathcal{P(E)})$ converges weakly towards the measure $\mu$.
\\ \\
Keeping this in mind, we are now ready to prove the main claim. Let $(x_{1,1},x_{1,2},x_2)\in\mathbb{R}^3$. 
\begin{align*}
&P[X_{1,1}<x_{1,1},X_{1,2}<x_{1,2},X_{2}<x_{2}]=E[\mathbb{1}_{\left\lbrace X_{1,1}<x_{1,1}\right\rbrace}\mathbb{1}_{\left\lbrace X_{1,2}<x_{1,2}\right\rbrace}\mathbb{1}_{\left\lbrace X_{2}<x_{2}\right\rbrace}] \\
&=E\left[ E[\mathbb{1}_{\left\lbrace X_{1,1}<x_{1,1}\right\rbrace}\mathbb{1}_{\left\lbrace X_{1,2}<x_{1,2}\right\rbrace}\mathbb{1}_{\left\lbrace X_{2}<x_{2}\right\rbrace}\mid X_{1}]\right] 
\end{align*}
Using that $\boldsymbol{X}_{\varnothing}$ satisfies the conditional independence assumption (\ref{def:CIA}), we get
\begin{align*}
&=E\left[ E[\mathbb{1}_{\left\lbrace X_{1,1}<x_{1,1}\right\rbrace}\mathbb{1}_{\left\lbrace X_{1,2}<x_{1,2}\right\rbrace}\mid X_{1}] E[\mathbb{1}_{\left\lbrace X_{2}<x_{2}\right\rbrace}\mid X_{1}]\right] \\
&=\int_{E} \underbrace{E[\mathbb{1}_{\left\lbrace X_{1,1}<x_{1,1}\right\rbrace}\mathbb{1}_{\left\lbrace X_{1,2}<x_{1,2}\right\rbrace}\mid X_{1}=x] E[\mathbb{1}_{\left\lbrace X_{2}<x_{2}\right\rbrace}\mid X_{1}=x]}_{\text{{\normalsize  $=:h(x)$}}}d\mu(x) 
\end{align*}
Note that $h(x):=E[\mathbb{1}_{\left\lbrace X_{1,1}<x_{1,1}\right\rbrace}\mathbb{1}_{\left\lbrace X_{1,2}<x_{1,2}\right\rbrace}\mid X_{1}=x] E[\mathbb{1}_{\left\lbrace X_{2}<x_{2}\right\rbrace}\mid X_{1}=x]$ is a function in $x$ mapping from $E$ to $\mathbb{R}$. In fact, $h:E \rightarrow \mathbb{R}$ as such is \textit{continuous and bounded}. So the Portmanteau-Theorem implies
\begin{align*}
=\lim\limits_{m \to \infty}\int_{E} E[\mathbb{1}_{\left\lbrace X_{1,1}<x_{1,1}\right\rbrace}\mathbb{1}_{\left\lbrace X_{1,2}<x_{1,2}\right\rbrace}\mid X_{1}=x] E[\mathbb{1}_{\left\lbrace X_{2}<x_{2}\right\rbrace}\mid X_{1}=x]d\mu_m(x)
\end{align*}
Let  $h_n(x):=E[\mathbb{1}_{\left\lbrace X_{1,1}^{(1:n)}<x_{1,1}\right\rbrace}\mathbb{1}_{\left\lbrace X_{1,2}^{(1:n)}<x_{1,2}\right\rbrace}\mid X_{1}^{(1:n)}=x] E[\mathbb{1}_{\left\lbrace X_{2}^{(1:n)}<x_{2}\right\rbrace}\mid X_{1}^{(1:n)}=\nolinebreak x]$. The functions $h_n:E \rightarrow \mathbb{R}$  are \textit{continuous and bounded} and the 4\textsuperscript{th} auxiliary result \ref{par:conv2_aux4} implies that $\lim\limits_{n \to \infty}h_n(x)=h(x)$. Therefore, by dominated convergence 
\begin{align*}
&=\lim\limits_{m \to \infty}\lim\limits_{n \to \infty}\int_{E} \underbrace{E[\mathbb{1}_{\left\lbrace X_{1,1}^{(1:n)}<x_{1,1}\right\rbrace}\mathbb{1}_{\left\lbrace X_{1,2}^{(1:n)}<x_{1,2}\right\rbrace}\mid X_{1}^{(1:n)}=x] E[\mathbb{1}_{\left\lbrace X_{2}^{(1:n)}<x_{2}\right\rbrace}\mid X_{1}^{(1:n)}=x]}_{\text{{\normalsize  $=:h_n(x)$}}}d\mu_m(x)
\end{align*}
We claim (proof below) that the iterated limes $\lim\limits_{m \to \infty}\lim\limits_{n \to \infty}\int_{E} h_n(x)d\mu_m(x)$ equals the double limes $\lim\limits_{(m,n) \to \infty}\int_{E} h_n(x)d\mu_m(x)$ and hence equals $\lim\limits_{n \to \infty}\int_{E} h_n(x)d\mu_n(x)$.
\begin{align*}
&=\lim\limits_{n \to \infty}\int_{E} E[\mathbb{1}_{\left\lbrace X_{1,1}^{(1:n)}<x_{1,1}\right\rbrace}\mathbb{1}_{\left\lbrace X_{1,2}^{(1:n)}<x_{1,2}\right\rbrace}\mid X_{1}^{(1:n)}=x] E[\mathbb{1}_{\left\lbrace X_{2}^{(1:n)}<x_{2}\right\rbrace}\mid X_{1}^{(1:n)}=x]d\mu_n(x) \\
&=\lim\limits_{n \to \infty}E\left[ E[\mathbb{1}_{\left\lbrace X_{1,1}^{(1:n)}<x_{1,1}\right\rbrace}\mathbb{1}_{\left\lbrace X_{1,2}^{(1:n)}<x_{1,2}\right\rbrace}\mid X_{1}^{(1:n)}] E[\mathbb{1}_{\left\lbrace X_{2}^{(1:n)}<x_{2}\right\rbrace}\mid X_{1}^{(1:n)}]\right]
\end{align*}
Note that according to the first step, $\boldsymbol{X}_{\varnothing}^{(1:n)}=\left( X_{1,1}^{(1:n)},X_{1,2}^{(1:n)},X_{2}^{(1:n)}\right)$ satisfies the conditional independence assumption (\ref{eq:conv2_condind}) and so
\begin{align*}
&=\lim\limits_{n \to \infty}E\left[ E[\mathbb{1}_{\left\lbrace X_{1,1}^{(1:n)}<x_{1,1}\right\rbrace}\mathbb{1}_{\left\lbrace X_{1,2}^{(1:n)}<x_{1,2}\right\rbrace}\mathbb{1}_{\left\lbrace X_{2}^{(1:n)}<x_{2}\right\rbrace}\mid X_{1}^{(1:n)}]\right] \\
&=\lim\limits_{n \to \infty}E[\mathbb{1}_{\left\lbrace X_{1,1}^{(1:n)}<x_{1,1}\right\rbrace}\mathbb{1}_{\left\lbrace X_{1,2}^{(1:n)}<x_{1,2}\right\rbrace}\mathbb{1}_{\left\lbrace X_{2}^{(1:n)}<x_{2}\right\rbrace}] \\
&=\lim\limits_{n \to \infty}P[X_{1,1}^{(1:n)}<x_{1,1},X_{1,2}^{(1:n)}<x_{1,2},X_{2}^{(1:n)}<x_{2}].
\end{align*}
This proves the second step and hence the whole Theorem \ref{the:conv2}. It only remains to show that the above claim holds true. Let us formulate the claim once again for the sake of clarity.
\paragraph{Claim} Let $h$ and $h_n$, $n \in \mathbb{N}$, be continuous and bounded functions ($\lVert h \rVert_{\infty}$, $\lVert h_n \rVert_{\infty} \leq 1$) mapping from a discrete set $E$ to $\mathbb{R}$ and suppose that the sequence $h_n$ converges pointwise towards $h$. Let $F$ and $F_m$, $m \in \mathbb{N}$, be distribution functions of discrete random variables on $E$ and denote the associated measures on $E$ by $\mu$, respectively $\mu_m$. Suppose $F_m$ converges uniformly towards $F$ as in (\ref{eq:conv1_uniform}). Then the following equality holds:
\begin{align*}
\lim\limits_{m \to \infty}\lim\limits_{n \to \infty}\int_{E} h_n(x)d\mu_m(x) = \lim\limits_{(m,n) \to \infty}\int_{E} h_n(x)d\mu_m(x).
\end{align*}
\begin{proof}
Let $L_{m,n}:=\int_{E} h_n(x)d\mu_m(x)$, $L:=\int_{E} h(x)d\mu(x)$ and define $L_{m,\infty}:=\lim\limits_{n \to \infty}L_{m,n}$, $L_{\infty,n}:=\lim\limits_{m \to \infty}L_{m,n}$. Note that dominated converges and the Portmanteau-Theorem imply that $L=\lim\limits_{m \to \infty}L_{m,\infty}=\lim\limits_{n \to \infty}L_{\infty,n}$.
\\ \\
We need to show that for any $\epsilon > 0$ there exists $K \in \mathbb{N}$ such that $\forall m,n > K$: $\left| L - L_{m,n} \right| \leq \epsilon$. 
\\ \\
Because $\lim\limits_{n \to \infty}L_{\infty,n}=L$ we can find $K_1 \in \mathbb{N}$ such that $\forall n > K_1$: $\left| L - L_{\infty,n} \right| \leq \frac{\epsilon}{2}$.
Since $F$ is the distribution function of a discrete random variable there exists a finite set $A\subset E$, of cardinality $\left|A \right| < \infty$,  such that $\mu(A) > 1- \delta$ where $\delta$ is defined as $\delta:=\frac{\epsilon}{4(\left|A \right|+1)}$. Together, the 1\textsuperscript{st} auxiliary result \ref{par:conv1_aux1} and the fact that $A$ is finite, imply that there exists $K_2 \in \mathbb{N}$ such that $\forall m \geq K_2$: $\left| \mu(x)  - \mu_m(x) \right| < \delta$, $\forall x \in A$. Hence, we can deduce that $\forall m \geq K_2$: $\mu_m(A^c) \leq \mu(A^c)+\left|A \right|\delta\leq \delta+\left|A \right|\delta$.

Therefore, we can conclude that for all $m > K_2$, $n \in \mathbb{N}$:
\begin{align*}
&\left|  L_{\infty,n} - L_{m,n} \right| = \left| \int_{E} h_n(x)d\mu(x) - \int_{E} h_n(x)d\mu_m(x) \right| \\
&\leq \left|\int_{A} \underbrace{h_n(x) }_{\text{$\lVert h_n \rVert_{\infty} \leq 1$}} d (\mu(x) -\mu_m(x)) \right| + \left|\int_{A^c} \underbrace{h_n(x) }_{\text{$\lVert h_n \rVert_{\infty} \leq 1$}} d (\mu(x) -\mu_m(x)) \right| \\
& \leq \sum_{x \in A} \left| \mu(x)  - \mu_m(x) \right| + \mu(A^c) + \mu_m(A^c)\\
&\leq \left|A \right|\delta + \delta + \delta + \left|A \right|\delta = \delta(2\left|A \right| + 2 ) = \frac{\epsilon}{2} 
\end{align*}
Set $K = \max \left\lbrace K_1, K_2\right\rbrace$ and note that then $\forall m,n > K$:
\begin{align*}
\left| L - L_{m,n} \right| \leq \left| L - L_{\infty,n} \right| + \left|  L_{\infty,n} - L_{m,n} \right| \leq \frac{\epsilon}{2}+\frac{\epsilon}{2} = \epsilon
\end{align*}
\end{proof}
\end{proof}

We have seen that under the assumption of discrete marginals the MRA yields approximations of the unique tree dependent random vector. \\ \\
The MRA is a modification of the original reordering algorithm proposed by Arbenz et al. \cite{arbenz12}. We discussed the modifications in the beginning of Section \ref{ex:altReordering}. Due to the complexity and length of the above proof it might be difficult to identify the part of it where the modifications play a key role. We stress therefore that in particular the proof of the first step critically depends on them. The second step then exploits the fact that the marginals are discrete.
\\ \\
It currently remains open to show that the results also hold for more general marginals. In addition, it would be desirable to obtain similar results for the original reordering algorithm. The MRA has the disadvantage that its algorithmic efficiency decreases significantly with the number of levels in the aggregation tree. This is due to the second modification, which requires the algorithm to call itself repeatedly on each level to generate independent samples and, thus, leads to an exponential running time (in the number of levels). The original reordering algorithm, in contrast, does not suffer from this issue and will, therefore, most likely be favoured for practical applications.

\section{A word on the conditional independence \mbox{assumption}}
\label{sec:CIA}
The conditional independence assumption (\ref{def:CIA}) specifies the unique tree dependent distribution for a given aggregation tree model. One should definitely wonder, however, if this assumption is reasonable in practice. In this section we will briefly address this question.
\\ \\
Testing for conditional independence requires a large amount of data. Unfortunately, meaningful historical data is extremely rare in practice, and so we consider the following logical reasoning instead:
\\ \\
Assume the true risk distribution is given by $(X,Y,Z) \sim \mathcal{N}(\mu,\varSigma)$ with
\begin{align} \label{eq:CIA_true_risk}
\mu=\left( \begin{array}{c}
0 \\
0 \\
0
\end{array}\right) , \hspace{30 pt}
\varSigma = \left( \begin{array}{ccc}
1 & 0.5 & 0 \\
0.5 & 1 & 0 \\
0 & 0 & 1
\end{array}\right).
\end{align}
We first consider the aggregation tree model showed on the left of Figure \ref{fig:CIAexample} and denote by $(X^{td_1},Y^{td_1},Z^{td_1})$ its tree dependent distribution. It is easy to show that for this model the tree dependent distribution is equivalent to the true risk distribution, i.e. $(X^{td_1},Y^{td_1},Z^{td_1})  \sim \mathcal{N}(\mu,\varSigma)$.
\begin{figure}[h]
\centering
\includegraphics[width=1\linewidth]{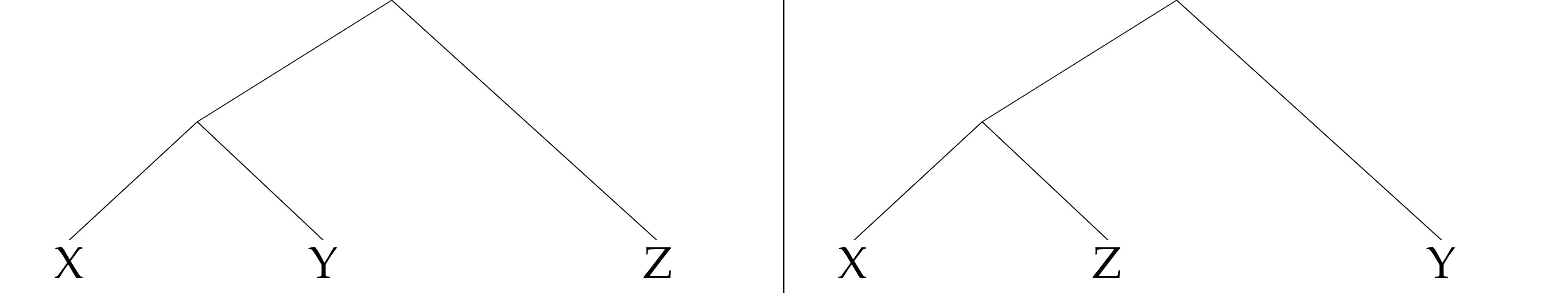}
\caption{Different ways of designing the tree structure.}
\label{fig:CIAexample}
\end{figure}

Consider now the aggregation tree model on the right of Figure \ref{fig:CIAexample}. Note that this model incorporates two bivariate normal copulas $C_{X,Z}$ and $C_{X+Z,Y}$ describing the dependence structure between $X$ and $Z$, respectively $X+Z$ and $Y$. By (\ref{eq:CIA_true_risk}), the copula $C_{X,Z}$ must be the independent copula, and the copula $C_{X+Z,Y}$ has correlation matrix
\begin{align*}
R_{X+Z,Y} = \left(  \begin{array}{cc}
1 & \frac{1}{2\sqrt{2}} \\
\frac{1}{2\sqrt{2}} & 1
\end{array}\right).
\end{align*}
Let $(X^{td_2},Y^{td_2},Z^{td_2})$ denote the tree dependent distribution associated with this model. A simple calculation yields that $(X^{td_2},Y^{td_2},Z^{td_2})\sim \mathcal{N}(\mu^{td_2},\varSigma^{td_2})$ with
\begin{align*}
\mu^{td_2}=\left( \begin{array}{c}
0 \\
0 \\
0
\end{array}\right) , \hspace{30 pt}
\varSigma^{td_2} = \left( \begin{array}{ccc}
1 & 0.25 & 0 \\
0.25 & 1 & 0.25 \\
0 & 0.25 & 1
\end{array}\right).
\end{align*}
Obviously, the distributions of $(X^{td_1},Y^{td_1},Z^{td_1})$ and $(X^{td_2},Y^{td_2},Z^{td_2})$ are not the same, and the example clearly shows that the tree dependent distribution is not invariant to a change of the tree structure.
\\ \\
In particular, this simple example illustrates that the conditional independence assumption should be treated with much caution: The tree dependent distribution critically depends on the tree structure. In practice, it will therefore hardly ever be an adequate approximation of the reality, unless the tree has exactly be designed in such a way that the data satisfies the conditional independence assumption. Easier said than done, considering that models with hundreds of individual risks are not unusual and meaningful data is often rare.

\chapter{The space of mildly tree dependent distributions}
\label{chap:spaceofmildly}
As we have mentioned already in the first chapters, an aggregation tree model (\ref{eq:triple}) does in general not uniquely specify the joint distribution of all risks. We called those distributions, which fit a given aggregation tree model, mildly tree dependent (recall Definition \ref{def:mildlytreedependent}).
\\ \\
Our lead Example \ref{ex:hierarchical} shall once more serve as motivation for this chapter. Assume that - similar as in the Concluding remark \ref{sec:concludingRemark} - an insurance company is also interested in the total risk they are taking in Switzerland, i.e. the risk $S=X_{1,1} + X_{2,1}$. As we know by now, the aggregation tree model does not fully specify this distribution. Since for instance the covariance Cov$(X_{1,1},X_{2,1})$ can vary depending on which mildly tree dependent distribution we look at, the variance $\text{Var}(S)=\text{Var}(X_{1,1}) +\text{Var}(X_{2,1}) + 2\text{Cov}(X_{1,1},X_{2,1})$  will vary as well. 
\begin{figure}[h]
\centering
\includegraphics[width=1\linewidth]{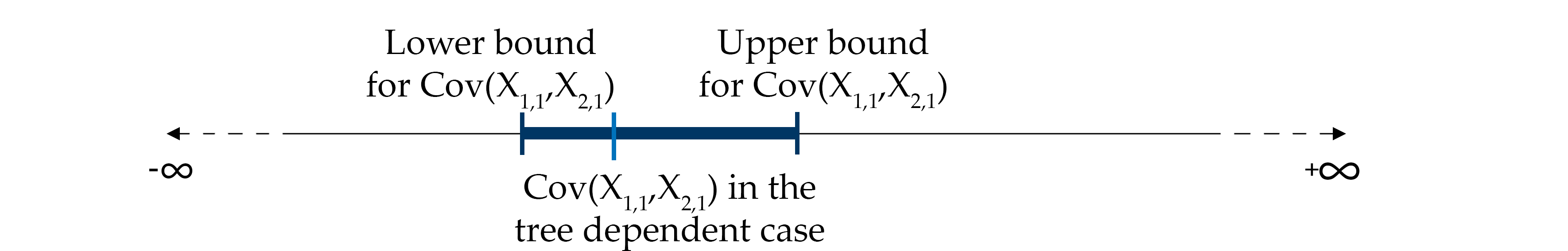}
\label{fig:toyexample}
\end{figure}

In the previous chapters we have seen that there is high evidence that the sample reordering algorithm yields approximations of the unique tree dependent distribution. However, as illustrated in Section \ref{sec:CIA}, it is questionable if the conditional independence assumption (\ref{def:CIA}), which specifies the tree dependent distribution,  is indeed a reasonable assumption in reality.
\\ \\
Therefore, it makes definitely sense to think about, for instance, how the range of values that Cov$(X_{1,1},X_{2,1})$ can take looks like. How large is it? Where does the tree dependent distribution lie in it? How do the different input parameters influence this range? 
\\ \\
In this chapter we will address these kind of questions. We will study simple trees for different sizes and shapes and we determine the space of mildly tree dependent distribution in various scenarios. These "toy-models" help us to understand the basic phenomena's at play while aggregating risks in this way. We believe that any insurer or reinsurer using such a tool should be aware of these systematic effects.

\section{The three-dimensional Gaussian tree}
In this first section we study extensively the three-dimensional Gaussian tree for which analytical results can be derived. The notation from the previous chapters is rather inconvenient for our purpose, and so we believe that a change of notation is justified. Consider the following model setup.

\subsubsection{Defining the model}
Our three-dimensional Gaussian tree incorporates three normally distributed random variables
\begin{align*}
X_1 \sim \mathcal{N}(0,\sigma_1^2), \hspace{20pt} X_2 \sim \mathcal{N}(0,\sigma_2^2), \hspace{20pt} X_3 \sim \mathcal{N}(0,\sigma_3^2),
\end{align*}
which represent the marginal risks. Note that w.l.o.g. we can assume that their means equal zero. Suppose further that we are given two bivariate normal copulas $C_{12}$ and $C_{\varnothing}$ with correlation matrices
\begin{align*}
R_{12} = \left(  \begin{array}{cc}
1 & \rho_{12} \\
\rho_{12} & 1
\end{array}\right) \qquad \text{and} \qquad 
R_{\varnothing} = \left(  \begin{array}{cc}
1 & \rho_{\varnothing} \\
\rho_{\varnothing} & 1
\end{array}\right).
\end{align*}
Here $C_{12}$ is the dependence structure that we impose between the risks $X_1$ and $X_2$, whereas $C_{\varnothing}$ is the dependence structure between the aggregated risk $X_1+X_2$ and $X_3$. The so defined aggregation tree model is illustrated in Figure \ref{fig:3dimgauss}.

\begin{figure}[h]
\centering
\includegraphics[width=0.8\linewidth]{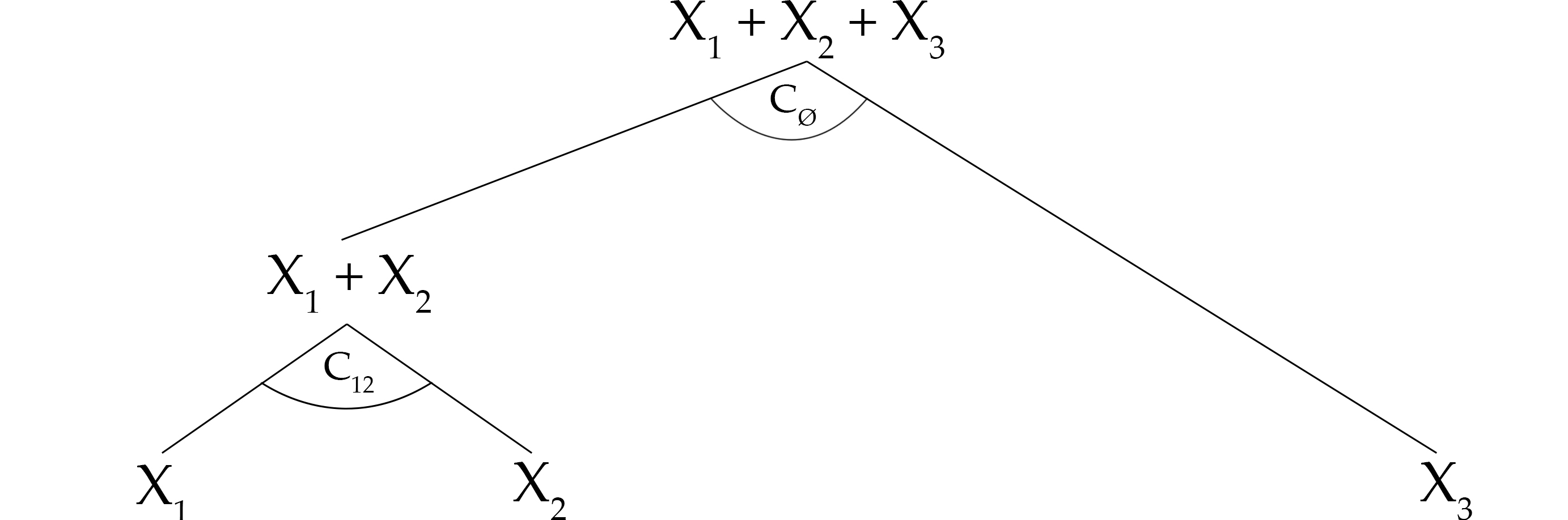}
\label{fig:3dimgauss}
\caption{Illustration of our aggregation tree model.}
\end{figure}

Note that with the above information the joint distribution of the random vector $(X_1,X_2,X_3)$ is not uniquely specified. If we further assume that the joint distribution is multivariate normal, it follows immediately that $(X_1,X_2,X_3) \sim \mathcal{N}(\mu,\varSigma)$ with
\begin{align*}
\mu=\left( \begin{array}{c}
0 \\
0 \\
0
\end{array}\right) , \hspace{30 pt}
\varSigma = \left( \begin{array}{ccc}
\sigma_1^2 & \rho_{12}\sigma_1\sigma_2 & \sigma_{13} \\
\rho_{12}\sigma_1\sigma_2 & \sigma_2^2 & \sigma_{23} \\
\sigma_{31} & \sigma_{32} & \sigma_3^2
\end{array}\right),
\end{align*}
where the covariances $\sigma_{13}=\sigma_{31}$ and $\sigma_{23}=\sigma_{32}$ enjoy a certain degree of freedom. We are, however, not completely free at choosing $\sigma_{13}$ and $\sigma_{23}$. A simple calculation yields that  $\sigma_{13} + \sigma_{23} = \rho_{\varnothing}\sqrt{\sigma_1^2+\sigma_2^2+2\rho_{12}\sigma_1\sigma_2}$ must hold in order to guarantee that the dependence structure between $X_1+X_2$ and $X_3$ is indeed given by $C_{\varnothing}$. A further constraint is that the resulting covariance matrix $\varSigma$ must be positive semidefinite.

\subsubsection{Determining the space of mildly tree dependent distributions}
Consider the above model where the parameters $\sigma_1$, $\sigma_2$, $\sigma_3$, $\rho_{12}$ and $\rho_{\varnothing}$ are given. Any multivariate normal distribution  $(X_1,X_2,X_3) \sim \mathcal{N}(\mu,\varSigma)$ which satisfies the above mentioned constraints
\begin{align}
\label{eq:constraint1}
&\sigma_{13} + \sigma_{23} = \rho_{\varnothing}\sqrt{\sigma_1^2+\sigma_2^2+2\rho_{12}\sigma_1\sigma_2}, \\
\label{eq:constraint2}
&\varSigma \qquad\text{pos. semidefinite},
\end{align}
is a mildly tree dependent distribution of our aggregation tree model. Our aim is to derive an analytic expression for the values the correlation coefficient $\rho_{13}:=\text{corr}(X_1,X_3)=\frac{\sigma_{13}}{\sigma_1\sigma_3}$ can take.
\\ \\
It turns out that the challenging part is to guarantee that the second constraint (\ref{eq:constraint2}) is satisfied. We exploit the fact that every positive semidefinite matrix $\varSigma$ has a Cholesky decomposition, i.e. there exists a lower triangular matrix $L$ such that $LL^{T}=\varSigma$:
\begin{align*}
&LL^T = \left( \begin{array}{ccc}
\ell_{11} & 0 & 0 \\
\ell_{21} & \ell_{22} & 0 \\
\ell_{31} & \ell_{32} & \ell_{33}
\end{array}\right)
\left( \begin{array}{ccc}
\ell_{11} & \ell_{21} & \ell_{31} \\
0 & \ell_{22} &\ell_{32} \\
0 & 0 & \ell_{33}
\end{array}\right) \\
&=
\left( \begin{array}{ccc}
\ell_{11}^2 & \ell_{11}\ell_{21} & \ell_{11}\ell_{31} \\
\ell_{11}\ell_{21} & \ell_{21}^2+\ell_{22}^2 & \ell_{21}\ell_{31}+\ell_{22}\ell_{32} \\
\ell_{11}\ell_{31} & \ell_{21}\ell_{31}+\ell_{22}\ell_{32} & \ell_{31}^2+\ell_{32}^2+\ell_{33}^2
\end{array}\right)
\stackrel{!}{=}\left( \begin{array}{ccc}
\sigma_1^2 & \rho_{12}\sigma_1\sigma_2 & \sigma_{13} \\
\rho_{12}\sigma_1\sigma_2 & \sigma_2^2 & \sigma_{23} \\
\sigma_{31} & \sigma_{32} & \sigma_3^2
\end{array}\right) = \varSigma.
\end{align*}
In this sense the problem translates into solving the following system of equations for the variables $\ell_{11}$, $\ell_{21}$, $\ell_{22}$, $\ell_{31}$, $\ell_{32}$, $\ell_{33}$:
\begin{align}
&\ell_{11}^2 = \sigma_1^2 \nonumber\\
&\ell_{11}\ell_{21} = \rho_{12}\sigma_1\sigma_2\nonumber\\
&\ell_{21}^2+\ell_{22}^2 = \sigma_2^2\nonumber\\
\label{eq:cha4_eq4}
&\ell_{31}^2+\ell_{32}^2+\ell_{33}^2 = \sigma_3^2\\
\label{eq:cha4_eq5}
&\ell_{11}\ell_{31} + \ell_{21}\ell_{31}+\ell_{22}\ell_{32} = \rho_{\varnothing}\sqrt{\sigma_1^2+\sigma_2^2+2\rho_{12}\sigma_1\sigma_2}
\end{align}
Note that the last equation (\ref{eq:cha4_eq5}) is the constraint (\ref{eq:constraint1}). The first three equations can be solved easily and we obtain $\ell_{11}=\pm \sigma_1$, $\ell_{21}=\pm \rho_{12}\sigma_2$ and $\ell_{22}=\pm \sqrt{\sigma_{2}^2(1-\rho_{12}^2)}$, where w.l.o.g. it is enough to consider the positive solutions only. \\
Next, we solve equation (\ref{eq:cha4_eq5}) for $\ell_{32}$:
\begin{align} \label{eq:chap4_ell32}
\ell_{32} = \frac{\rho_{\varnothing}\sqrt{\sigma_1^2+\sigma_2^2+2\rho_{12}\sigma_1\sigma_2}-\ell_{31}(\ell_{11}+\ell_{21})}{\ell_{22}} =: u - v \ell_{31},
\end{align}
where $u := \frac{\rho_{\varnothing}\sqrt{\sigma_1^2+\sigma_2^2+2\rho_{12}\sigma_1\sigma_2}}{\ell_{22}}$ and $v:= \frac{\ell_{11}+\ell_{21}}{\ell_{22}}$. We plug (\ref{eq:chap4_ell32}) into equation (\ref{eq:cha4_eq4}):
\begin{align*}
\ell_{31}^2+(u - v \ell_{31})^2+\ell_{33}^2 = \sigma_3^2,
\end{align*}
and after completing the square and a few simple algebraic operations we obtain the following equivalent equation:
\begin{align} \label{eq:chap4_ellipse}
\frac{(\ell_{31} - \frac{uv}{1+v^2})^2}{\frac{\sigma_3^2 - u^2}{1+v^2} + (\frac{uv}{1+v^2})^2} + \frac{\ell_{33}^2}{\sigma_3^2 - u^2 + \frac{(uv)^2}{1+v^2}} = 1.
\end{align}
Let $(x_0,y_0) := (\frac{uv}{1+v^2},0)$, $a:= \sqrt{\frac{\sigma_3^2 - u^2}{1+v^2} + (\frac{uv}{1+v^2})^2}$, $b:= \sqrt{\sigma_3^2 - u^2 + \frac{(uv)^2}{1+v^2}}$  and note that equation (\ref{eq:chap4_ellipse}) can then be written as
\begin{align}
\frac{(\ell_{31}-x_0)^2}{a^2} + \frac{(\ell_{33}-y_0)^2}{b^2} =  1,
\end{align}
which is the implicit equation of an ellipse with centre coordinates $(x_0,y_0)$ and axes $a$ and $b$.
\\ \\
We can conclude that $\ell_{31}$ can take any value in the interval $\left[ x_0 - a, x_0+a\right]$, and hence $\rho_{13}:=\frac{\sigma_{13}}{\sigma_1\sigma_3} = \frac{\ell_{11}\ell_{31}}{\sigma_1\sigma_3}$ can take any value in $\left[ \frac{\sigma_1(x_0 - a)}{\sigma_1\sigma_3}, \frac{\sigma_1(x_0 + a)}{\sigma_1\sigma_3}\right]$.
\\ \\
Comprehensive algebraic simplifications of the two endpoints $\rho_{13}^{min}:=\frac{\sigma_1(x_0 - a)}{\sigma_1\sigma_3}$ and $\rho_{13}^{max}:=\frac{\sigma_1(x_0 + a)}{\sigma_1\sigma_3}$ yield that in their simplest form they read as
\begin{align} \label{eq:chap4_minrho}
&\rho_{13}^{min} = \frac{\rho_{\varnothing}\rho_{12}\sigma_2 + \rho_{\varnothing}\sigma_1}{\sqrt{\sigma_1^2+\sigma_2^2+2\rho_{12}\sigma_1\sigma_2}} - \frac{\sqrt{\sigma_2^2(1-\rho_{12}^2)(1-\rho_{\varnothing}^2)}}{\sqrt{\sigma_1^2+\sigma_2^2+2\rho_{12}\sigma_1\sigma_2}}, \\
\label{eq:chap4_maxrho}
&\rho_{13}^{max} = \underbrace{\frac{\rho_{\varnothing}\rho_{12}\sigma_2 + \rho_{\varnothing}\sigma_1}{\sqrt{\sigma_1^2+\sigma_2^2+2\rho_{12}\sigma_1\sigma_2}}}_{=:\rho_{13}^{mid}} + \underbrace{\frac{\sqrt{\sigma_2^2(1-\rho_{12}^2)(1-\rho_{\varnothing}^2)}}{\sqrt{\sigma_1^2+\sigma_2^2+2\rho_{12}\sigma_1\sigma_2}}}_{=:\rho_{13}^{length}}.
\end{align}
In summary, we can say that for our three-dimensional Gaussian tree (Figure \ref{fig:3dimgauss}) the correlation coefficient $\rho_{13}:=\text{corr}(X_1,X_3)$ of a mildly tree dependent distribution can take any value in $\left[ \rho_{13}^{min},\rho_{13}^{max} \right]$. The midpoint of this interval is $\rho_{13}^{mid}$ and the length of the interval is given by $2\rho_{13}^{length}$.
\\ \\
Note that once we know the values $\rho_{13}$ (resp. $\sigma_{13}$) can take, it is easy to obtain analogous expressions for the values $\rho_{23}$ (resp. $\sigma_{23}$) can take, simply by exploiting the relation (\ref{eq:constraint1}). In this way we get a parametrization of all multivariate normal, mildly tree dependent distributions.
\\ \\
In the following examples we will further study and illustrate the expressions we have just derived for $\rho_{13}^{min}$ and $\rho_{13}^{max}$.

\subsubsection{Example 1}
In this first example we study the influence of the variances $\sigma_1^2$, $\sigma_2^2$ and $\sigma_3^2$ on the range $\left[ \rho_{13}^{min},\rho_{13}^{max} \right]$ of possible correlations between $X_1$ and $X_3$.

\paragraph{Case 1: $\boldsymbol{\sigma_1^2 \to \infty}$} We let the variance of $X_1$ go to infinity. It is to expect that since $X_1$ will dominate $X_2$, the influence of $X_2$ on the dependence structure between $X_1+X_2$ and $X_3$ will be negligible, i.e. $\text{corr}(X_1,X_3)\approx\text{corr}(X_1+X_2,X_3)=\rho_{\varnothing}$. It is easy to check that
\begin{align*}
\lim\limits_{\sigma_1^2 \to \infty}\rho_{13}^{min}=\lim\limits_{\sigma_1 \to \infty}\rho_{13}^{max}=\rho_{\varnothing}, 
\end{align*}
which confirms our intuition. Note that since the correlation coefficient associated with the unique tree dependent distribution (let us denote it by $\rho_{13}^{td}$) must lie somewhere inside the degenerated interval $\left[ \lim\limits_{\sigma_1^2 \to \infty}\rho_{13}^{min},\lim\limits_{\sigma_1^2 \to \infty}\rho_{13}^{max} \right] = \rho_{\varnothing}$, we can deduce that $\rho_{13}^{td}$ must be equal to $\rho_{\varnothing}$.

\paragraph{Case 2: $\boldsymbol{\sigma_2^2 \to \infty}$} We let the variance of $X_2$ go to infinity. In contrast to the previous case, the influence of $X_1$ on the dependence structure between $X_1+X_2$ and $X_3$ can be seen as negligible. Intuitively, we can expect that in this case the interval $\left[ \rho_{13}^{min},\rho_{13}^{max} \right]$ attains its maximum length. Our intuition is again confirmed by the analytical results:
\begin{align*}
&\lim\limits_{\sigma_2^2 \to \infty}\rho_{13}^{min}=\rho_{\varnothing}\rho_{12} - \sqrt{(1-\rho_{12}^2)(1-\rho_{\varnothing}^2)} \\
&\lim\limits_{\sigma_2^2 \to \infty}\rho_{13}^{max}=\rho_{\varnothing}\rho_{12} + \sqrt{(1-\rho_{12}^2)(1-\rho_{\varnothing}^2)}.
\end{align*}
Note that for fixed $\rho_{12}, \rho_{\varnothing}$ the maximum of $\rho_{13}^{length}$ is indeed $\sqrt{(1-\rho_{12}^2)(1-\rho_{\varnothing}^2)}$.

\paragraph{Case 3: $\boldsymbol{\sigma_3^2 \to \infty}$} We let the variance of $X_3$ go to infinity. From the expressions (\ref{eq:chap4_minrho}) \& (\ref{eq:chap4_maxrho}) we see immediately that $\rho_{13}^{min}$ and $\rho_{13}^{max}$ do not depend on $\sigma_3$.
\\ \\
To summarise, it can be noted that the range of values that $\rho_{13}^{max}$ can attain becomes larger if the variance of $X_1$ is small compared to the one of $X_2$. If the opposite is the case, then $\left[ \rho_{13}^{min},\rho_{13}^{max} \right]$ becomes smaller. 

\subsubsection{Example 2}
Let for the second example $\sigma_1\equiv\sigma_2\equiv\sigma_3\equiv1$ be fixed and equal to one. We plot the difference $\rho_{13}^{max}-\rho_{13}^{min}=2\rho_{13}^{length}$ between the upper and the lower bounds for $\rho_{13}$ in terms of $\rho_{12}$ and $\rho_{\varnothing}$.
\begin{figure}[h]
\centering
\includegraphics[width=1\linewidth]{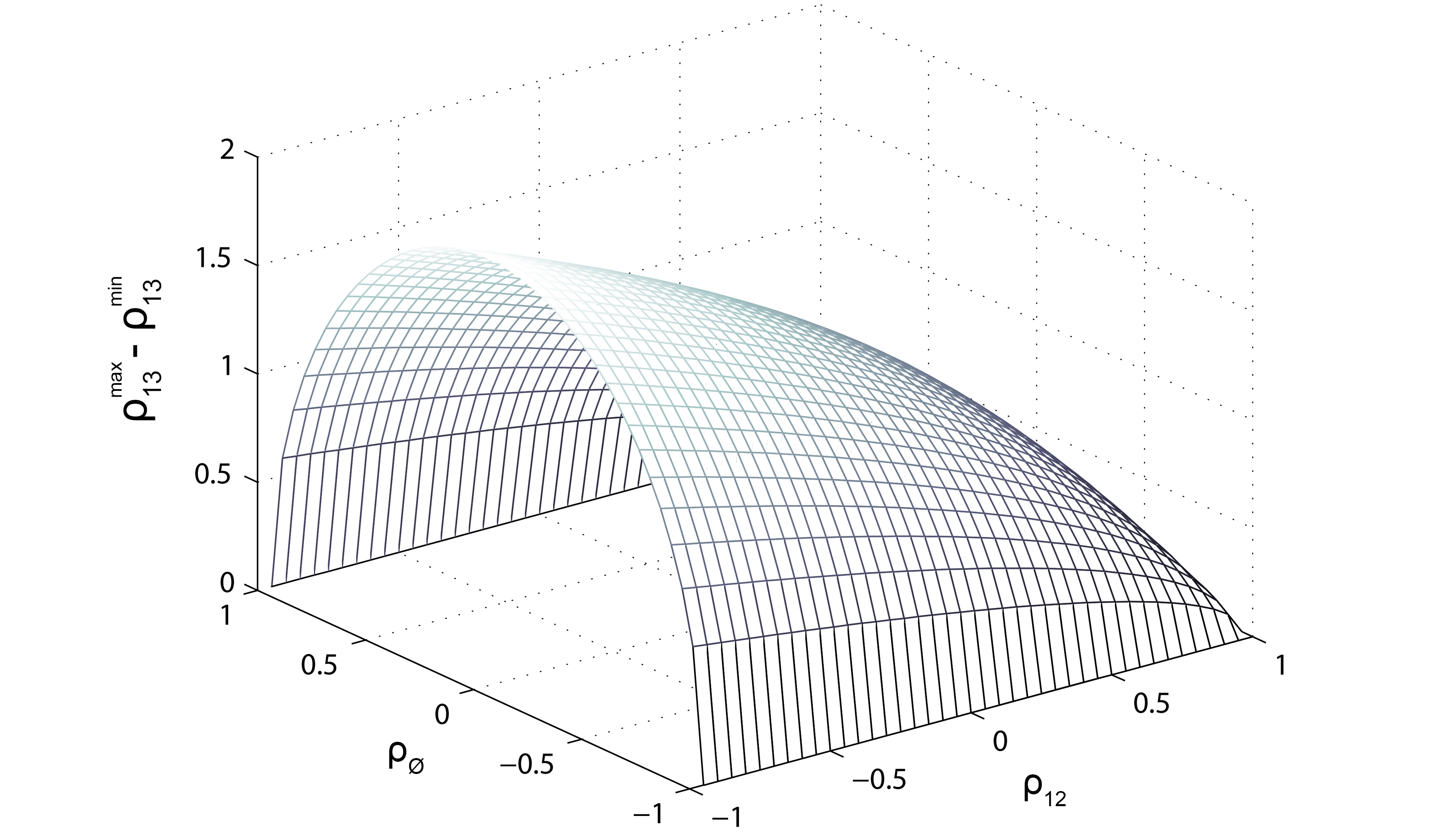}
\caption{Difference between the upper and the lower bounds for $\rho_{13}$ in Example 2.}
\label{fig:interval}
\end{figure}
\\ \\
The result is displayed in Figure \ref{fig:interval}. It seems that the space of mildly tree dependent distribution increases the more $X_1+X_2$ and $X_3$ become uncorrelated/independent ($\rho_{\varnothing} \to 0$). In the same time it also increases the more $X_1$ and $X_2$ become negatively correlated ($\rho_{12} \to -1$). The maximum is attained in $(\rho_{12},\rho_{\varnothing})=(-1,0)$ where it holds that $\rho_{13}^{max}-\rho_{13}^{min}=2$, i.e. $\rho_{13}$ can take any value in $\left[ \rho_{13}^{min},\rho_{13}^{max} \right]=\left[-1,1 \right] $.  We will further study this effect in the next example.

\subsubsection{Example 3}
We study the same scenario as in the previous example ($\sigma_1\equiv\sigma_2\equiv\sigma_3\equiv1$). This time, we plot the interval $\left[ \rho_{13}^{min},\rho_{13}^{max} \right]$ against $\rho_{\varnothing}$ for nine different values of $\rho_{12}$. The resulting plots are displayed in Figure \ref{fig:ninetimesnine}. Note that we choose to display also the values of the tree dependent correlation coefficient $\rho_{13}^{td}$.
\begin{figure}[h]
\centering
\includegraphics[width=0.9\linewidth]{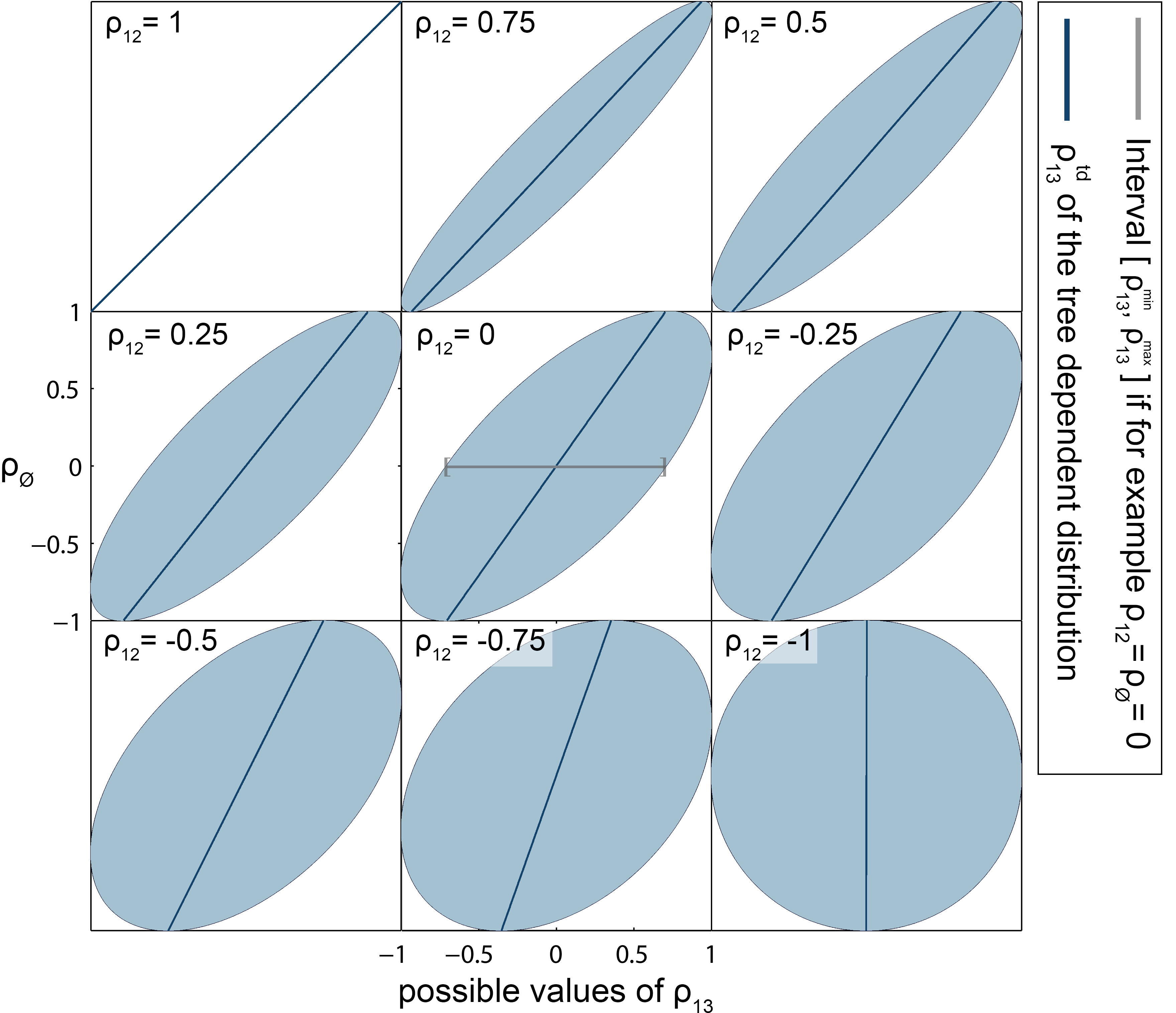}
\caption{Illustration of the interval $\left[ \rho_{13}^{min},\rho_{13}^{max} \right]$ in terms of $\rho_{12}$ and $\rho_{\varnothing}$ in Example 3.}
\label{fig:ninetimesnine}
\end{figure}
\\ \\
It is not surprising that we can observe the same effect as in the previous example. In addition, the illustration also reveals how position and length of the interval $\left[ \rho_{13}^{min},\rho_{13}^{max} \right]$ change in terms of $\rho_{12}$ and $\rho_{\varnothing}$.
\\ \\
In case of total correlation between $X_1$ and $X_2$ ($\rho_{12}=1$), the interval $\left[ \rho_{13}^{min},\rho_{13}^{max} \right]$ becomes degenerated and coincides with $\rho_{13}^{td}$ of the unique tree dependent distribution. In fact, $\left[ \rho_{13}^{min},\rho_{13}^{max} \right]$ seems to be symmetric around $\rho_{13}^{td}$, i.e. $\rho_{13}^{td}$ is equal to the midpoint $\rho_{13}^{mid}$ of the interval. A simple calculation confirms our assumption.
\\
We stress that this effect can not be observed in general. For a four-dimensional Gaussian tree the equality $\rho_{13}^{td}=\rho_{13}^{mid}$ would, for instance, not hold anymore, and so $\rho_{13}^{td}$ is closer either to the left or to the right endpoint of the interval $\left[ \rho_{13}^{min},\rho_{13}^{max} \right]$.

\subsubsection{Example 4}
The previous examples have shown that the space of mildly tree dependent distributions may be particularly large under some circumstances. In practice, we would probably prefer a model where the space of mildly tree dependent distributions is small. Because then the mildly tree dependent distributions are concentrated around the true joint risk distribution, and so each of them yields already a good approximation of said true risk distribution.
\\ \\
In this example we show how the choice of an appropriate aggregation tree model can considerably narrow the space of mildly tree dependent distributions. Let for this purpose the true risk distribution be given by $(X_1,X_2,X_3) \sim \mathcal{N}(\mu,\varSigma)$ with
\begin{align*}
\mu=\left( \begin{array}{c}
0 \\
0 \\
0
\end{array}\right) , \hspace{30 pt}
\varSigma = \left( \begin{array}{ccc}
1 & -1 & 1 \\
-1& 2 & -1 \\
1 & -1 & 1
\end{array}\right).
\end{align*}
Assume two insurers (Insurer 1 \& Insurer 2) wish to fit a three-dimensional Gaussian aggregation tree model. Their approach is, however, not exactly the same, as illustrated in Figure \ref{fig:insurers}.
\begin{figure}[h]
\centering
\includegraphics[width=1\linewidth]{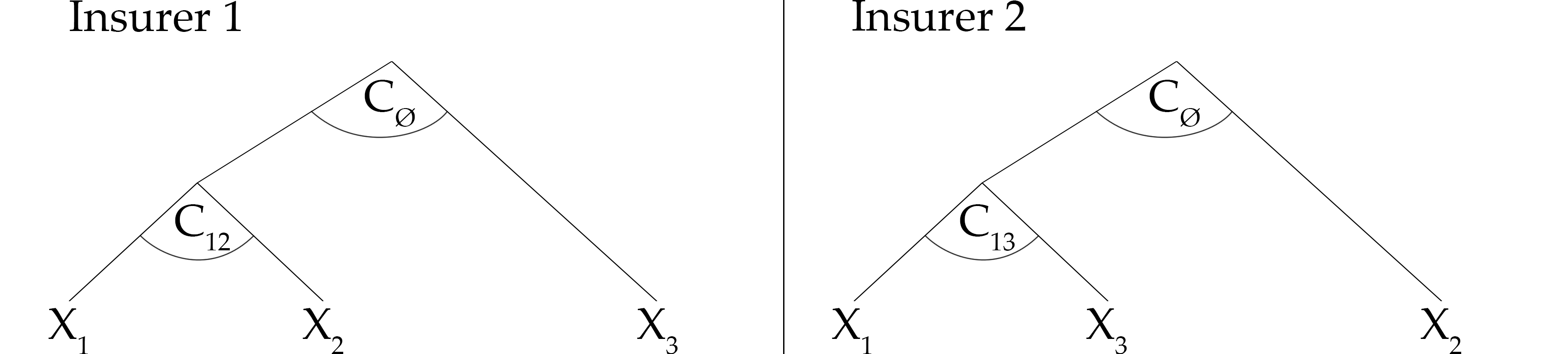}
\caption{Illustration of the aggregation tree models of Insurer 1 and Insurer 2.}
\label{fig:insurers}
\end{figure}

\paragraph{Insurer 1} Insurer 1 decides to fit an aggregation tree as illustrated on the left of Figure \ref{fig:insurers}. Hence he imposes a normal copula $C_{12}$ between $X_1$ and $X_2$ and another normal copula $C_{\varnothing}$ between $X_1+X_2$ and $X_3$. \\
From the information available (data \& expert opinion) he obtains the following exact estimation of the marginal risks: $X_1 \sim \mathcal{N}(0,1)$, $X_2 \sim \mathcal{N}(0,2)$ and $X_3 \sim \mathcal{N}(0,1)$. He further estimates that the normal copulas $C_{12}$ and $C_{\varnothing}$ have correlation matrices
\begin{align*}
R_{12} = \left(  \begin{array}{cc}
1 & -\frac{1}{\sqrt{2}} \\
-\frac{1}{\sqrt{2}} & 1
\end{array}\right) \qquad \text{and} \qquad 
R_{\varnothing} = \left(  \begin{array}{cc}
1 & 0 \\
0 & 1
\end{array}\right).
\end{align*}
With the formulas (\ref{eq:chap4_minrho}) \& (\ref{eq:chap4_maxrho}) one can easily check that for this aggregation tree model the space of normal, mildly tree dependent distributions becomes maximal. A parametrization of this space would for instance be given by $\boldsymbol{X}_a \sim \mathcal{N}(\mu_a,\varSigma_a)$ with
\begin{align*}
\mu_a\equiv\left( \begin{array}{c}
0 \\
0 \\
0
\end{array}\right) , \hspace{30 pt}
\varSigma_a = \left( \begin{array}{ccc}
1 & -1 & a \\
-1& 2 & -a \\
a & -a & 1
\end{array}\right), \quad a \in \left[ -1,1\right].
\end{align*}

\paragraph{Insurer 2} Insurer 2 decides to fit an aggregation tree as illustrated on the right of Figure \ref{fig:insurers}. Hence, he imposes a normal copula $C_{13}$ between $X_1$ and $X_3$ and another normal copula $C_{\varnothing}$ between $X_1+X_3$ and $X_2$. \\
From the information available (data \& expert opinion), he obtains the following exact estimation of the marginal risks: $X_1 \sim \mathcal{N}(0,1)$, $X_2 \sim \mathcal{N}(0,2)$ and $X_3 \sim \mathcal{N}(0,1)$. He further estimates that the normal copulas $C_{13}$ and $C_{\varnothing}$ have correlation matrices
\begin{align*}
R_{13} = \left(  \begin{array}{cc}
1 & 1 \\
1 & 1
\end{array}\right) \qquad \text{and} \qquad 
R_{\varnothing} = \left(  \begin{array}{cc}
1 & -\frac{1}{\sqrt{2}} \\
-\frac{1}{\sqrt{2}} & 1
\end{array}\right).
\end{align*}
In this case, the aggregation tree model fully specifies the joint distribution and the space of mildly tree dependent distributions is minimal. The only distribution contained in it is the true risk distribution $(X_1,X_2,X_3) \sim \mathcal{N}(\mu,\varSigma)$.
\\ \\
Note that both approaches are correct and would yield the same distribution for the aggregated risk $X_1+X_2+X_3$. The difference is that, although both insures need to estimate the same number of parameters, the model of "Insurer 2" fully specifies the joint distribution, whereas the model of "Insurer 1" does not.

In summary, we have shown in detail how the space of mildly tree dependent distribution is affected by the different input parameters of the aggregation tree model. The results of this section suggest that the aggregation tree should, whenever possible, be constructed in such a way that the risks in the sub-portfolios do have a strong positive correlation. By doing so, the space of mildly tree dependent can be considerably narrowed, as Example 4 has impressively illustrated.
\\ \\
The idea of grouping together highly correlated risks in sub-portfolios when setting up the aggregation tree is not new and has already been proposed by Côté \& Genest \cite{genest15} and Bruneton \cite{bruneton11}. Their arguments are, however, rather of qualitative and practical nature. To our knowledge, the insights gained in this section are for the first time providing also a profound mathematical justification for this approach.

\section{Generalization to non-Gaussian trees}
In this short section we discuss the space of mildly tree dependent distributions in more general settings. Subject of the first subsection is again the three-dimensional tree. In contrast to the previous section we do, however, not limit ourselves to Gaussian trees only. In the second subsection we study some properties of an eight-dimensional, symmetric aggregation tree.

\subsection{The general three-dimensional tree}
Consider the same aggregation tree structure as in Figure \ref{fig:3dimgauss}. Regarding the distributions of $X_1$, $X_2$ and $X_3$ the only assumption is that $\sigma_i^2=\text{Var}(X_i)< \infty$, for $i \in \left\lbrace 1,2,3\right\rbrace$. We further assume that the copulas $C_{12}$ and $C_{\varnothing}$ are such that $\text{corr}(X_1,X_2)=\rho_{12}$ and $\text{corr}(X_1+X_2,X_3)=\rho_{\varnothing}$.
\\ \\
Suppose we want to determine the set of covariance matrices 
\begin{align*}
\varSigma = \left( \begin{array}{ccc}
\sigma_1^2 & \rho_{12}\sigma_1\sigma_2 & \sigma_{13} \\
\rho_{12}\sigma_1\sigma_2 & \sigma_2^2 & \sigma_{23} \\
\sigma_{31} & \sigma_{32} & \sigma_3^2
\end{array}\right),
\end{align*}
which are compliant with our model (in the sense that there exists a mildly tree dependent distribution with covariance matrix $\varSigma$). Obviously, this is exactly the same problem as in the previous section, and hence it all comes down to determining those $\varSigma$ for which the constraints (\ref{eq:constraint1}) and (\ref{eq:constraint2}) are satisfied.
\\ \\
We can proceed as in the previous section to obtain the same expressions (\ref{eq:chap4_minrho}) and (\ref{eq:chap4_maxrho}) for the lower and upper bound for $\rho_{13}:=\text{corr}(X_1,X_3)$. Once we have this, it is straightforward to obtain a parametrization for the $\varSigma$'s.
\\ \\
We would like to emphasize that in the previous section, in fact, we never used the Normality of the tree. In particular, all the results and examples from the previous section are still true and remain the same, irrespective of whether the tree is Gaussian or not.

\subsection{An eight-dimensional, symmetric tree}
In this subsection we consider a perfectly symmetric tree involving eight identically distributed risks with mean 0 and variance 1, as illustrated in Figure \ref{fig:eightdim}. Note that this model incorporates seven bivariate copulas. We assume that each of these seven copulas is such that the correlation between the two risks linked through the respective copula is always equal to $\rho \in \left[  -1,1\right] $, for instance $\text{corr}(X_3,X_4)=\rho$, as well as $\text{corr}(S_5,S_6)=\rho$.
\begin{figure}[h]
\centering
\includegraphics[width=1\linewidth]{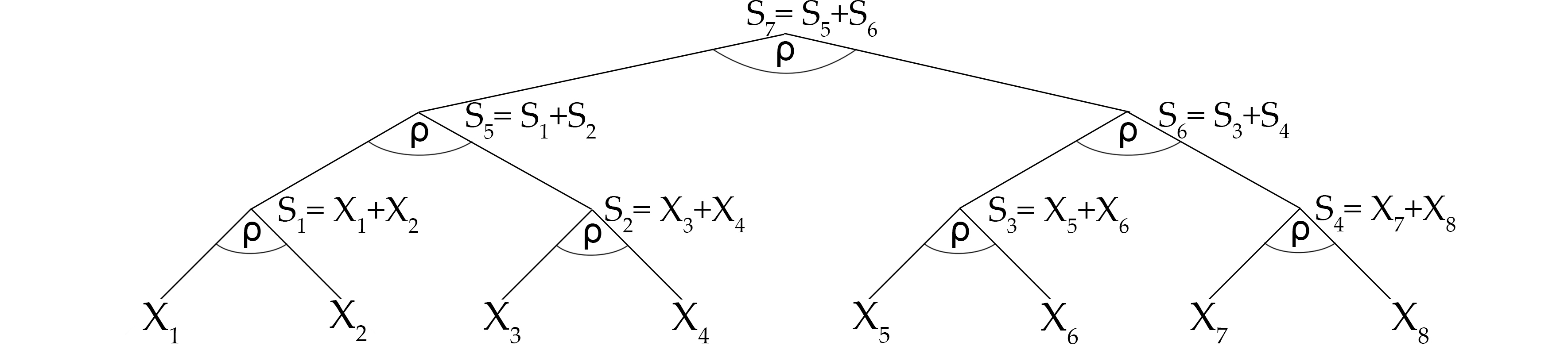}
\caption{Illustration of the eight-dimensional, symmetric tree.}
\label{fig:eightdim}
\end{figure}
\subsubsection{The tree dependent distribution}
This special setup allows us to derive an expression for the correlation between two individual risks in case the conditional independence assumption (\ref{def:CIA}) is satisfied (also see \cite{genest15}, Example 2).
\\ \\
Assume (\ref{def:CIA}) holds and $(X_1, \ldots,X_8)$ is the unique tree dependent distribution. We want to calculate $\rho_{13}^{td}:=\text{corr}(X_1,X_3)$. By (\ref{def:CIA}) and the tower property,
\begin{align*}
\text{corr}(X_1,X_3)&= E[X_1X_3]=E[E[X_1X_3\mid S_1,S_2]] \\
&=E[E[X_1\mid S_1,S_2]E[X_3\mid S_1,S_2]] \\
&=E[E[X_1\mid S_1]E[X_3\mid S_2]].
\end{align*}
Since $E[X_1\mid S_1]+E[X_2\mid S_1]=E[X_1+X_2\mid S_1]=X_1+X_2=S_1$ and $E[X_1\mid S_1]=E[X_2\mid S_1]$ by symmetry, we can deduce that $E[X_1\mid S_1]=\frac{S_1}{2}$. Analogously, it holds that $E[X_3\mid S_2]=\frac{S_2}{2}$. Hence,
\begin{align*}
\text{corr}(X_1,X_3)&= E[\frac{S_1}{2}\frac{S_2}{2}]=\frac{1}{4}E[S_1S_2]=\frac{1}{4}\rho\text{Var}(S_1)\text{Var}(S_2)=\frac{\rho(1+\rho)}{2}.
\end{align*}
Analogously, we get that $\rho_{18}^{td}:=\text{corr}(X_1,X_8)=\frac{\rho(1+\rho)^2}{4}$. In fact, for a perfectly symmetric tree involving $2^{\ell}$ identically distributed risks a simple induction argument yields that the correlation between parents of degree 
$k \in \left\lbrace 1,\ldots,2^{\ell} \right\rbrace$ is $\frac{\rho(1+\rho)^{k-1}}{2^{k-1}}$. For $\rho \in \left( -1,1\right)$ this directly implies that in the limit, the correlation between parents of degree $k$ tends to 0 as $k \to \infty$.

\subsubsection{The mildly tree dependent distributions}
Let now $(X_1, \ldots,X_8)$ be a mildly tree dependent distribution of our aggregation tree model. Our goal is to determine the lower and upper bounds, $\rho_{13}^{min}$ and $\rho_{13}^{max}$, for the correlation $\text{corr}(X_1,X_3)$, as well as the bounds, $\rho_{18}^{min}$ and $\rho_{18}^{max}$, for $\text{corr}(X_1,X_8)$. 
\\ \\
The problem reduces to determine the set of covariance matrices satisfying a couple of constraints similar to those in (\ref{eq:constraint1}) and (\ref{eq:constraint2}).
Again, the main difficulty is to guarantee that the covariance matrix of $(X_1, \ldots,X_8)$ is positive semidefinite. Unlike the three-dimensional case, analytical results can not be derived so easily.
\\ \\
Fortunately, the problem can be easily formulated as a semidefinite programming problem for which sufficiently exact approximations exist. For a good introduction to semidefinite programming see \cite{freund04}. We use the free MATLAB software SDPT3 from \cite{Toh99} to solve the semidefinite programming problem.
\begin{figure}[h]
\centering
\includegraphics[width=1\linewidth]{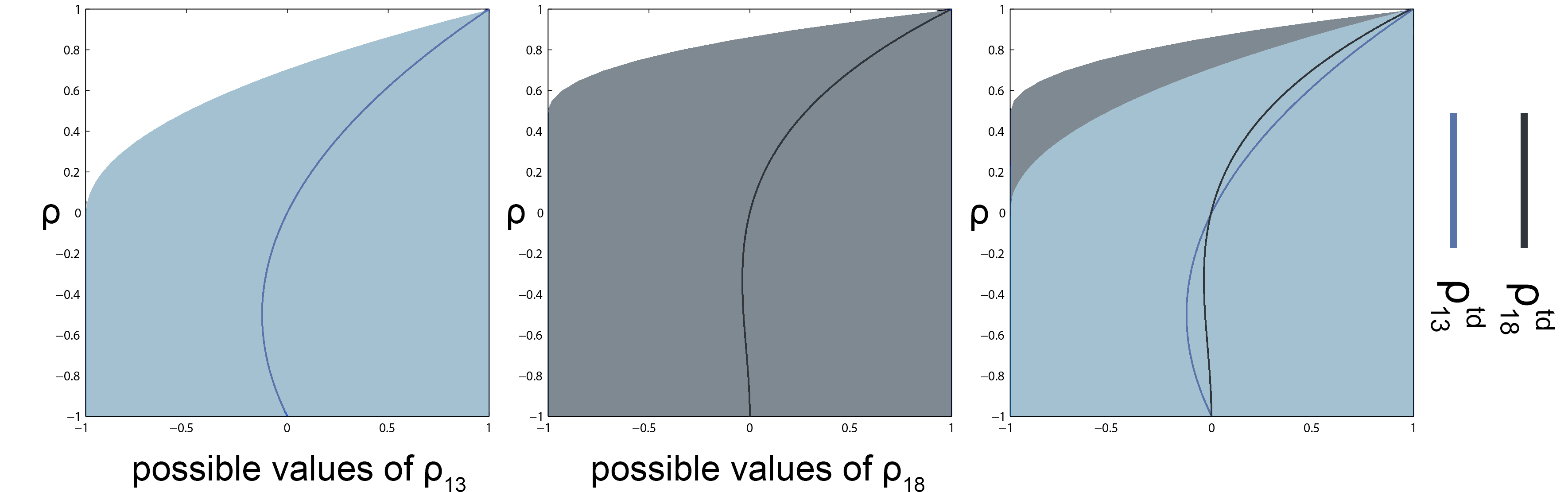}
\caption{Illustration of the results for the eight-dimensional, symmetric tree.}
\label{fig:8dim}
\end{figure}

In Figure \ref{fig:8dim} we display the results. The light blue plot on the left shows the interval $\left[ \rho_{13}^{min},\rho_{13}^{max}\right]$, as well as the value of $\rho_{13}^{td}$, in terms of $\rho \in \left[ -1,1\right]$. For $\rho < 0$, the correlation $\rho_{13}$ of a mildly tree dependent distributions could take any value in $[-1,1]$. For $\rho \geq 0$, the length of $\left[ \rho_{13}^{min},\rho_{13}^{max}\right]$ is decreasing in $\rho$. 
\\
A similar picture is found in the middle plot where we display the interval $\left[ \rho_{18}^{min},\rho_{18}^{max}\right]$, as well as the value of $\rho_{18}^{td}$, in terms of $\rho \in \left[ -1,1\right]$.
\\ \\
In the right plot we finally display the two previous plots together. It is remarkable that the left plot is completely included in the middle plot, i.e. for a fixed value of $\rho$, $\text{corr}(X_1,X_8)$ does enjoys more freedom than $\text{corr}(X_1,X_3)$. The result is not surprising: It should be intuitively clear that the dependence structure becomes less and less specified the further away two risks are laying from each other in the aggregation tree. Figure \ref{fig:8dim} just confirms our assumption.

\chapter{Conclusion}
\label{cha:6}
The present thesis provides a self-contained discussion of copula-based hierarchical risk aggregation. A strong focus is placed on the joint distributions associated with an aggregation tree model.
\\ \\
Chapter \ref{cha:2} and \ref{cha:3}  introduced the aggregation tree model as well as the sample reordering algorithm developed by Arbenz et al. \cite{arbenz12}. The aggregation tree model, consisting of a tree structure, copulas, and marginal distributions, offers a very flexible way to model risks. The method does not require to explicitly specify the dependence structure between all risks, but requires to do so only for different (sub-)aggregates and risk classes. As a consequence, it does not lead to a unique joint distribution of all risk. This fact is not critical as long as the ultimate goal of the user is to develop a model for the total aggregate. 
\\
In practice, however, some applications require the joint distribution of the individual risk. Against this background, we focused on two main questions.
\\ \\
The subject of the first question was the sample reordering algorithm. The original algorithm proposed by Arbenz et al. yields samples that approximate the total aggregate; yet hardly anything is known about the \textit{joint distribution} approximated by the samples. We have shown that there is numerical evidence that the samples may approximate the unique tree dependent distribution. This has encouraged us to develop a modified reordering algorithm (MRA). Under the additional constraint of discrete marginals the MRA yields an approximation of the unique tree dependent distribution. The proof mainly exploits the fact that the reordered samples satisfy the conditional independence assumption, which specifies the unique tree dependent distribution. It must be said, though, that the practical relevance of the MRA is questionable due to its poor algorithmic efficiency when the tree structure contains many aggregation levels. The question whether similar results can be proved for the original algorithm remains an interesting topic for future research.
\\ \\
It is currently not sure whether the conditional independence assumption is indeed a reasonable assumption in practice. The logical consideration of this question in Section \ref{sec:CIA} suggest that the assumption should at least be treated with much caution. In this respect, it is important to have a better understanding of the space of mildly tree dependent distributions. This was the subject of the second question. On the simple example of a three-dimensional Gaussian aggregation tree model we illustrated how the different parameters of the aggregation tree model affect the space of mildly tree dependent distributions. In particular, we showed that the space can be considerably narrowed by choosing a favourable aggregation order. The systematic effects observed in this and more general examples finally also allowed conclusions of the way the aggregation tree should be designed.

\appendix

\chapter{Appendix}
\label{appendix}
In Section \ref{sec:convergence2} we proved that when the MRA is applied to a 3-dimensional aggregation tree with discrete marginals, it yields samples that approximate the unique tree dependent distribution. \\
In this appendix we briefly discuss how the result can be generalized for an arbitrary aggregation tree. We stress that what follows is indeed a discussion rather than a complete and rigorous proof.
\\ \\
Let $\left( \tau,(F_I)_{I\in \mathscr{L}(\tau)},(C)_{I \in \mathscr{B}(\tau)}\right)$ be an arbitrary aggregation tree model with discrete marginals. We denote by $\boldsymbol{X}_{\varnothing}:=(X_{J})_{J \in \mathscr{LD}(\varnothing)}$ its unique tree dependent distribution and for $I \in \mathscr{B}$, $\boldsymbol{X}_{I}:=(X_J)_{J \in \mathscr{LD}(I)}$ denotes the restriction of $\boldsymbol{X}_{\varnothing}$ on the components $\mathscr{LD}(I)$.  We want to show that the MRA-reordered random vectors converge in distribution towards the unique tree dependent distribution associated the model. The idea is to use induction from the bottom to the top of the tree.
\\ \\
Fix hence $I \in \mathscr{B}$ and denote by $\boldsymbol{X}_{I}^{(1:n)},\ldots,\boldsymbol{X}_{I}^{(n:n)}$ the reordered random vectors (\ref{eq:algArbenz2}) obtained in step 3 of the MRA for the branching node $I$. The induction hypothesis implies that $\boldsymbol{X}_{I,i}^{(k:n)} \xrightarrow{n \to \infty} \boldsymbol{X}_{I,i}$ in distribution, for $k \in \left\lbrace 1,\ldots,n \right\rbrace$ and $i = 1,\ldots,N_I$.
\\ \\
Recall that we divided the proof of Theorem \ref{the:conv2} into two steps. The first step was to show that the $\boldsymbol{X}_{I}^{(1:n)},\ldots,\boldsymbol{X}_{I}^{(n:n)}$ satisfy the conditional independence assumption. We will assume for the moment that this is true. The second step can be easily generalized for an arbitrary aggregation tree model. The crucial point is to note that
\begin{align*}
P[X_J \leq x_J : J \in \mathscr{LD}(I)] &= E\left[ E[\mathbb{1}_{\left\lbrace X_J \leq x_J\right\rbrace }: J \in \mathscr{LD}(I)\mid X_{I,1},\ldots,X_{I,N_I}]\right]  \\
&= E\left[  \prod_{k=1}^{N_I} E[\mathbb{1}_{\left\lbrace X_J \leq x_J\right\rbrace }: J \in \mathscr{LD}(I,k)\mid X_{I,k}]\right], 
\end{align*}
due to the conditional independence assumption and the tower-property. Then we can proceed as in the second step and use the induction hypothesis to obtain the desired result.
\\ \\
The extremely tedious part is to show that $\boldsymbol{X}_{I}^{(1:n)},\ldots,\boldsymbol{X}_{I}^{(n:n)}$ indeed satisfy the conditional independence assumption. Recall that this was already notationally involved for the simple 3-dimensional tree. Although we are more than confident that the result can be generalized, it was not the subject of this thesis to provide a rigorous proof thereof.

\backmatter
\addtocontents{toc}{\protect\enlargethispage{\baselineskip}}
\bibliographystyle{plain}
\bibliography{refs}

\begin{thebibliography}{10}

\bibitem{arbenz12}
Philipp Arbenz, Christoph Hummel, and Georg Mainik.
\newblock {Copula based hierarchical risk aggregation through sample
  reordering}.
\newblock {\em Insurance: Mathematics and Economics}, 51:122--133, 2012.

\bibitem{bruneton11}
Jean-Philippe Bruneton.
\newblock {Copula-based Hierarchical Aggregation of Correlated Risks. The
  behaviour of the diversification benefit in Gaussian and Lognormal Trees.}
\newblock 2011.
\newblock \url{http://arxiv.org/abs/1111.1113}.

\bibitem{cherubini04}
Umberto Cherubini, Elisa Luciano, and Walter Vecchiato.
\newblock {\em Copula Methods in Finance}.
\newblock John Wiley \& Sons, Ltd, Chichester, 2004.

\bibitem{genest15}
Marie-Pier Côté and Christian Genest.
\newblock {A copula-based risk aggregation model}.
\newblock {\em The Canadian Journal of Statistics}, 43(1):1--25, 2015.

\bibitem{freund04}
Robert~M. Freund.
\newblock {Introduction to Semidefinite Programming (SDP)}, 2009.
\newblock MIT OpenCourseWare.
  \url{http://ocw.mit.edu/courses/electrical-engineering-and-computer-science/6-251j-introduction-to-mathematical-programming-fall-2009/readings/MIT6\_251JF09\_SDP.pdf}.

\bibitem{iman82}
Ronald Iman and William-Jay Conover.
\newblock {A distribution-free approach to inducing rank correlation among
  input variables}.
\newblock {\em Communications in Statistics – Simulation and Computation},
  11(3):311--334, 1982.

\bibitem{mainik11}
Georg Mainik.
\newblock {Sampling dependence: empirical copulas, Latin hypercubes, and
  convergence of sums}.
\newblock 2011.
\newblock RiskLab ETH Zürich.

\bibitem{embrechts05}
Alexander McNeil, Rudiger Frey, and Paul Embrechts.
\newblock {\em Quantitative Risk Management: Concepts, Techniques, Tools}.
\newblock Princeton University Press, Princeton, 2005.

\bibitem{Toh99}
Kim-Chuan Toh, Michael~J. Todd, and Reha~H. Tutuncu.
\newblock {SDPT3 -- a MATLAB software package for semidefinite programming}.
\newblock {\em Optimization Methods and Software}, 11:545--581, 1999.
\newblock \url{http://www.math.nus.edu.sg/~mattohkc/sdpt3.html}.

\bibitem{normtest07}
Antonio Trujillo-Ortiz.
\newblock Henze-zirkler's multivariate normality test.
\newblock 2007.
\newblock
  \url{http://www.mathworks.com/matlabcentral/fileexchange/17931-hzmvntest/content/HZmvntest/HZmvntest.m}.

\end{thebibliography}


\end{document}